\begin{document}
\title{Symbolic Analysis and Parameter Synthesis for Time
  Petri Nets Using Maude and SMT Solving\thanks{Supported by CNRS
    INS2I project ESPRiTS and
    PHC Aurora AESIR.}}

\titlerunning{Analyzing Parametric Time Petri Nets Using Maude with SMT}
\author{Jaime Arias\inst{1} \and
  Kyungmin Bae\inst{2}
\and
  Carlos Olarte\inst{1} \and
  Peter Csaba {\"O}lveczky\inst{3} \and
  Laure Petrucci\inst{1} \and
  Fredrik R{\o}mming\inst{4} }

\authorrunning{J. Arias et al.}
\institute{LIPN, CNRS UMR 7030, Université Sorbonne Paris Nord, France
\and
Pohang University of Science and Technology, South Korea \and
University of Oslo, Norway \and
University of Cambridge, UK}  \maketitle              \begin{abstract}
    Parametric time Petri nets with inhibitor arcs (PITPNs) support flexibility for
 timed systems
 by allowing
 parameters in firing bounds.
In this paper we present  and prove correct a concrete and a symbolic rewriting
logic semantics for PITPNs. We show how this allows us to use Maude combined with SMT solving to
provide sound and complete formal analyses for PITPNs.
We develop a new general folding
 approach for symbolic reachability that
terminates whenever the parametric state-class graph of the PITPN is finite.
We explain how almost all formal analysis and parameter synthesis
supported by the state-of-the-art PITPN tool 
\romeo{} can be done in Maude with SMT.  In addition,  we also support
 analysis and parameter 
synthesis from \emph{parametric} initial markings, as well as  full LTL model
checking and analysis  with 
user-defined execution strategies.
Experiments on three benchmarks show that our methods outperform
\romeo{} in many cases.

\keywords{parametric timed Petri nets \and  semantics \and
rewriting logic \and Maude \and
 SMT \and parameter synthesis \and symbolic reachability analysis}

 \end{abstract}

\section{Introduction}\label{sec:intro}

Time(d) Petri nets
\cite{Merlin74,DBLP:reference/crc/VernadatB07,DBLP:books/daglib/0023756}
have been extensively 
used to model real-time systems. In time Petri nets, firing conditions are given as time
intervals within which an enabled transition must fire.
However, in system design we often do not  know in advance the concrete
values of key system parameters, and want to find 
those  values  that make the system behave as desired. \emph{Parametric
  time Petri nets with inhibitor arcs}
(PITPNs)~\cite{paris-paper,DBLP:conf/formats/GrabiecTJLR10,EAGPLP13,DBLP:journals/fuin/LimeRS21}
extend time Petri
nets to the setting  where  bounds on
when transitions can fire are unknown or only partially
known. 

The
modeling and formal analysis of PITPNs---including synthesizing the
values of the parameters which make the system satisfy desired
properties---are supported by the state-of-the-art tool
\romeo~\cite{romeo}, which has been applied to a number of
applications,  including oscillatory biological
systems~\cite{romeo-biology}, aerial video tracking
systems~\cite{romeo-aerial}, and distributed software
commissioning~\cite{romeo-software}.
\romeo{}
supports the analysis and parameter synthesis for reachability (is a
certain marking reachable?), liveness (will a certain marking be
reached in all behaviors?), time-bounded ``until,'' and bounded
response (will each $P$-marking 
be followed by a $Q$-marking within time $\Delta$?), all from \emph{concrete}
initial markings. 
\romeo{} does  not support a number of desired features,
including:
\begin{itemize}
  \item Broader set of system properties, e.g.,  full (i.e.,
    nested) temporal logic.
\item  Start with \emph{parametric} initial
  markings and synthesize also the initial markings that make the
  system satisfy desired properties.
\item Analysis with user-defined execution strategies. For
      example, what  happens if I always choose to fire
      transition $t$ instead of $t'$ when they are both enabled at the
      same time? 
It is often possible to  \emph{manually
      change the model} to analyze the system under such scenarios, but
      this is arduous and error-prone. 
  \item Providing a ``testbed'' 
  for PITPNs in which different analysis methods and algorithms can quickly be
     developed, tested, and evaluated.  
     This is not well
    supported by \romeo, which is a high-performance tool with
    dedicated algorithms implemented in \texttt{C++}.
\end{itemize}

   PITPNs  do not support many features needed for large
   distributed systems, such as user-defined data types and functions,
   as in, e.g.,  colored Petri 
   nets~\cite{DBLP:books/daglib/0023756}.

   Rewriting
   logic~\cite{Mes92,20-years}---supported by the Maude language and
    tool~\cite{maude-book}, and by Real-Time
   Maude~\cite{tacas08,wrla14} for real-time systems---is an
   expressive logic for 
   distributed and real-time systems. In rewriting logic, any
   computable data type can be specified as an (algebraic)  equational
specification, and  the dynamic behaviors of a system are
specified by rewriting rules over terms (representing states). Because
of its expressiveness, Real-Time Maude has been successfully applied
to a number of large and sophisticated real-time systems---including 50-page
active networks and IETF protocols~\cite{aer-journ,sefm09},
state-of-the-art wireless sensor network algorithms involving areas,
angles, etc.~\cite{wsn-tcs}, scheduling algorithms with unbounded
queues~\cite{fase06}, airplane turning
algorithms~\cite{airplane-journ}, and so on---beyond the scope of
most popular formalisms for real-time systems. Its
expressiveness has also made Real-Time Maude a useful semantic framework
and formal analysis backend for (subsets of) industrial
modeling
languages~\cite{fmoods10,musab-l,ptolemy-journ,carolyn}. 

This expressiveness comes at a price: most analysis problems
are undecidable in general. Real-Time Maude uses
explicit-state analysis  where  only
\emph{some} points in time are visited.  All possible system
behaviors are therefore \emph{not} analyzed (for dense time domains),
and hence the  analysis is  unsound 
in many cases~\cite{wrla06}.

This paper exploits the recent integration of SMT solving into Maude
  to address the
first problem above (more features for PITPNs)  and to take the second step
towards addressing the second problem (developing sound and complete
analysis methods for rewriting-logic-based real-time systems).

Maude combined with SMT solving, e.g., as implemented in the Maude-SE
tool~\cite{maude-se}, allows us to perform \emph{symbolic rewriting}
of ``states'' $\phi\; ||\; t$, where the term $t$ is a state pattern
that
contains variables, and $\phi$ is an SMT constraint restricting
the possible values of those variables.

Section~\ref{sec:concrete}  provides a ``concrete'' rewriting
logic semantics for (instantiated) PITPNs
in ``Real-Time Maude style''~\cite{rtm-journ}. In a dense-time setting, 
such as for PITPNs, this model is not executable. 
Section~\ref{sec:concrete-ex} shows how we can do(in general unsound) time-sampling-based analysis where time
increases in discrete steps, of concrete nets, to quickly experiment
with different  values for the parameter. 

Section~\ref{sec:sym} gives a Maude-with-SMT semantics for 
\emph{parametric} PITPNs,  and shows how
to perform (sound) symbolic analysis of such nets using Maude with
SMT. However, existing symbolic reachability analysis methods,
including  ``folding'' of symbolic states, may fail to terminate even
when the state class graph of the PITPN is finite (and
hence \romeo{} analysis terminates). We therefore develop and
implement a new  method for ``folding'' symbolic states for
reachability analysis in Maude-with-SMT, and show that this new
reachability analysis method terminates whenever the state class graph
of the PITPN is finite.

In Sections~\ref{sec:sym} and \ref{sec:analysis} we show how a range of 
 formal analyses and parameter synthesis can be performed with
Maude-with-SMT, including unbounded and time-bounded reachability
analysis. We  show in Section~\ref{sec:analysis} how all analysis
methods supported by \romeo---with one small exception: 
the time bounds in some temporal formulas cannot be parameters---also
can be performed in Maude-with-SMT. In addition, we support state
properties on both markings and ``transition clocks,'' analysis and
parameter synthesis for \emph{parametric} initial markings, model
checking full (i.e., nested) temporal logic formulas, and
analysis w.r.t.\ user-defined execution strategies, as illustrated in
Section~\ref{sec:analysis}.
Our methods are formalized/implemented in Maude itself, using Maude's
meta-programming features. This makes it very easy to
develop new analysis methods for PITPNs.

This work also constitutes the second step in our quest to develop
sound and complete formal analysis methods for dense-time real-time
systems in Real-Time Maude. One  reason
for presenting both a ``standard'' Real-Time Maude-style concrete
semantics in Section~\ref{sec:concrete} \emph{and} the symbolic
semantics in Section~\ref{sec:sym} is to explore how we can transform
Real-Time Maude models into Maude-with-SMT models for symbolic
analysis.
In our first step in this quest, we studied symbolic rewrite methods
for the much simpler parametric timed automata (PTA)~\cite{ftscs22}.
In~\cite{ftscs22} we
specify a new rewrite theory for \emph{each} automaton, whereas in
this paper we specify a single rewrite theory (``interpreter'') for
\emph{all} PITPNs. Furthermore, no equations or user-defined functions
are needed for PTAs, in  contrast to the models in this
paper. Finally, known folding methods are sufficient for PTAs, whereas
we had to develop stronger  folding methods for PITPNs.  

In Section~\ref{sec:benchmarks} we benchmark both \romeo{} and
our Maude-with-SMT methods on three PITPNs. Somewhat surprisingly, in
many cases our high-level prototype outperforms \romeo{}. 
We also discovered that \romeo{} answered ``maybe'' in some cases
where Maude found solutions, and that \romeo{} sometimes failed to
synthesize parameters even when solutions existed.

All  executable Maude files with 
analysis commands, tools for translating \romeo{} files into Maude,  and data 
from the benchmarking are available at~\cite{pitpn2maude}.

\section{Preliminaries} \label{sec:prelim}
This section introduces
transition systems, bisimulation \cite{model-checking},
parametric time Petri nets with inhibitor arcs \cite{paris-paper},
rewriting logic \cite{Mes92}, rewriting modulo SMT \cite{rocha-rewsmtjlamp-2017}
and Maude/Maude-SE \cite{maude-manual,maude-se}.

A \emph{transition system} $\mathcal{A}$ is a triple $(A,
a_{0}, \rightarrow_{\mathcal{A}})$,  where $A$ is a set of
\emph{states}, $a_{0}\in A$ is the \emph{initial state}, and
$\rightarrow_{\mathcal{A}}\,\subseteq A \times A$ is a \emph{transition
  relation}.
We say that $\mathcal{A}$ is \emph{finite} if the set of states
reachable by $\rightarrow_{\mathcal{A}}$ from $a_0$ is finite. 
A  relation
$\sim \,\subseteq  A \times B$ is a \emph{bisimulation}~\cite{model-checking}
from $\mathcal{A}$ to $\mathcal{B} = (B, b_0, \rightarrow_{\mathcal{B}})$
iff: (i) $a_0 \sim b_0$; and (ii) for all $a,b$ s.t. $a\sim b$:
if 
$a \rightarrow_{\mathcal{A}} a'$ then there is  a $b'$ s.t. $b
\rightarrow_{\mathcal{B}} b'$ and $a' \sim b'$, and,  vice versa,  if 
$b \rightarrow_{\mathcal{B}} b''$,
then there is a $a''$ s.t. $a \rightarrow_{\mathcal{A}} a''$ and
$a'' \sim b''$.

\subsection{Parametric Time Petri Nets with Inhibitor Arcs (PITPN).}\label{sec:PTPN}

We recall  the definitions from~\cite{paris-paper}. 
$\grandn$, $\grandqplus$, and $\grandrplus$ denote, resp., the 
natural numbers, the non-negative rational numbers,  and the non-negative real
numbers.
Throughout this paper, we assume a finite set
$\Param=\{\param_1, \dots,\param_{\ParamCard} \}$
of \emph{time parameters}. 
A \emph{parameter valuation} $\pi$ is a function
$\pi : \Param \rightarrow \grandrplus$.  
A (linear) \emph{inequality} over  $\Param$ is an expression
$\sum_{1 \leq i \leq \ParamCard} a_i \param_i \prec b$,
where $\prec \in \{<, \leq,=,\geq,>\}$
and $a_{i},b \in \grandr$.
A \emph{constraint} is a conjunction of such inequalities.
$\setP$ denotes the set of all constraints over $\Param$. 
A parameter valuation~$\pi$ \emph{satisfies} a constraint $K\in
\setP$, written $\pi \models K$,  
if the expression obtained by replacing each parameter~$\param$ in~$K$
with~$\pi(\param)$ evaluates to true. 
An interval $\interval$ of $\grandrplus$ is a $\grandqplus$-interval
if its left endpoint $\leftEP{\interval}$ belongs to $\grandqplus$ and
its right endpoint $\rightEP{\interval}$ belongs to $\grandqplus \cup
\{ \infty \}$. 
We denote by $\Interval(\grandqplus)$ the set of
$\grandqplus$-intervals. A parametric time interval is a function $\parInterval : {\grandqplus}^\Param
\rightarrow \Interval(\grandqplus)$ that associates with each parameter
valuation a $\grandqplus$-interval.
The set of parametric time intervals over $\Param$ is denoted  $\ParInterval(\Param)$.

\begin{definition}[PITPN]
    A \emph{parametric time Petri net with inhibitor arcs}  is
    a tuple 

    $\PN = \tuple{\Place, \Transition, \Param, \relPre{(.)},
      \relPost{(.)}, \relInhib{(.)}, \markingInit, \parIntervalStatic,
      \Kinit}$ where 
    \begin{itemize}
        \item $\Place = \{ \place_1, \dots, \place_\PlaceCard \}$ is a
          non-empty finite set (of \emph{places}), 
        \item $\Transition = \{ \transition_1, \dots,
          \transition_\TransitionCard \}$ is a non-empty finite set (of
          \emph{transitions}), with $P \cap T = \emptyset$, 
        \item $\Param = \{ \param_1, \dots, \param_\ParamCard \}$ is a
          finite set of \emph{parameters}, 
        \item $\relPre{(.)} \in
          [\Transition \rightarrow \grandn^\Place]$ is the 
          \emph{backward} \emph{incidence function}, 
       \item $\relPost{(.)} \in
          [\Transition \rightarrow \grandn^\Place]$ is the 
          \emph{forward} \emph{incidence function}, 
       \item $\relInhib{(.)} \in [\Transition\rightarrow\grandn^\Place]$
        is the
          \emph{inhibition function}, 
        \item $\markingInit \in \grandn^\Place$ is the \emph{initial marking},
        \item $\parIntervalStatic \in
          [\Transition\rightarrow\ParInterval(\Param)]$ assigns a \emph{parametric time interval} to each
          transition, and 
        \item $\Kinit \in \setP$ is the \emph{initial constraint} over $\Param$.
        \end{itemize}
If  $\Param = \emptyset$ then $\PN$ is a (non-parametric)
\emph{time Petri net with inhibitor arcs} (ITPN).    
\end{definition}

\begin{figure}[ht]
    \subfloat[][A PITPN~$\PN$.\label{fig:example:PITPN:param}]{
\begin{tikzpicture}[xscale=1.1, yscale=.8, >=stealth',bend angle=30,auto]

        \node [place1, tokens=1] at (1, 2) (A) [label=left:A] {};
        \node [place2, tokens=1] at (3, 2) (B) [label=left:B] {};
        \node [transition] (t1) at (0, 1) [label=left:$\transition_1 {[ \param_1^- , \param_1^+ ]} $] {};
        \node [transition] (t2) at (2, 1) [label=left:$\transition_2 {[ \param_2^- , \param_2^+ ]} $] {};
        \node [transition] (t3) at (4, 1) [label=left:$\transition_3 {[ \param_3^- , \param_3^+ ]} $] {};
        \node [place3] at (0, 0) (C) [label=left:C] {};
        \node [place4] at (2, 0) (D) [label=left:D] {};
        \node [place5] at (4, 0) (E) [label=left:E] {};

        \path
            (A) edge [post] node {} (t1)
            (t1) edge [post] node {} (C)
            (A) edge [post,-o] node {} (t2)
            (t2) edge [post] node {} (D)
            (B) edge [post] node {} (t2)
            (B) edge [post] node {} (t3)
            (t3) edge [post] node {} (E)
        ;

        \end{tikzpicture}
}
    \hfill
    \subfloat[][The ITPN~$\valuate{\PN}{\py}$.\label{fig:example:PITPN:valuated}]{
\begin{tikzpicture}[xscale=.9, yscale=.8, >=stealth',bend angle=30,auto]

        \node [place1, tokens=1] at (1, 2) (A) [label=left:A] {};
        \node [place2, tokens=1] at (3, 2) (B) [label=left:B] {};
        \node [transition] (t1) at (0, 1) [label=left:$\transition_1 {[ 5 , 6 ]} $] {};
        \node [transition] (t2) at (2, 1) [label=left:$\transition_2 {[ 3 , 4 ]} $] {};
        \node [transition] (t3) at (4, 1) [label=left:$\transition_3 {[ 1 , 2 ]} $] {};
        \node [place3] at (0, 0) (C) [label=left:C] {};
        \node [place4] at (2, 0) (D) [label=left:D] {};
        \node [place5] at (4, 0) (E) [label=left:E] {};

        \path
            (A) edge [post] node {} (t1)
            (t1) edge [post] node {} (C)
            (A) edge [post,-o] node {} (t2)
            (t2) edge [post] node {} (D)
            (B) edge [post] node {} (t2)
            (B) edge [post] node {} (t3)
            (t3) edge [post] node {} (E)
        ;

        \end{tikzpicture}
}   \caption{A PITPN and its valuation.}
    \label{fig:example}
\end{figure}
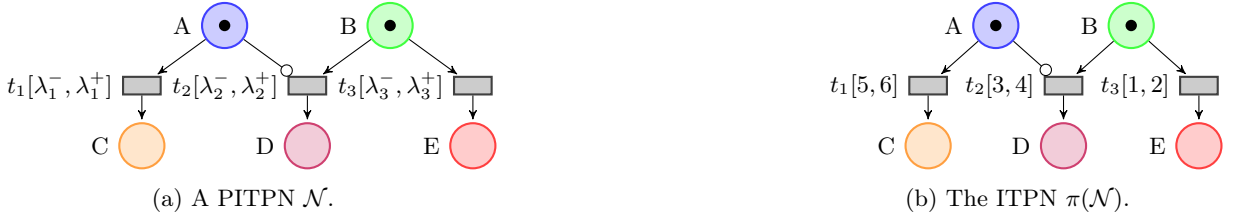

A \emph{marking} of  $\PN$ is an element
$\marking\in\grandn^P$, where $\marking(\place)$
is the number of tokens in place $\place$.
$\valuate{\PN}{\py}$ denotes the  ITPN where each
occurrence of $\lambda_i$ in the PITPN $\PN$  has been replaced by
$\pi(\lambda_i)$ for a parameter valuation $\pi$.  
For example, the ITPN in Fig.~\ref{fig:example:PITPN:valuated} corresponds
to the PITPN in Fig.~\ref{fig:example:PITPN:param} where the parameters
are instantiated with
$\py = \{ \param_1^- \rightarrow 5 , \param_1^+ \rightarrow 6 ,
\param_2^- \rightarrow 3 , \param_2^+ \rightarrow 4 , \param_3^-
\rightarrow 1 , \param_3^+ \rightarrow 2 \}$.

The \emph{concrete semantics} of a PITPN $\PN$ is defined
in terms of concrete ITPNs $\valuate{\PN}{\py}$ where
$\pi\models\Kinit$. 
We say that a transition~$\transition$ is \emph{enabled} in~$\marking$ if $\marking
\geq \relPre{\transition}$ (the number of tokens in
$\marking$ in each input place of $\transition$ is greater than or
equal to the value on the arc between this place and $t$). 
A transition~$\transition$ is \emph{inhibited} if the place connected to
one of its inhibitor arcs is marked with at least as many tokens as
the weight of the inhibitor arc.
A transition~$\transition$ is \emph{active} if it is enabled and not inhibited.
The sets of enabled and inhibited transitions in marking $\marking$ are denoted
$\enabled(\marking)$ and $\inhibited(\marking)$, respectively.
Transition~$\transition$ is \emph{firable} if it has been
(continuously) enabled 
for at least time~$\leftEP{\parIntervalStatic(\transition)}$, without
counting the time it has been inhibited. 
Transition $\transition$ is \emph{newly enabled} by the firing of
transition $\transition_f$ 
in $\marking$
if it is enabled in the resulting marking
$\marking' = \marking - \relPre{\transition_f} +
\relPost{\transition_f}$ but was not enabled
in $\marking - \relPre{\transition_f}$:
\[\mbox{\newlyEnabled}(\transition,\marking,\transition_f) =
(\relPre{\transition} \leq
    \marking  - \relPre{\transition_f} + \relPost{\transition_f})
\land ((\transition=\transition_f)
        \lor \neg (\relPre{\transition} \leq \marking -\relPre{\transition_f})).\]
$\newlyEnabled(\marking,\transition_f)$ denotes the  transitions newly enabled
by  firing  $\transition_f$ in $\marking$.

The semantics of an ITPN is defined as a transition system with states
$(\marking,\interval)$,  where 
$\marking$ is a marking and
$\interval$  is a function mapping each transition enabled in
$\marking$ to
a time interval, and
two kinds of transitions: 
\emph{time} transitions
where time elapses, and discrete transitions when a transition in the net is fired.

\begin{definition}[Semantics of an
  ITPN~\cite{paris-paper}] \label{def:pnet-semantics} 
The dynamic behaviors of  an ITPN $\pi(\PN)$ are defined by the
transition system
$\mathcal{S_{\pi(\PN)}} = (\mathcal{A},a_0,\rightarrow)$, where: 
$\mathcal{A}=\grandn^P\times [\Transition\rightarrow\Interval(\grandq)]$,
 $a_0=(\marking_0,\parIntervalStatic)$ and 
    $(\marking,\interval)
    \rightarrow (\marking',\interval')$ if there exist $\delta\in
    \grandrplus$,  $\transition\in T$, and state
    $(\marking'',\interval'')$ such that
    $(\marking,\interval)\fleche{\delta} (\marking'',\interval'')$ and
    $(\marking'',\interval'') \fleche{\transition}
            (\marking',\interval')$, for the following  relations: 
    \begin{itemize}
        \item the \emph{time transition relation}, defined 
        $\forall\delta\in\grandrplus$ by:\\
        $(\marking,\interval)\fleche{\delta}
            (\marking,\interval')$ iff $\forall \transition \in \Transition$: \\
            $\left\{\begin{array}{l}
                \interval'(t)=\left\{\begin{array}{l}
                    \interval(\transition)\mbox{ if }
                        \transition \in \enabled(\marking) \mbox{ and }
                        \transition\in\inhibited(\marking)\\
                    \leftEP{\interval'(\transition)}=\max(0,\leftEP{\interval(\transition)}
                                       - \delta), 
                    \mbox{ and } \rightEP{\interval'(\transition)} =
                                       \rightEP{\interval'(\transition)}
                                       - \delta 
                    \mbox{ otherwise}
                   \end{array}\right.\\
                \marking \geq\relPre(\transition) \implies
                    \rightEP{\interval'(\transition)}\geq 0
            \end{array}\right.$
        \item the \emph{discrete transition relation}, defined
        $\forall\transition_f\in\Transition$ by:
        $(\marking,\interval)\fleche{\transition_f}
            (\marking',\interval')$ iff\\
            $\left\{\begin{array}{l}
                \transition_f\in\enabled(\marking)\land
                    \transition_f\not\in\inhibited(\marking) \land
                    \marking'=\marking-\relPre{\transition_f}
                        +\relPost{\transition_f} \land
                    \leftEP{\interval(\transition_f)}=0\\
                    \forall\transition\in\Transition,
                        \interval'(\transition)=
                            \left\{\begin{array}{l}
                                \parIntervalStatic(\transition)
                                    \mbox{ if }
                                    \newlyEnabled(\transition,\marking,
                                        \transition_f) \\
                                \interval(\transition) \mbox{ otherwise}
                            \end{array}\right.
            \end{array}\right.$
    \end{itemize}
\end{definition}

The \emph{symbolic} semantics of PITPNs is given
in~\cite{EAGPLP13} as a transition system $(\grandn^P \times \setP,
(\marking_0, K_0), \Fleche{})$  
on \emph{state classes}, i.e.,   pairs $\class = (\marking,\constraint)$
consisting of a marking $\marking$ and  a constraint $\constraint$
over $\Param$.
The firing of a transition leads to
a new marking as in the concrete semantics, and also captures the new constraints
induced by the time that has passed for the transition to fire. For
example, for the PITPN in Fig.~\ref{fig:example:PITPN:param}, 
 the initial class is $(\{A, B\}, \param_1^- \leq \param_1^+ 
\land \param_2^- \leq \param_2^+ \land \param_3^- \leq \param_3^+)$.
When firing transition $\transition_1$, the time spent for
$\transition_1$ to be firable is such that the other transitions
($\transition_3$ in this case) do
not miss their deadlines. So we obtain an additional inequality
$\param_1^-\leq\param_3^+$ and the new state class, obtained after firing
$\transition_1$ is $(\{\textcolor{red}{C}, B\}, \param_1^-\leq\param_1^+ \land
\param_2^-\leq\param_2^+ \land \param_3^-\leq\param_3^+
\land \textcolor{red}{\param_1^-\leq\param_3^+})$. See~\cite{EAGPLP13}
for details.

\subsection{Rewriting with SMT and Maude}
\label{sec:rew-smt}

\paragraph{Rewrite Theories.}

A \emph{rewrite theory} \cite{Mes92} is
a tuple $\mathcal{R} = (\Sigma, E, L, R)$
such that

\begin{itemize}
    \item $\Sigma$ is a signature that declares sorts, subsorts, and function symbols;
    \item $E$ is a set of equations of the form $t=t' \mbox{ \textbf{if} } \psi$,
    where 
    $t$ and $t'$ are terms of the same sort,
    and $\psi$ is a conjunction of equations;

    \item $L$ is a set of \emph{labels};
    and
    
    \item $R$ is a set of rewrite rules
    of the form
    $l : q \longrightarrow r \mbox{ \textbf{if} } \psi$,
    where $l \in L$ is a label,
    $q$ and $r$ are terms of the same sort,
    and
    $\psi$ is a conjunction of equations.

\end{itemize}

$T_{\Sigma, s}$ denotes the set of ground (\ie{} not containing variables)
terms of sort $s$,
 and $T_{\Sigma}(X)_s$ the set of terms of sort $s$
 over a set of variables $X$. $T_{\Sigma}(X)$ and
 $T_{\Sigma}$ denote all terms and ground terms, respectively.
 A substitution $\sigma : X \rightarrow T_{\Sigma}(X)$
 maps each variable to a term of the same sort,
 and
  $t \sigma$ 
  denotes 
 the term obtained
by simultaneously replacing each variable $x$ in a term $t$ with $\sigma(x)$.
The domain of a substitution 
$\sigma$ is $\mathit{dom}(\sigma) = \{x \in X \mid \sigma(x) \neq x\}$,
assumed to be finite. 

A \emph{one-step rewrite} $t \longrightarrow_{\mathcal{R}} t'$ holds 
if there are
a rule $l : q \longrightarrow r \mbox{ \textbf{if} } \psi$,
a subterm $u$ of $t$,
and a substitution $\sigma$ such that
$u = q\sigma$ (modulo equations),
$t'$ is the term obtained from $t$
by replacing $u$ with $r\sigma$,
and $v\sigma = v'\sigma$ holds
for each $v = v'$ in $\psi$.
We denote by
$\longrightarrow_{\mathcal{R}}^\ast$
the reflexive-transitive closure of $\longrightarrow_{\mathcal{R}}$.

A rewrite theory $\mathcal{R}$
is called \emph{topmost}
iff
there is a sort $\mathit{State}$
at the top of one of the connected components of the subsort partial order
such that
for each rule $l : q \longrightarrow r \mbox{ \textbf{if} } \psi$,
both $q$ and $r$ have the top sort $\mathit{State}$,
and
no operator has sort $\mathit{State}$
or any of its subsorts as an argument sort.

\paragraph{Rewriting with SMT \cite{rocha-rewsmtjlamp-2017}.}
For a signature $\Sigma$ and 
a set of equations $E$,
a \emph{built-in theory} $\mathcal{E}_0$
is a first-order theory with a signature $\Sigma_0 \subseteq \Sigma$,
where
(1) each sort $s$ in $\Sigma_0$ is minimal in $\Sigma$;
(2) $s \notin \Sigma_0$ for each operator $f:s_1\times \cdots \times s_n \rightarrow s$
    in $\Sigma \setminus \Sigma_0$; and
(3) $f$ has no other subsort-overloaded typing in $\Sigma_0$.
The satisfiability of a constraint in $\mathcal{E}_0$
is assumed to be decidable
using the SMT theory $\mathcal{T}_{\mathcal{E}_0}$
which is
consistent with  $(\Sigma, E)$, i.e., 
for $\Sigma_0$-terms $t_1$ and $t_2$,
if $t_1= t_2$ modulo $E$, then $\mathcal{T}_{\mathcal{E}_0} \models t_1 = t_2$.

A \emph{constrained term}
is a pair $\phi \parallel t$ of
a constraint $\phi$ in $\mathcal{E}_0$ and  a term $t$ in $T_{\Sigma}(X_0)$
over variables $X_0 \subseteq X$ of the built-in sorts in
$\mathcal{E}_0$ \cite{rocha-rewsmtjlamp-2017,bae2019symbolic}.
A constrained term
 $\phi \parallel t$
\emph{symbolically} represents
all instances of the pattern
$t$
such that $\phi$ holds:
    $\llbracket \phi \parallel t \rrbracket
=
\{t' \mid t' = t\sigma \ \mbox{(modulo $E$) and}\  \mathcal{T}_{\mathcal{E}_0}
\models \phi\sigma \ \mbox{for ground}\ \sigma : X_0 \to T_{\Sigma_0}
\}.
$

An \emph{abstraction of built-ins}
for a $\Sigma$-term $t \in T_{\Sigma}(X)$
is a pair $(t^\circ, \sigma^\circ)$
of 
a term $t^\circ \in T_{\Sigma \setminus \Sigma_0}(X)$
and
a substitution $\sigma^\circ : X_0 \to T_{\Sigma_0}(X_0)$
such that
$t = t^\circ \sigma^\circ$
and  $t^\circ$  contains no duplicate variables in $X_0$.
Any non-variable built-in subterms of $t$ are 
replaced by distinct built-in variables in $t^\circ$.
$\Psi_{\sigma^\circ} = \bigwedge_{x \in \mathit{dom}(\sigma^\circ)} x = x \sigma^\circ $.
Let $\phi \parallel t$ be a constrained term 
and $(t^\circ, \sigma^\circ)$ an
abstraction of built-ins for $t$.
If $\mathit{dom}(\sigma^\circ) \cap \ovars{\phi \parallel t} = \emptyset$,
then
$\llbracket \phi \parallel t \rrbracket
= 
\llbracket \phi \wedge \Psi_{\sigma^\circ} \parallel t^\circ \rrbracket$ 
\cite{rocha-rewsmtjlamp-2017}

Let $\mathcal{R}$ be a topmost theory
such that for each rule $l : q \longrightarrow r \mbox{ \textbf{if} } \psi$, extra variables not occurring in the left-hand side $q$
are in $X_0$,
and
$\psi$ is a constraint in a built-in theory $\mathcal{E}_0$.
A \emph{one-step symbolic rewrite} $\phi \parallel t
\rightsquigarrow_{\mathcal{R}} \phi' \parallel t'$ holds
iff there exist
a rule $l : q \longrightarrow r \mbox{ \textbf{if} } \psi$
and
a substitution $\sigma : X \to T_{\Sigma}(X_0)$
such that
(1) $t = q\sigma$
and
$t' = r\sigma$
(modulo equations),
(2) $\mathcal{T}_{\mathcal{E}_0} \models (\phi \wedge  \psi
      \sigma) \Leftrightarrow \phi'$, and
(3) $\phi'$ is $\mathcal{T}_{\mathcal{E}_0}$-satisfiable.
We denote by
$\rightsquigarrow_{\mathcal{R}}^\ast$
the reflexive-transitive closure of $\rightsquigarrow_{\mathcal{R}}$.

A \emph{symbolic rewrite}
on constrained terms
symbolically represents
a (possibly infinite) set of system transitions.
If $\phi_t \parallel t \rightsquigarrow^\ast \phi_u \parallel u$
is a symbolic rewrite,
then there exists a ``concrete'' rewrite  $t' \longrightarrow^\ast u'$
 with $t' \in \llbracket \phi_t \parallel t \rrbracket$
 and $u' \in \llbracket \phi_u \parallel u \rrbracket$.
Conversely,
for any concrete
rewrite $t' \longrightarrow^\ast u'$ with
$t' \in \llbracket \phi_t \parallel t \rrbracket$,
there exists
a symbolic rewrite $\phi_t \parallel t \rightsquigarrow^\ast \phi_u \parallel u$
with $u' \in \llbracket \phi_u \parallel u \rrbracket$.

\paragraph{Maude.}
Maude~\cite{maude-book} is a language and tool
supporting the specification and analysis of  rewrite theories.
We summarize its syntax below:

\begin{maude}
pr R .                   --- Importing a theory R
sorts S ... Sk .         --- Declaration of sorts S1,..., Sk
subsort S1 < S2 .        --- Subsort relation
vars X1 ... Xm : S .     --- Logical variables of sort S
op f : S1 ... Sn -> S .  --- Operator S1 x ... x Sn -> S
op c : -> T .            --- Constant c of sort T
eq t = t' .              --- Equation
ceq t = t' if c .        --- Conditional equation
crl [l] : q => r if c .  --- Conditional rewrite rule
\end{maude}

\noindent Maude provides a number of analysis methods,  including computing the
normal form of a term $t$ (command \lstinline[mathescape]{red $t$}),
simulation by rewriting (\lstinline[mathescape]{rew $t$}) and rewriting following a given 
strategy (\lstinline[mathescape]{srew $t$ using $str$}). 
Basic strategies include
$r\mathtt{[}\sigma\mathtt{]}$ (apply  rule $r$ once with the optional ground
substitution $\sigma$), \code{all} (apply any of the rules once), 
and \code{match $P$ s.t. $C$} that checks
whether the current term matches the pattern $P$ subject to the constraint $C$.
Compound strategies can be defined using concatenation
($\alpha\,;\,\beta$), disjunction ($\alpha\, |\, \beta$), iteration ($\alpha \,\mathtt{*}$),
$\alpha \code{ or-else } \beta$ (execute $\beta$ if $\alpha$ fails), 
normalization  $\alpha\,\mathtt{!}$ (execute $\alpha$  until it cannot be further
applied), etc. 

Maude also offers explicit-state
reachability analysis from a ground
term $t$ (\lstinline[mathescape]{search [$n$,$m$] $t$ =>* $t'$ such that $\Phi$})
and  model checking an LTL formula $F$
(\lstinline[mathescape]{red modelCheck($t$, $F$)}). Atomic propositions
in $F$ are user-defined terms of sort \code{Prop}, and the function
\lstinline{op _|=_ : State Prop -> Bool} specifies which states satisfy a given 
proposition. LTL formulas are then built from state formulas, boolean connectives
and the temporal logic operators \texttt{[]} (``always''), \texttt{<>} (``eventually'')
and  \texttt{U} (``until''). 
For symbolic reachability analysis, the command 

\begin{maude}
smt-search [$n$, $m$]: $t$ =>* $t'$ such that $\Phi$  --- n and m are optional
\end{maude}
symbolically searches for $n$ states,
reachable from $t \in T_{\Sigma}(X_0)$ within $m$ steps,
that match the pattern $t' \in T_{\Sigma}(X)$ and satisfy
the constraint $\Phi$ in $\mathcal{E}_0$.
More precisely,
it searches for
a constrained term $\phi_u \parallel u$
such that
$\mathit{true} \parallel t \rightsquigarrow^\ast \phi_u \parallel u$
and  for some $\sigma : X \to T_{\Sigma}(X)$,
$u = t'\sigma$ (modulo equations) and
$\mathcal{T}_{\mathcal{E}_0} \models \phi_u \Rightarrow \Phi\sigma$.

Maude
provides
  built-in  sorts
\code{Boolean}, \code{Integer}, and \code{Real} for
the SMT theories of Booleans, integers, and reals. Rational constants of sort \code{Real} are written
\code{$n$/$m$} (e.g., \code{0/1}).
Maude-SE~\cite{maude-se} extends Maude with
additional functionality
for rewriting modulo SMT,
including
witness generation for \lstinline{smt-search}.
It uses two theory transformations
to implement
symbolic rewriting~\cite{rocha-rewsmtjlamp-2017}.
In essence,
a  rewrite
rule $l : q \longrightarrow r \mbox{ \textbf{if} } \psi$
is transformed
into
a constrained-term rule
\begin{align*}
l:
\mathtt{PHI} \parallel q^\circ
\longrightarrow
(\mathtt{PHI} \mathbin{and} \psi \mathbin{and} \Psi_{\sigma^\circ} ) \parallel r
\mbox{ \textbf{if} }
&
\mathtt{smtCheck}(\mathtt{PHI} \mathbin{and} \psi \mathbin{and} \Psi_{\sigma^\circ} )
\end{align*}

\noindent
where $\mathtt{PHI}$ is a \code{Boolean} variable,
$(q^\circ, \sigma^\circ)$ is an abstraction of built-ins for $q$,
and
\code{smtCheck} invokes the underlying  SMT solver
to check the satisfiability of an SMT condition.
This rule is executable
if the extra SMT variables in $(\ovars{r} \cup
\ovars{\psi} \cup \ovars{\Psi_{\sigma^\circ}}) \setminus \ovars{q^\circ}$
are considered constants.

\section{A Rewriting Logic Semantics for ITPNs }\label{sec:tranformation}
\label{sec:concrete}

This section presents a rewriting logic semantics for (non-parametric)
ITPNs, using 
 a (non-executable) rewrite theory
$\rtheorySem$. We provide a 
bisimulation  relating the concrete 
semantics of a net $\PN$ and a rewrite relation in $\rtheorySem$, and  discuss  variants of $\rtheorySem$
to avoid  consecutive tick steps and  to enable time-bounded
 analysis.

\subsection{Formalizing  ITPNs in Maude: The Theory $\rtheorySem$}
\label{subsec:eq-theory}

We fix $\PN$ to be the ITPN $\tuple{\Place, \Transition, \emptyset, \relPre{(.)},
\relPost{(.)}, \relInhib{(.)}, \markingInit, \parIntervalStatic,
true}$, and show how ITPNs and markings of such nets can be
represented as Maude terms.

We first  define sorts for representing transition
 labels, places, and time values  in Maude.
 The usual approach is to represent each transition $t_i$ and
 each place $p_j$ as a constant of sort \texttt{Label} and
 \texttt{Place}, respectively (e.g., \texttt{ops \(p_1\) \(p_2\)
   ... \(p_m\) : -> Place [ctor]}).  To avoid even this simple
 parameterization and just use a single rewrite theory $\mathcal{R}_0$
 to define the semantics of  all ITPNs, we assume
 that  places and transition (labels) can be represented as
 strings. Formally, we assume that there is an injective naming
 function $\eta : P \cup T \rightarrow \mathtt{String}$; to avoid
 cluttering the paper with subscripts, we usually do not mention
 $\eta$  explicitly.

\begin{maude}
  protecting STRING .   protecting RAT .
  sorts Label Place .   --- identifiers for transitions and places  
  subsorts String < Label Place .   --- we use strings for simplicity 
  sorts Time TimeInf .  --- time values 
  subsort Zero PosRat < Time  < TimeInf .
  op inf : -> TimeInf [ctor] .
  vars T T1 T2 : Time     .
  eq T <= inf = true .
\end{maude}

The sort \texttt{TimeInf}
 adds an ``infinity'' value \texttt{inf} to the sort \texttt{Time} of
 time values, which are the  non-negative rational numbers (\code{PosRat}). 

The ``standard'' way of formalizing Petri nets in rewriting logic (see,
e.g.,~\cite{Mes92,petri-nets-in-maude}) represents, e.g.,  a marking
with two tokens in place $p$ 
and three tokens in place $q$  as the Maude term
$p\; p\; q\; q\; q$. This is crucial to support \emph{concurrent}
firings of transitions in a net. However, since the semantics of PITPNs is
an \emph{interleaving} semantics, and   to support
rewriting-with-SMT-based analysis from
parametric initial markings (\Cref{ex:pmarking}),
we instead represent markings as 
maps from places to the number of tokens in that place,
so that the above 
marking is  represented by the Maude
term  \texttt{\(\nameit(p)\)\,|->\,2 ; \(\nameit(q)\)\,|->\,3}. 

The following declarations define  the 
sort  \code{Marking} to consist of  \code{;}-separated sets of pairs \code{\(\nameit(p)\)\,\,|->\,\,\(n\)}. 
Time intervals are represented as  terms \code{[\(\mathit{lower}\,\):\(\,\mathit{upper}\)]} where the upper
bound $\mathit{upper}$,   
of sort \code{TimeInf}, also can be the infinity value \code{inf}. 
The  Maude term
\code{  \(\;\nameit(t)\)\,: \(pre\) --> \(post\) inhibit \(inhibit\) in
  \(interval\) } represents a transition $t\in T$,  
where \code{\(pre\)}, \code{\(post\)},  and  \code{\(inhibit\)} are markings
representing, respectively, 
  $\relPre{(t)}, \relPost{(t)}, \relInhib{(t)}$; and
  \code{\(interval\)} represents the interval $J(t)$.  A \code{Net}
  is represented as a \texttt{;}-separated set of such transitions
  (lines 11--12):

\begin{maude}
sort Marking . --- Markings
op empty : -> Marking [ctor] .
op _|->_ : Place Nat -> Marking [ctor] .
op _;_ : Marking Marking -> Marking [ctor assoc comm id: empty] .
sort Interval .  --- Time intervals (the upper bound can be infinite)  
op `[_:_`] : Time TimeInf -> Interval [ctor] .
sorts Net Transition .    --- Transitions and nets
subsort Transition < Net .
op _`:_-->_inhibit_in_ :
  Label Marking Marking Marking Interval -> Transition [ctor] .
op emptyNet : -> Net [ctor] . 
op _;_ : Net Net -> Net [ctor assoc comm id: emptyNet] .
\end{maude}

  \begin{example}
Assuming the obvious naming function $\nameit$ mapping $A$ to
\texttt{"A"}, and so on, the net in \Cref{fig:example} is
represented as the following term of sort \code{Net}:

\begin{maude}
"t1" : ("A" |-> 1) --> ("C" |-> 1) in [5 : 6] ; 
$\pTransitionM{"t2"}{("B" |-> 1)}{("D" |-> 1)}{("A" |-> 1)}{3 : 4}$ ;
"t3" : ("B" |-> 1) --> ("E" |-> 1) in [1 : 2]$.$
\end{maude}
\end{example}

We  define some useful operations on markings, such as
\texttt{_+_} and \texttt{_-_}:

\begin{maude}
vars N1 N2 : Nat .    vars M M1 M2 : Marking .   var P : Place .
op _+_ : Marking Marking -> Marking .
eq ((P |-> N1) ; M1) + ((P |-> N2) ; M2) = (P |-> N1 + N2) ; (M1 + M2) .
eq M1 + empty = M1 .
\end{maude}

\noindent (This definition  assumes that 
  each place  
in \code{\(\marking2\)} appears  once in \code{\(\marking1\)} and
\code{\(\marking1\:\)+\(\:\marking2\)}.) The 
function \code{_-_} on markings is defined similarly.
The following functions  compare markings and  check whether a transition is
active in a  marking:

\begin{maude}
op _<=_ : Marking Marking -> Bool .   --- Comparing markings
eq ((P |-> N1) ; M1) <= ((P |-> N2) ; M2)  =   N1 <= N2 and (M1 <= M2) .
eq empty <= M2 = true .
ceq M1 <= empty = false if M1 =/= empty .
op active : Marking Transition -> Bool .  --- Active transition
eq active(M, L : PRE --> POST inhibit INHIBIT in INTERVAL) =
      (PRE <= M) and not inhibited(M, INHIBIT) .
op inhibited : Marking Marking -> Bool .  --- Inhibited transition
eq inhibited(M, empty) = false .
eq inhibited((P |-> N2) ; M, (P |-> N) ; INHIBIT) =
      ((N > 0) and (N2 >= N)) or inhibited(M, INHIBIT) .
\end{maude}

\label{sec:states}

\paragraph{Dynamics.}  We  define the dynamics of ITPNs as a Maude
``interpreter'' for such nets. 
The concrete ITPN semantics in~\cite{paris-paper} dynamically adjusts
the ``time intervals'' of non-inhibited transitions when time
elapses.
Unfortunately, the definitions in~\cite{paris-paper} seem slightly
contradictory: On the one hand, time interval end-points should be
non-negative, and only enabled transitions have intervals in the
states; on the other hand, the definition of time and discrete
transitions in~\cite{paris-paper} mentions $\forall t\in T, I'(t) =
...$ and $\marking \geq\relPre(\transition) \implies
                    \rightEP{\interval'(\transition)}\geq 0$, which
                    seems superfluous if all end-points are
                    non-negative. Taking the definition of time and
                    transition steps in~\cite{paris-paper} (our
                    Definition~\ref{def:pnet-semantics}) leads us to
                    time intervals where 
the right end-points of disabled transitions
could have \emph{negative}
values.  This has some
disadvantages: (i) ``time values'' can be negative numbers; (ii) we
have counterintuitive  ``intervals'' $[0, -r]$ where the right end-point is smaller
than the left end-point; (iii) the reachable ``state spaces'' (in suitable
discretizations) could  be infinite when these negative
values could be unbounded.

To avoid these ``inconsistencies'',  and to have a simple and
well-defined semantics,   we  use 
``clocks'' instead of ``decreasing intervals'';  a clock denotes  how
long the corresponding 
transition has been enabled (but not inhibited). Furthermore, to
reduce the state space, the clocks of disabled transitions  are always
zero.   The resulting semantics is equivalent to the (most natural interpretation of the) one in~\cite{paris-paper} in a way made
precise in Theorem~\ref{thm:ground}.   

The sort \code{ClockValues} denotes sets of \texttt{;}-separated  terms
\code{\(\nameit(t)\)\,->\,\(\tau\)}, where \(t\) is the (label of the) transition and
\code{\(\tau\)}  represents
the current value of  $t$'s ``clock.'' 

\begin{maude}
sort ClockValues . --- Values for clocks
op empty : -> ClockValues [ctor] .
op _->_ : Label Time -> ClockValues [ctor] .
op _;_ : ClockValues ClockValues -> ClockValues [ctor assoc comm id:$\:$empty] .
\end{maude}  

The states in $\rtheorySem$ are
terms $m$\,\texttt{:}\,$\mathit{clocks}$\,\texttt{:}\,$\mathit{net}$
of sort \code{State}, where $m$ represents the current marking,
$\mathit{clocks}$ the current values of the transition clocks, and
$\mathit{net}$ the representation of the Petri net:

\begin{maude}
sort State .
op _:_:_ : Marking ClockValues Net -> State [ctor] .
\end{maude}

 The following rewrite rule models the application of a transition
\code{L} in the net
\code{(\pTransitionMI{L}{PRE}{POST}{INHIBIT}{INTERVAL}) ; NET'}.
 Since \texttt{_;_}
is declared to be associative and commutative, \emph{any} transition \code{L}
in the net
can be applied using this rewrite rule: 

\begin{maude}
crl [applyTransition] :
     M  :  (L -> T) ; CLOCKS  :
     ($\pTransitionMI{L}{PRE}{POST}{INHIBIT}{INTERVAL}$) ; NET
 =>  $\highlight{(M - PRE) + POST}$ :
     $\highlight{L -> 0}$ ; updateClocks(CLOCKS, M$\:$-$\:$PRE, NET) :
     ($\pTransitionMI{L}{PRE}{POST}{INHIBIT}{INTERVAL}$) ; NET'
 if $\highlight{active}$(M, $\pTransitionMI{L}{PRE}{POST}{INHIBIT}{INTERVAL}$)
    and ($\highlight{T in INTERVAL}$) .

op _in_ : Time Interval -> Bool .
eq T in [T1 : T2] = (T1 <= T) and (T <= T2) .
eq T in [T1 : inf] = T1 <= T .
\end{maude}

\noindent The  transition \code{L} is active (enabled and not inhibited)
in the  marking \code{M} 
 and its clock value \code{T} is in the \code{INTERVAL}.
After performing the transition, the  marking is
\code{(M\,-\,PRE)\,+\,POST},
the clock of \code{L} is reset\footnote{Since in our semantics clocks
  of disabled transitions should be zero, we can safely set \texttt{L}
  to \texttt{0} in this rule.} 
and the other clocks 
    are updated using the following function:

\begin{maude}
eq updateClocks((L' -> T') ; CLOCKS, INTERM-M, 
                ($\pTransitionMI{L'}{PRE}{POST}{INHIBIT}{INTERVAL}$) ; NET) 
 = if PRE <= INTERM-M then (L' -> T') else (L' -> 0) fi ;
   updateClocks(CLOCKS, INTERM-M, NET) .
eq updateClocks(empty,  INTERM-M, NET) = empty .   
 \end{maude}

The second rewrite rule in $\rtheorySem$ specifies how time
advances. Time can advance by \emph{any} value \code{T}, as long as
time 
does not advance beyond the time when an active transition must be
taken.  The clocks are updated according to the elapsed time \code{T},
except for those transitions that are disabled or inhibited: 

\begin{maude}
crl [tick] : M$\;\,$:$\;\,$CLOCKS$\;\,$:$\;\,$NET  =>  M$\;\,$:$\;\,$increaseClocks(M,$\,$CLOCKS,$\,$NET,$\,\highlight{T}$)$\;\,$:$\;\,$NET
    if  $\highlight{T}$ <= mte(M, CLOCKS, NET) [nonexec] .
\end{maude}

\noindent This rule is not executable (\code{[nonexec]}), since the 
variable \code{T}, which denotes how much time
advances, only  occurs  
in the right-hand side of the rule. \code{T} is
therefore \emph{not} assigned any value by the substitution matching the rule
with the state being rewritten. 
This time advance  \code{T}
must be less or equal to  the minimum of the upper bounds of the enabled 
transitions in the marking \code{M}:

\begin{maude}
op mte : Marking ClockValues Net -> TimeInf .
eq mte(M, (L$\;$->$\;$T)$\;\,$;$\;\,$CLOCKS, (L$\;\,$:$\;\,$PRE$\;\,$-->$\;\,$POST$\;\,...\;\,$in$\;\,$[T1$\,$:$\,\highlight{inf}$])$\;\,$;$\;\,$NET)
 = mte(M, CLOCKS, NET) .
eq mte(M, (L$\;$->$\;$T)$\;\,$;$\;\,$CLOCKS, (L$\;\,$:$\;\,$PRE --> ... in$\;\,$[T1$\,$:$\,\highlight{T2}$])$\;\,$;$\;\,$NET)
 = if active(M, L : ...) then min(T2 - T, mte(M, CLOCKS, NET))
   else mte(M, CLOCKS, NET) fi .
eq mte(M, empty, NET) = inf .
\end{maude}

The function \code{increaseClocks} increases the transitions clocks according to
the elapsed time, except for those transitions that are disabled or inhibited: 

\begin{maude}
op increaseClocks : Marking ClockValues Net Time -> ClockValues .
eq increaseClocks(M, $\highlight{(L -> T1)}$ ; CLOCKS, (L : PRE --> ...) ; NET, $\highlight{T}$)
 = if active(M, L : PRE --> ...)
   then $\highlight{(L -> T1 + T)}$ else $\highlight{(L -> T1)}$ fi ; increaseClocks(M,$\,$CLOCKS,$\,$NET,$\,$T)$\,$.
eq increaseClocks(M, empty, NET, T) = empty .
\end{maude}

The following function $\encBase{\_}$ formalizes how markings and nets are
represented as terms, of respective sorts \texttt{Marking} and
\texttt{Net}, in rewriting logic.\footnote{$\encBase{\_}$ is parametrized
  by the naming function $\nameit$; however, we do not show this
  parameter explicitly.}

\begin{definition}Let $\PN=\tuple{\Place, \Transition, \emptyset, \relPre{(.)},
\relPost{(.)}, \relInhib{(.)}, \markingInit, \parIntervalStatic,
\mathit{true}}$ be an ITPN.   Then $\encBase{\_} : \mathbb{N}^P \rightarrow
\mathcal{T}_{\mathcal{R}_0,\mathtt{Marking}}$ is defined by
  $\encBase{\{p_1\mapsto n_1, \ldots, p_m\mapsto n_m\}} = \nameit(p_1)
  \texttt{\,|->\;} n_1 \;\texttt{;} \,\ldots\, \texttt{;}\; \nameit(p_m)\;
  \texttt{|->} \;n_m$, where  we can omit  entries $\nameit(p_j)
  \texttt{\,|->\;} 0$. The Maude representation $\encBase{\PN}$ of the net $\PN$ is
  the term $\encBase{t_1}\, \texttt{;}\, \cdots\,
\texttt{;}\, \encBase{t_n} $  
of sort \code{Net}, where, for each $t_i \in T$, $\encBase{t_i}$ is
\newline \code{
$\nameit(t_i)\;$:$\;\encBase{\relPre{(t_i)}}$ -{}->
    $\encBase{\relPost{(t_i)}}$ inhibit
    $\encBase{\relInhib{(t_i)}}$ in [$\leftEP{J(t_i)}$ : $\rightEP{J(t_i)}$]. 
}

\end{definition}

\subsection{Correctness of the Semantics} \label{sec:bisimulation}

In this section we show that our rewriting logic semantics
$\mathcal{R}_0$ correctly simulates any ITPN $\PN$. More concretely, we
provide a bisimulation result relating behaviors from $a_0 = 
(\marking_0,\parIntervalStatic)$
in $\PN$ with behaviors in 
$\mathcal{R}_0$ starting from the initial state
$\encBase{\marking_0}$\,\code{:}\,\code{initClocks($\encBase{\PN}$)}\,\code{:}\,$\encBase{\PN}$, 
where  \texttt{initClocks(\(\mathit{net}\))} is the clock valuation that
assigns the value \texttt{0} to each transition (clock) $\eta(t)$ for
each transition (label) $\eta(t)$ in $\mathit{net}$.

Since a transition in $\PN$ consists of a delay followed by
a discrete transition, we  define a corresponding rewrite
relation $\mapsto$ combining the \code{tick} and
\code{applyTransition} rules, and prove the bisimulation for this
relation.

\begin{definition}\label{def:two-steps}
    Let $t_1, t_2, t_3$ be terms of sort \code{State} in
    $\mathcal{R}_0$. We write $t_1 
    \mapsto t_3$ if  
    there exists a $t_2$ such that $t_1 \longrightarrow t_2$ is a one-step 
    rewrite applying the \code{tick} rule in $\mathcal{R}_0$
    and $t_2 \longrightarrow t_3$  
    is a one-step rewrite applying the \code{applyTransition} rule in
    $\mathcal{R}_0$.  
    Furthermore, we write $t_1 \mapsto^* t_2$ to
    indicate that there exists a 
    sequence of $\mapsto$ rewrites from $t_1$ to $t_2$.
\end{definition}

The following relation relates our clock-based states with the
changing-interval-based states; the correspondence is a
straightforward function, except for the case when the upper bound of
a transition is $\infty$:

\begin{definition}  \label{def:bisim-relation}
  Let $\PN = \tuple{\Place, \Transition, \emptyset, \relPre{(.)},
    \relPost{(.)}, \relInhib{(.)}, \markingInit, \parIntervalStatic,
    \mathit{true}}$
  be an ITPN and $\mathcal{S_{\PN}} = (\mathcal{A},a_0,\rightarrow)$ be its
  concrete semantics.   
   Let
  $T_{\Sigma,\texttt{State}}$ denote the set of $E$-equivalence  
  classes of ground terms of sort \texttt{State} in $\mathcal{R}_0$. 
  We define a relation  $\approx \, \subseteq \mathcal{A} \times T_{\Sigma,
    \texttt{State}}$, relating states in the concrete semantics  
  of $\PN$ to states (of sort \texttt{State}) in
  $\mathcal{R}_0$, where for all states $(\marking,\interval) \in \mathcal{A}$,  
  $(\marking,\interval) \approx m\; \code{:}
  \;\mathit{clocks}\; \code{:} \; \mathit{net}$ 
 if and only if $m = \encBase{M}$ and $\mathit{net} = \encBase{\PN}$
 and  for
  each transition $t\in \Transition$,
  \begin{itemize}
    \item  the value of $\eta(t)$ in $\mathit{clocks}$ is \code{0} if
      $t$ in not enabled in $M$; 
    \item otherwise:
      \begin{itemize}
      \item if $\rightEP{\parIntervalStatic(t)} \not = \infty$ then
the value of clock $\eta(t)$ in $\mathit{clocks}$ is  $\rightEP{\parIntervalStatic(t)}
          - \rightEP{\interval(t)}$;
          \item otherwise, if $\leftEP{\interval(t)} > 0$ then
            $\eta(t)$ has 
            the value $\leftEP{\parIntervalStatic(t)}
          - \leftEP{\interval(t)}$ in $\mathit{clocks}$; otherwise, the  value
          of $\eta(t)$ in 
          $\mathit{clocks}$ could be any value $\tau\geq
          \leftEP{\parIntervalStatic(t)}$.
        \end{itemize}
        \end{itemize}
\end{definition}

\begin{theorem}
\label{thm:ground}
Let $\PN = \tuple{\Place, \Transition, \emptyset, \relPre{(.)},
  \relPost{(.)}, \relInhib{(.)}, \markingInit, \parIntervalStatic,
  true}$ be an ITPN, 
 and $\mathcal{R}_0 = (\Sigma,E,L,R)$. Then, 
$\approx$ is a bisimulation between the transition systems
$\mathcal{S_{\PN}} = (\mathcal{A},a_{0},\rightarrow)$ and $\left(T_{\Sigma
    , \texttt{State}}, (\encBase{\marking_0}
  \;\code{:}\;\mathtt{initClocks}(\encBase{\PN})\;\code{:}\;\encBase{\PN}),   
\mapsto\right)$.
\end{theorem}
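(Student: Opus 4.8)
The plan is to verify directly the three conditions in the definition of bisimulation for the relation $\approx$, exploiting that $\rightarrow$ and $\mapsto$ have the same granularity: a concrete transition is a delay $\fleche{\delta}$ followed by a firing $\fleche{\transition_f}$, while a $\mapsto$ step is one \code{tick} followed by one \code{applyTransition}. The base case --- that $a_0 = (\marking_0,\parIntervalStatic)$ is related to $\encBase{\marking_0}\,\code{:}\,\mathtt{initClocks}(\encBase{\PN})\,\code{:}\,\encBase{\PN}$ --- is immediate: at $a_0$ every enabled transition has interval $\parIntervalStatic(t)$, so each clause of Definition~\ref{def:bisim-relation} forces clock value $0$ (in the finite and $\leftEP{\parIntervalStatic(t)}>0$ cases because static and current endpoints coincide, and in the remaining case because $0$ is an admissible value), which is exactly what \code{initClocks} assigns; disabled transitions get clock $0$ on both sides.

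Before the step cases, I would isolate the property that makes the (non-functional) relation $\approx$ behave well, namely a \emph{synchronization invariant} on the concrete side: in every state reachable from $a_0$, for each enabled $t$ there is a value $c_t \ge 0$ --- the total time $t$ has been continuously enabled and not inhibited --- with $\rightEP{\interval(t)} = \rightEP{\parIntervalStatic(t)} - c_t$ and $\leftEP{\interval(t)} = \max(0,\leftEP{\parIntervalStatic(t)} - c_t)$, and $c_t$ equal to the Maude clock of $t$. This is what lets me recover the left endpoint from the clock in the finite-upper-bound case (where Definition~\ref{def:bisim-relation} mentions only the right endpoint) and, crucially, turns the firability condition $\leftEP{\interval''(\transition_f)} = 0$ into the clock condition $c_{\transition_f}\ge \leftEP{\parIntervalStatic(\transition_f)}$ tested by the \code{applyTransition} guard. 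I would show the invariant holds at $a_0$ and is preserved by $\fleche{\delta}$ and $\fleche{\transition_f}$.

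For the forward direction, given $(\marking,\interval)\approx m\,\code{:}\,\mathit{clocks}\,\code{:}\,\mathit{net}$ and a concrete step $(\marking,\interval)\fleche{\delta}(\marking'',\interval'')\fleche{\transition_f}(\marking',\interval')$, I would take $T=\delta$ in \code{tick} and then apply \code{applyTransition} to $\transition_f$. The \code{tick} guard $\delta\le\mathtt{mte}(\ldots)$ holds because, by the invariant, $\mathtt{mte}$ evaluates to $\min\{\rightEP{\interval(t)} \mid t \text{ active},\ \rightEP{\parIntervalStatic(t)}\neq\infty\}$, and the time-transition requirement $\rightEP{\interval''(t)}\ge 0$ is exactly $\delta\le\rightEP{\interval(t)}$; that \code{increaseClocks} reproduces $\interval''$ follows by checking the active, inhibited-but-enabled (frozen), and disabled cases against the clock formulas. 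For the firing, \code{active} matches $\transition_f\in\enabled(\marking'')\setminus\inhibited(\marking'')$, and membership of $\transition_f$'s clock in its interval matches $\leftEP{\interval''(\transition_f)}=0$ together with $\rightEP{\interval''(\transition_f)}\ge 0$. Finally I would relate the post-states by comparing \code{updateClocks} with $\newlyEnabled$: a transition is reset to clock $0$ precisely when it is newly enabled (hence re-assigned $\parIntervalStatic$) or disabled in $\marking'$, and keeps its clock precisely when it keeps its interval, with the reset of $\transition_f$ itself matching its being newly enabled by its own firing.

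The backward direction is symmetric: a $\mapsto$ step chooses some $T\le\mathtt{mte}$ and some active $\transition_f$ whose clock lies in its interval; I set $\delta=T$, recover $\interval''$ and $\interval'$ from the updated clocks through the invariant's formulas (assigning $[0:\infty]$ whenever a clock has passed an infinite-upper-bound lower endpoint), and check that $\fleche{\delta}$ is legal ($\delta\le\mathtt{mte}$ gives $\rightEP{\interval''(t)}\ge 0$) and $\fleche{\transition_f}$ is a legal firing ($\leftEP{\interval''(\transition_f)}=0$ from clock $\ge\leftEP{\parIntervalStatic(\transition_f)}$). The hard part will be precisely this $\infty$ case, where $\approx$ is genuinely non-functional --- many clock values represent the single interval $[0:\infty]$ --- so the correspondence must be handled as a relation rather than a bijection; keeping the clock/interval bookkeeping consistent across the \code{updateClocks}/$\newlyEnabled$ split for inhibited and newly-enabled transitions, and making the firability lower bound transfer through the synchronization invariant, is where the real work lies. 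The remaining identities for $\mathtt{mte}$, \code{increaseClocks}, \code{updateClocks}, and the marking arithmetic are routine structural inductions over the net and clock maps.
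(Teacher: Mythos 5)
Your proposal is correct and follows the same overall architecture as the paper's proof: check the initial condition directly (all clocks zero, $\interval=\parIntervalStatic$), then establish the transfer condition via two directional lemmas, each matching the concrete decomposition $\fleche{\delta};\fleche{\transition_f}$ against the \code{tick};\code{applyTransition} decomposition of $\mapsto$, with $\mathtt{T}=\delta$ and the same fired transition. The one genuine difference is your explicit \emph{synchronization invariant} relating $\leftEP{\interval(t)}$ to $\max(0,\leftEP{\parIntervalStatic(t)}-c_t)$ on reachable states. The paper does not state this: its proof justifies the step from $\leftEP{\interval''(\transition_f)}=0$ to ``the clock of $\transition_f$ lies in $[\leftEP{\parIntervalStatic(\transition_f)},\rightEP{\parIntervalStatic(\transition_f)}]$'' simply ``by definition of $\mathit{clocks}$,'' even though Definition~\ref{def:bisim-relation} constrains the clock only through the \emph{right} endpoint in the finite-upper-bound case, so the left endpoint is not directly recoverable from $\approx$ alone. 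Your invariant closes exactly that gap (and likewise licenses the reconstruction of $\interval''$ and $\interval'$ in the backward direction), at the cost of an extra preservation argument over $\fleche{\delta}$ and $\fleche{\transition_f}$; the paper's version is shorter but leans on an unstated property of reachable states. Both treatments otherwise discharge the guards ($\mathtt{mte}$ versus $\rightEP{\interval''(t)}\geq 0$, \code{active} versus enabled-and-not-inhibited, \code{updateClocks} versus $\newlyEnabled$) in the same way.
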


\subsection{Some Variations of $\rtheorySem$}\label{sec:optim} 

This section introduces the theories $\rtheorySemN{1}$ and $\rtheorySemN{2}$, two variations of $\rtheorySem$ to
 reduce the reachable state space (in symbolic analyses) and  to enable time-bounded
analysis.
   $\rtheorySemN{1}$ avoids  consecutive application of
  the \texttt{tick} rule. This  is  useful for \emph{symbolic}
    analysis since in concrete executions of $\rtheorySemN{1}$, a tick rule
    application may not advance time far enough for a transition to become
    enabled, leading to a deadlock.
$\rtheorySemN{2}$
      adds a ``global clock'', denoting how much time has elapsed in the system. (In
    $\rtheorySem$ such a global clock can also be encoded by the
    clock of a
    ``new'' transition which is never enabled).  This allows for 
  analyzing  time-bounded properties (can a certain state be reached
  in a certain time interval?).

\paragraph{The Theory $\rtheorySemN{1}$.} To avoid consecutive tick
rule applications, we can add a new component---whose value is
either \texttt{tickOk} or \texttt{tickNotOk}---to the global
state. The tick rule can only be applied when this new component of the
global state has the value \texttt{tickOk}. We therefore add a new
constructor \verb@_:_:_:_@ for these extended global states, a new
sort \texttt{TickState} with values \texttt{tickOk} and
\texttt{tickNotOk}, and modify (or add) the two rewrite rules below:

\begin{maude}
sort TickState .
ops tickOk tickNotOk : -> TickState [ctor] .
op _:_:_:_ : TickState Marking ClockValues Net -> State [ctor] .
var TS : TickState .
crl [applyTransition] :
 $\highlight{TS}$ : M : ((L -> T) ; CLOCKS) : (L : PRE --> ...) ; NET) =>
 $\highlight{tickOk}$ : ((M - PRE) + POST) :  ... if active(...) and (T in INTERVAL) .
crl [tick] : $\highlight{tickOk}$ : M : ...  => $\highlight{tickNotOk}$ : M : increaseClocks(...) ...
 if  T <= mte(M, CLOCKS, NET) [nonexec] .
\end{maude}

\begin{theorem}\label{th:r0r1}
    Let  $t= m\; \code{:} \;\mathit{clocks}\; \code{:} \; \mathit{net}$
    be a term of sort \code{State} in 
  $\rtheorySem$. Then, 

  $t  \longrightarrow^*_{\rtheorySem} m'\; \code{:} \;\mathit{clocks'}\; \code{:} \; \mathit{net}$
iff 

    \noindent$\mathtt{tickOk}\; \code{:}\; m\; \code{:} \;\mathit{clocks}\; \code{:} \; \mathit{net}
    \longrightarrow^*_{\rtheorySemN{1}}
    \mathtt{tickNotOk}\; \code{:}\; m'\; \code{:} \;\mathit{clocks'}\; \code{:} \; \mathit{net}$.
\end{theorem}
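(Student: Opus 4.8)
The plan is to establish the biconditional by showing that the rewrite sequences in $\rtheorySem$ and $\rtheorySemN{1}$ are in lockstep correspondence, with the only difference being the bookkeeping of the \texttt{TickState} flag and the prohibition of consecutive \texttt{tick} steps. The key structural observation is that $\rtheorySemN{1}$ differs from $\rtheorySem$ only in that (i) states carry an extra \texttt{TickState} component, (ii) the \texttt{tick} rule requires the flag to be \texttt{tickOk} and sets it to \texttt{tickNotOk}, and (iii) the \texttt{applyTransition} rule resets the flag to \texttt{tickOk}. Crucially, neither rule's applicability condition in $\rtheorySem$ (the \texttt{active} and \texttt{in INTERVAL} guards for \texttt{applyTransition}, and the \texttt{mte} bound for \texttt{tick}) depends on the flag, and neither rule's effect on the \texttt{Marking}, \texttt{ClockValues}, or \texttt{Net} components depends on the flag. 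So the projection that forgets the flag sends every $\rtheorySemN{1}$-rewrite to a legal $\rtheorySem$-rewrite.

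\textbf{Forward direction ($\Rightarrow$).} I would argue that any $\rtheorySem$-rewrite sequence from $t$ to $m'\,\code{:}\,\mathit{clocks'}\,\code{:}\,\mathit{net}$ can be normalized so that it never applies \texttt{tick} twice in a row. The reason is that two consecutive \texttt{tick} steps advancing by $\delta_1$ then $\delta_2$ (each respecting its \texttt{mte} bound) can be merged into a single \texttt{tick} step advancing by $\delta_1+\delta_2$: since \texttt{increaseClocks} adds the elapsed time to each active clock and \texttt{active} depends only on the marking (which is unchanged by \texttt{tick}), the composite effect on clocks is additive, and the combined advance still respects the \texttt{mte} bound computed at the start (here one uses that \texttt{mte} measures remaining time to the earliest deadline, which only decreases under a \texttt{tick}). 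After this normalization, every \texttt{tick} is either the first step or immediately preceded by an \texttt{applyTransition}, so inserting the flag values \texttt{tickOk}/\texttt{tickNotOk} in the obvious way yields a valid $\rtheorySemN{1}$-sequence. Finally I must check that the endpoint flag is \texttt{tickNotOk}: if the last step is a \texttt{tick} this is immediate, and if the last step is \texttt{applyTransition} (leaving the flag \texttt{tickOk}) I append a trailing \texttt{tick} of duration $0$, which is always enabled since $0 \leq \mathtt{mte}(\dots)$, to reach \texttt{tickNotOk} without altering marking or clocks.

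\textbf{Backward direction ($\Leftarrow$).} This is the easy direction: applying the flag-forgetting projection to the given $\rtheorySemN{1}$-sequence from $\mathtt{tickOk}\,\code{:}\,m\,\code{:}\,\mathit{clocks}\,\code{:}\,\mathit{net}$ to $\mathtt{tickNotOk}\,\code{:}\,m'\,\code{:}\,\mathit{clocks'}\,\code{:}\,\mathit{net}$ yields a legal $\rtheorySem$-sequence from $t$ to $m'\,\code{:}\,\mathit{clocks'}\,\code{:}\,\mathit{net}$, since each $\rtheorySemN{1}$-rule projects onto its $\rtheorySem$-counterpart as noted above, and the \texttt{Net} component is invariant under both rules.

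\textbf{The main obstacle} I expect is the \texttt{tick}-merging lemma in the forward direction: I must verify carefully that merging two consecutive ticks of durations $\delta_1,\delta_2$ into one of duration $\delta_1+\delta_2$ preserves the \texttt{mte} side condition and produces identical clock values. The subtlety is that \texttt{mte} is recomputed after the first tick, so I need that the earliest deadline seen after advancing by $\delta_1$ is exactly $\mathtt{mte}(\ldots) - \delta_1$ for active transitions, which relies on the fact that \texttt{active} status (hence the set of clocks being advanced) is unchanged across a \texttt{tick} because the marking is unchanged. Once this equational reasoning about \texttt{increaseClocks} and \texttt{mte} is pinned down, the rest is routine induction on the length of the rewrite sequence.
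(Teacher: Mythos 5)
Your proof is correct and takes essentially the same route as the paper's: the backward direction by replaying the $\rtheorySemN{1}$-trace in $\rtheorySem$ (forgetting the flag), and the forward direction by merging consecutive \texttt{tick} applications with durations $\delta_1$ and $\delta_2$ into a single \texttt{tick} with duration $\delta_1+\delta_2$, justified by the additivity of \texttt{increaseClocks} and the fact that \texttt{mte} decreases by exactly the elapsed time since \texttt{active} depends only on the unchanged marking. Your extra care about the terminal flag (appending a zero-duration \texttt{tick} when the trace ends with \texttt{applyTransition}, or is empty) fills in a small detail that the paper's much terser proof leaves implicit.
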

Although reachability is preserved, an ``arbitrary'' application of
the tick rule in  $\rtheorySemN{1}$, where time does not advance far
enough for a transition to be taken, could lead to a deadlock in
$\rtheorySemN{1}$ but not in
$\rtheorySemN{0}$.

\paragraph{The Theory $\rtheorySemN{2}$.}  To answer questions such as
whether a certain state can be reached in a certain time interval, and
to enable time-bounded analysis where behaviors  beyond the time
bound are not explored, we
add a new component, denoting the ``global time,''  to the global state:

\begin{maude}
op _:_:_:_$\highlight{@_}$ :  TickState Marking ClockValues Net $\highlight{Time}$ -> State [ctor] .
\end{maude}

The \code{tick} and \code{applyTransition} rules are modified as expected. For instance, 
the rule \code{tick} becomes: 
\begin{maude}
crl [tick] :   tickOk    : M : CLOCKS : NET $\highlight{@ GT}$
          =>   tickNotOk : M : increaseClocks(..., T) : NET $\highlight{@ GT + T}$
if  T <= mte(M, CLOCKS, NET) [nonexec] .
\end{maude}

\noindent where \code{GT} is  a variable of sort \code{Time}.  For a
time bound $\Delta$, we can add a conjunct \code{GT\,+\,T\,<=}$\;\Delta$
in the condition of this rule
to stop executing beyond the time bound. 

Let $t$ and $t'$ be terms of sort \code{State} in $\rtheorySem$. We say that
$t'$ is reached in time $d$  from $t$, written $t
\dtransition^*_{\rtheorySem} t'$, if $t \longrightarrow^*_{\rtheorySem} t'$ and
$d$ is the sum of the values taken by the variable \code{T} in the different
applications of the rule \code{tick} in such a trace.

 \begin{theorem}\label{th:r0r2}
    Let  $t= m\; \code{:} \;\mathit{clocks}\; \code{:} \; \mathit{net}$
    be a term of sort \code{State} in 
  $\rtheorySem$. Then, 
    $t \dtransition^*_{\rtheorySem} m'\; \code{:} \;\mathit{clocks'}\; \code{:} \; \mathit{net}$
    iff 

    \noindent$\mathtt{tickOk}\; \code{:}\; m\; \code{:} \;\mathit{clocks}\; \code{:} \; \mathit{net}\; \code{@}\; 0
    \longrightarrow^*_{\rtheorySemN{2}}
    \mathtt{tickNotOk}\; \code{:}\; m'\; \code{:} \;\mathit{clocks'}\; \code{:} \; \mathit{net}\; \code{@}\; d$.

\end{theorem}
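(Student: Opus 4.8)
The plan is to reduce Theorem~\ref{th:r0r2} to Theorem~\ref{th:r0r1} by observing that $\rtheorySemN{2}$ is nothing but $\rtheorySemN{1}$ with an adjoined global-clock component that is incremented by \texttt{T} on every \texttt{tick}, left untouched by \texttt{applyTransition}, and---crucially, in the version without a time bound---never occurs in the condition of any rule. Hence the global clock is \emph{passive}: it can neither enable nor disable a rewrite, so erasing it gives a bijection between $\rtheorySemN{2}$-traces starting with global clock $0$ and $\rtheorySemN{1}$-traces. A one-line induction on trace length then establishes the invariant that the value of the global clock always equals the sum of the values taken by \texttt{T} in the \texttt{tick} steps performed so far. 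This is exactly the quantity $d$ used to define $\dtransition^*_{\rtheorySem}$, so the theorem follows once this elapsed time is shown to be preserved by the trace transformations underlying Theorem~\ref{th:r0r1}.

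For the forward direction I would start from a witnessing $\rtheorySem$-trace realizing $t \dtransition^*_{\rtheorySem} m' : \mathit{clocks}' : \mathit{net}$ in time $d$, whose \texttt{tick} durations sum to $d$, and reshape it into an $\rtheorySemN{1}$-admissible trace exactly as in the proof of Theorem~\ref{th:r0r1}: collapse each maximal block of consecutive \texttt{tick} steps of durations $T_1, \dots, T_k$ into a single \texttt{tick} of duration $T_1 + \cdots + T_k$, and, if the trace ends with an \texttt{applyTransition} (landing in \texttt{tickOk}), append a \texttt{tick} of duration $0$ to reach \texttt{tickNotOk}. Both reshapings leave the reached marking and clock valuation unchanged---\texttt{increaseClocks} is additive in the elapsed time and a $0$-tick is the identity on clocks and marking---and, more importantly here, they preserve the total elapsed time, since collapsing replaces $T_1, \dots, T_k$ by their sum and the padding \texttt{tick} contributes $0$. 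Lifting this $\rtheorySemN{1}$-trace to $\rtheorySemN{2}$ through the passive global clock then yields a trace from $\mathtt{tickOk} : m : \mathit{clocks} : \mathit{net}$ with global clock $0$ to $\mathtt{tickNotOk} : m' : \mathit{clocks}' : \mathit{net}$ with global clock $d$, as required.

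The backward direction is the easy one: given an $\rtheorySemN{2}$-trace from $\mathtt{tickOk} : m : \mathit{clocks} : \mathit{net}$ at global clock $0$ to $\mathtt{tickNotOk} : m' : \mathit{clocks}' : \mathit{net}$ at global clock $d$, I erase the \texttt{TickState} and global-clock components to obtain, via the same-rule-in-the-same-order correspondence, an $\rtheorySem$-trace from $t$ to $m' : \mathit{clocks}' : \mathit{net}$ (legal precisely because neither erased component gates an $\rtheorySem$-rule). By the invariant above, $d$ equals the sum of the \texttt{tick} durations along that trace, which is exactly the elapsed time of the projected $\rtheorySem$-trace; hence $t \dtransition^*_{\rtheorySem} m' : \mathit{clocks}' : \mathit{net}$ in time $d$.

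The main obstacle is the soundness of collapsing a block of consecutive ticks in the forward direction: one must check that $T_1 + \cdots + T_k \le \texttt{mte}$ at the state beginning the block whenever each individual $T_i$ satisfies the guard at its own state. This holds because a \texttt{tick} never changes the marking, hence never changes the set of \emph{active} transitions, so across the block \texttt{mte} decreases by exactly the time already elapsed; the same additivity of \texttt{increaseClocks} and of \texttt{mte} in the elapsed time is what guarantees that the collapsed tick reaches the same clock valuation and that the total elapsed time---and therefore the final global clock $d$---is preserved.
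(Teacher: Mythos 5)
Your proposal is correct and follows essentially the same route as the paper: reduce to Theorem~\ref{th:r0r1} to obtain the \texttt{tickOk}/\texttt{tickNotOk}-shaped trace, then observe that each \texttt{tick} in $\rtheorySemN{2}$ advances the passive global clock by exactly its duration \texttt{T}, so the final clock value is the sum $d$ of the tick durations. The extra details you supply (the additivity of \texttt{mte} and \texttt{increaseClocks} justifying the collapsing of consecutive ticks, and the zero-duration padding tick) are really refinements of the proof of Theorem~\ref{th:r0r1}, which the paper leaves implicit, and do not change the argument.
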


\section{Explicit-state Analysis of ITPNs in Maude}
\label{sec:concrete-ex}

The theories $\mathcal{R}_0$--$\mathcal{R}_2$ cannot be directly
executed in Maude, since the \texttt{tick} rule introduces a new
variable \texttt{T} in its right-hand side.
Following the Real-Time Maude~\cite{tacas08,rtm-journ}
methodology for analyzing dense-time systems, although we cannot cover
all time points, we can choose to ``sample'' system execution at \emph{some}
time points. For example, in this section we change the \texttt{tick}
rule to  increase time by \emph{one time unit} in each
application:

\begin{maude}
crl [tickOne] : M$\;$:$\;$CLOCKS$\;$:$\;$NET => M$\;$:$\;$increaseClocks(M,$\,$CLOCKS,$\,$NET,$\,\highlight{1}$)$\;$:$\;$NET
                if  $\highlight{1}$ <= mte(M, CLOCKS, NET) .
\end{maude}

Analysis with such time sampling  is in general not sound and
complete, since it does not cover all possible system
behaviors: for example, if some transition's firing interval is
$[0.5,0.6]$, we could not execute that transition with this time
sampling.  Nevertheless, if all interval bounds are natural
numbers, then  ``all behaviors'' should be covered.

We
can therefore  quickly prototype our specification and experiment with
different parameter values, before applying the sound and complete
symbolic analysis and parameter synthesis methods developed in the
following sections.

The   term \code{net3($a$,$b$)} represents (a  more
general version of) the 
 net in Fig.~\ref{fig:producers}, where $a$ and $b$ are the
lower and upper bounds of the interval for transition $t_3$:

\begin{maude}
op net3 : Time TimeInf -> Net .
var LOWER : Time .  var UPPER : TimeInf .
eq net3(LOWER, UPPER)
 = "t1" : "p5" |-> 1 --> "p1" |-> 1 in [2 : 6] ;
   "t2" : "p1" |-> 1 --> "p2" |-> 1 ; "p5" |-> 1 in [2 : 4] ;
   "t3" : "p2" |-> 1 ; "p4" |-> 1 --> "p3" |-> 1 in [LOWER : UPPER] ;
   "t4" : "p3" |-> 1 --> "p4" |-> 1 in [0 : 0] .
\end{maude}

\noindent The initial marking in Fig.~\ref{fig:producers} is represented by the term
\texttt{init3}:

\begin{maude}
op init3 : -> Marking .
eq init3 = "p1"$\:$|->$\;$0 ; "p2"$\:$|->$\;$0 ; "p3"$\:$|->$\;$0 ; "p4"$\:$|->$\;$1 ; "p5"$\:$|->$\;$1 .
\end{maude}

We can simulate 2000 steps of the net with  different parameter
values:\footnote{Parts of Maude code and output
  from Maude executions will be replaced by `\texttt{...}' throughout
  the paper.}

\small
\begin{alltt}
Maude> \emph{\textcolor{blue}{rew} [2000] init3 : initClocks(net3(3,5)) : net3(3,5) .}

result State:
"p1"\(\;\)|->\(\;\)0 ; "p2"\(\;\)|->\(\;\)1 ; "p3"\(\;\)|->\(\;\)0 ; "p4"\(\;\)|->\(\;\)1 ; "p5"\(\;\)|->\(\;\)1 :  ...  :  ...
\end{alltt}
\normalsize

\noindent To further analyze the system, we define a function \code{k-safe}, where
\code{k-safe($n$,$\,m$)}   holds iff the marking $m$ does not have any
place with more than $n$ tokens:

\begin{maude}
  op k-safe : Nat Marking -> Bool .

  var M : Marking .  vars N N1 N2 : Nat .  var P : Place .
  eq k-safe(N, empty) = true .
  eq k-safe(N1, P$\:$|->$\;$N2 ; M) = N2 <= N1 and k-safe(N1, M) .
\end{maude}

We can then quickly (in 5ms) check whether the net is 1-safe when transition
$t_3$ has interval $[3,4]$:

\small
\begin{alltt}
Maude> \emph{\textcolor{blue}{search} [1] init3 : initClocks(net3(3,4)) : net3(\textcolor{red}{3,4})  =>*}
                  \emph{M : CLOCKS : NET} \emph{\textcolor{blue}{such that} not k-safe(\textcolor{red}{1}, M) .}

Solution 1 (state 27)
M --> "p1" |-> 0 ; \textcolor{red}{"p2" |-> 2} ; "p3" |-> 0 ; "p4" |-> 1 ; "p5" |-> 1
CLOCKS --> "t1" -> 0 ; "t2" -> 0 ; "t3" -> 4 ; "t4" -> 0
NET --> ...
\end{alltt}
\normalsize

\noindent The net is not 1-safe: we reached a state with two tokens in place
$p_2$.
However, the net is 1-safe if $t_3$'s interval is instead
$[2,3]$:

\small
\begin{alltt}
Maude> \emph{\textcolor{blue}{search} [1] init3 : initClocks(net3(2,3)) : net3(\textcolor{red}{2,3})  =>*}
                  \emph{M : CLOCKS : NET} \emph{\textcolor{blue}{such that} not k-safe(1,\,M) .}

No solution.
\end{alltt}
\normalsize

\noindent Further analysis shows that \texttt{net3(3,4)} is 2-safe, but that
\texttt{net3(3,5)}
is not even 1000-safe.

We can also analyze concrete  instantiations of our net by full linear
temporal logic (LTL) model checking in Maude. For example, we can
define a parametric atomic proposition
\code{place$\;p\;$has$\;n\;$tokens}, which holds in a state iff its
marking has exactly $n$ tokens in place $p$:

\begin{maude}
op place_has_tokens : Place Nat -> Prop [ctor] .
eq  (P$\:$|->$\;$N1 ; M : CLOCKS : NET) |= place P has N2 tokens = (N1 == N2) .
\end{maude}

Then we can check properties such as whether in \emph{each} behavior of the
system, there will be infinitely many states where $p_3$ has no tokens
\emph{and} infinitely many states where it holds one
token:\footnote{\texttt{[]}, \texttt{<>}, \texttt{/\char92}, and
  \texttt{\char126} are the Maude representations of corresponding (temporal)
  logic operators $\Box$ (``always''), $\Diamond$ (``eventually''),
  conjunction, and negation.}

\small
\begin{alltt}
Maude> \emph{\textcolor{blue}{red} modelCheck(init3 : initClocks(net3(3,4)) : net3(3,4),}
        \emph{(\textcolor{brown}{[] <>} place "p3" has 0 tokens) \textcolor{brown}{/\char92} (\textcolor{brown}{[] <>} place "p3" has 1 tokens)) .}

result Bool: true
\end{alltt}
\normalsize

We know that  \texttt{net3(3,4)} can reach markings with two
tokens in $p_2$; but is this inevitable (i.e., does it happen in \emph{all}
behaviors)?

\small
\begin{alltt}
Maude> \emph{\textcolor{blue}{red} modelCheck(init3 : initClocks(net3(3,4)) : net3(3,4),}
                      \emph{\textcolor{brown}{<>} place "p2" has 2 tokens) .}

result ModelCheckResult: counterexample(...)
\end{alltt}
\normalsize

\noindent The result is a counterexample showing a path where $p_2$ never holds
two tokens.

We also obtain a ``time sampling'' specification corresponding to
$\mathcal{R}_3$ by adding a global time
component to the state:

\begin{maude}
op _:_:_$\highlight{@_}$ : Marking ClockValues Net $\highlight{Time}$ -> State [ctor] .
\end{maude}

\noindent and modifying the tick rule to increase this global clock
according to the elapsed time. Furthermore, for time-bounded analysis
  we add  a constraint ensuring that system execution does not go
  beyond the time bound $\Delta$:

\small
\begin{alltt}
\textcolor{blue}{crl} \textcolor{ForestGreen}{[executableTick]} :
    M\(\;\):\(\;\)CLOCKS\(\;\):\(\;\)NET @ \textcolor{red}{GT} \(\:\)=> \(\:\)M\(\;\):\(\;\)increaseClocks(M,\,CLOCKS,\,NET,\,1)\(\;\):\(\;\)NET @ \textcolor{red}{GT\(\;\)+\(\;\)1}
    \textcolor{blue}{if}  GT < \(\Delta\)  and   \textcolor{ForestGreen}{--- remove this condition for unbounded analysis}
        1 <= mte(M, FT, NET) .
\end{alltt}
\normalsize

By setting $\Delta$ to \texttt{1000}, we can simulate one behavior of
the system \texttt{net3(3,5)} up to time 1000:

\small
\begin{alltt}
Maude> \emph{\textcolor{blue}{rew} init3 : initClocks(net3(3,5)) : net3(3,5) \textcolor{red}{@ 0} .}

result State:
"p1"\(\;\)|->\(\;\)0 ; "p2"\(\;\)|->\(\;\)1 ; "p3"\(\;\)|->\(\;\)0 ; "p4"\(\;\)|->\(\;\)1 ; "p5"\(\;\)|->\(\;\)1 : ... : ... \textcolor{red}{@ 1000}
\end{alltt}
\normalsize

We can then check whether \texttt{net3(3,4)} is one-safe in the time
interval $[5,10]$ by setting $\Delta $ in the tick rule to
\texttt{10}, and execute following command:

\small
\begin{alltt}
Maude> \emph{\textcolor{blue}{search} [1] init3 : initClocks(net3(3,4)) : net3(3,4) \textcolor{red}{@ 0} =>*}
            \emph{M : CLOCKS : NET \textcolor{red}{@ GT}} \emph{\textcolor{blue}{such that} not k-safe(1, M) \textcolor{red}{and GT >= 5} .}

Solution 1 (state 68)
MARKING --> "p1"\(\;\)|->\(\;\)0 ; \textcolor{red}{"p2"\(\;\)|->\(\;\)2} ; "p3"\(\;\)|->\(\;\)0 ; "p4"\(\;\)|->\(\;\)1 ; "p5"\(\;\)|->\(\;\)1
...
\textcolor{red}{GT --> 8}
\end{alltt}
\normalsize

\noindent This shows that the  non-one-safe marking can be reached in eight
time units.

\section{Parameters and Symbolic Executions}
\label{sec:sym}

Standard explicit-state Maude analysis of the theories
$\rtheorySem$--$\rtheorySemN{2}$ cannot be used to analyze all
possible
behaviors of PITPNs for two reasons:
\textbf{(1)}
 The rule \code{tick} introduces a new variable \code{T} in its right-hand
    side, reflecting the fact that time can
    advance by \emph{any} value  \code{T <= mte(...)}; and \textbf{(2)}
    analyzing \emph{parametric} nets with \emph{uninitialized}
  parameters is
  impossible with  explicit-state Maude analysis of
  concrete states. (For example, the condition \code{T in INTERVAL} in
  rule \texttt{applyTransition} will   never evaluate to \code{true} if \code{INTERVAL}
  is not a \emph{concrete} interval, and hence the rule will never
  be applied.)

Maude-SE analysis of \emph{symbolic} states with SMT variables can
 solve both  issues, by 
symbolically representing the time advances \code{T} and the net's
uninitialized parameters. This enables analysis and parameter synthesis methods  for  analyzing \emph{all}
possible behaviors in 
dense-time systems with unknown parameters.  

This section defines a  rewrite
theory $\rtheorySym$ that faithfully models PITPNs and that can be
symbolically executed using Maude-SE.
We prove that (concrete)
executions in $\rtheorySemN{1}$ are captured by (symbolic) executions in
$\rtheorySym$,  and vice versa.
We also show that standard folding
techniques \cite{DBLP:journals/jlap/Meseguer20} in rewriting modulo SMT are not
sufficient  for collapsing equivalent symbolic states in
$\rtheorySym$. We therefore
 propose a new folding technique that guarantees termination of the
 reachability 
analyses of $\rtheorySym$ when the
 state-class graph of the encoded PITPN is finite.

\subsection{The Symbolic Rewriting Logic
  Semantics}\label{subsec:sym-theory}

We define the ``symbolic'' semantics of PITPNs using the rewrite
theory $\rtheorySym$, which  is the symbolic counterpart
of $\rtheorySemN{1}$, instead of basing it on $\rtheorySemN{0}$,  since a symbolic
``tick'' step represents all 
possible tick steps from a  symbolic state. We therefore  do not
 introduce deadlocks not 
possible in the corresponding PITPN.   

$\rtheorySym$ is obtained from $\rtheorySemN{1}$
  by replacing
the sort \code{Nat} in markings  and the sort \code{PosRat} for clock values
with the corresponding SMT sorts \code{Integer} and
\code{Real}. (The former is only needed to enable reasoning
with \emph{symbolic} initial states where the number of tokens in a
location is unknown). Moreover, conditions in rules (e.g., \code{M1 <= M2})  are replaced
with the corresponding 
SMT expressions of sort \code{Boolean}. The symbolic execution of
$\rtheorySym$ in Maude-SE will 
accumulate and check the satisfiability of the constraints needed for a
parametric transition to happen.

We start by declaring the sort \code{Time} as follows:

\begin{maude}
sorts Time TimeInf .  subsort $\highlight{Real}$ < Time < TimeInf .
op inf : -> TimeInf [ctor] .
\end{maude}

\noindent where \code{Real} is the sort for SMT
reals. (We add constraints to the 
rewrite rules to  guarantee that only non-negative real numbers
are considered as time values.)

Intervals are defined as in $\rtheorySem$:
\lstinline{op `[_:_`] : Time TimeInf -> Interval}.
Since \lstinline{Real} is a subsort of \lstinline{Time},
an interval in $\rtheorySym$
may contain SMT variables. This means that a parametric interval
$[a,b]$ in a PITPN can be represented as the term
\code{[a:Real$\;$:$\;$b:Real]}, where
\code{a} and \code{b} are variables of sort \code{Real}.

The definition of markings, nets,  and clock values
is similar to the one in \Cref{subsec:eq-theory}. We only
need to adjust the following definition for markings: 

\begin{maude}
op _|->_ : Place $\highlight{Integer}$ -> Marking [ctor] .
\end{maude}

\noindent Hence, in a pair \code{$\nameit(p)$ |-> $e_I$}, $e_I$ is an SMT
integer expression that could be/include SMT variable(s).

Operations on markings and intervals remain the same,  albeit with the
appropriate SMT sorts. Since the operators in Maude for \code{Nat} and
\code{Rat} have the same signature that those for \code{Integer} and
\code{Real}, the specification needs few adjustments.
For instance,
the new definition of \code{M1 <= M2} is:

\begin{maude}
vars N1 N2 : $\highlight{Integer}$ .
op _<=_ : Marking Marking -> $\highlight{Boolean}$ .
eq ((P |-> N1) ; M1) <= ((P |-> N2) ; M2) = N1 $\highlight{<=}$ N2 and (M1 <= M2) .
eq empty <= M2 = true .
\end{maude}

\noindent where  \code{<=}   in \code{N1 <= N2} is a
function
\lstinline{op _<=_ : Integer Integer -> Boolean}.

Symbolic states in $\rtheorySym$ are defined as follows:

\begin{maude}
sort State. op _:_:_:_ : TickState Marking ClockValues Net -> State [ctor]
\end{maude}

The rewrite rules in $\rtheorySym$ act on
symbolic states that may contain SMT variables. 
Although  these rules  are similar to those in
$\rtheorySemN{1}$,  their symbolic execution  is completely
different. Recall from \Cref{sec:prelim}
that Maude-SE defines a theory transformation
to implement symbolic rewriting. In the 
resulting theory $\widehat{\rtheorySym}$,
when a rule is 
applied, the  variables occurring in the right-hand side but not in the
left-hand side are replaced by fresh variables. Moreover, 
rules in $\widehat{\rtheorySym}$ act on constrained terms of the form
$\phi\parallel t$, where $t$ in this case is a term of sort \code{State} and
$\phi$ is a satisfiable SMT boolean expression. The constraint $\phi$ is
obtained by accumulating the conditions in rules, thereby restricting the
possible values of the variables in $t$.

The tick rewrite rule in $\rtheorySym$ is

\begin{maude}
crl [tick] :  tickOk    : M : CLOCKS     : NET
          =>  tickNotOk : M : increaseClocks(M, CLOCKS, NET, $\highlight{T}$) : NET
if ($\highlight{T >= 0/1}$ and mte(M, CLOCKS, NET, T)) .
\end{maude}

The variable \code{T} is restricted to be a non-negative real
number and to satisfy the following \emph{predicate} \code{mte}, which gathers
the constraints to ensure that time cannot advance beyond the point in
time when an enabled transition \emph{must} fire:

\begin{maude}
op mte : Marking ClockValues Net $\highlight{Real}$ -> $\highlight{Boolean}$ .
eq mte(M, empty, NET, T) = true .
eq mte(M, (L -> R1) ; CLOCKS, (L : PRE --> ... in [T1 : $\highlight{inf}$]) ; NET , T) 
 = mte(M, CLOCKS, NET, T) .
eq mte(M, (L -> $\highlight{R1}$) ; CLOCKS, (L : PRE --> ...  in [T1 : $\highlight{T2]}$) ; NET, $\highlight{T}$) 
 = active(M, L : ...) $\highlight{? T <= T2 - R1 : true}$) and mte(M, CLOCKS, NET, T) .
\end{maude}

This means that, for every transition \code{L},  if the upper bound of the
interval in \code{L} is \code{inf}, no restriction on \code{T} is added.
Otherwise, if \code{L} is active at marking \code{M}, the SMT ternary operator
\code{C$\;$?$\;$E1$\;$:$\;$E2} (checking \code{C} to choose either
\code{E1} or \code{E2}) 
further constrains \code{T} to be less than \code{T2$\,$-$\,$R1}.  The
definition of \code{increaseClocks} also uses  this SMT 
operator to represent the 
new values of the clocks:

\begin{maude}
eq increaseClocks(M, (L -> R1) ; CLOCKS, (L : PRE --> ... ) ; NET, T)
 = (L -> (active(M, L : PRE ...) $\highlight{? R1 + T : R1 }$)) ;
   increaseClocks(M, CLOCKS, NET, T) .
\end{maude}

The rule for applying a transition is defined as follows:

\begin{maude}
crl [applyTransition] :
    TS : M : ((L -> T) ; CLOCKS) : (L : PRE --> ...) ; NET)
 => tickOk : ((M - PRE) + POST) :  updateClocks(...) :
    (L : PRE --> ... ; NET) if active(...) and (T in INTERVAL) .
\end{maude}

When applied, this rule adds new constraints asserting that
the transition \code{L} can be fired
(predicates \code{active} and \code{_in_}) and  updates
the state of the clocks:
\begin{maude}
eq updateClocks((L' -> R1)$\;\,$;$\;\,$CLOCKS, INTERM-M, (L'$\,$:$\,$PRE --> ...); NET) 
 = (L ->  PRE <= INTERM-M $\highlight{? R1 : 0/1}$) ; updateClocks(...) .
\end{maude}

\begin{figure}
    \begin{center}
    \includegraphics[width=0.7\textwidth]{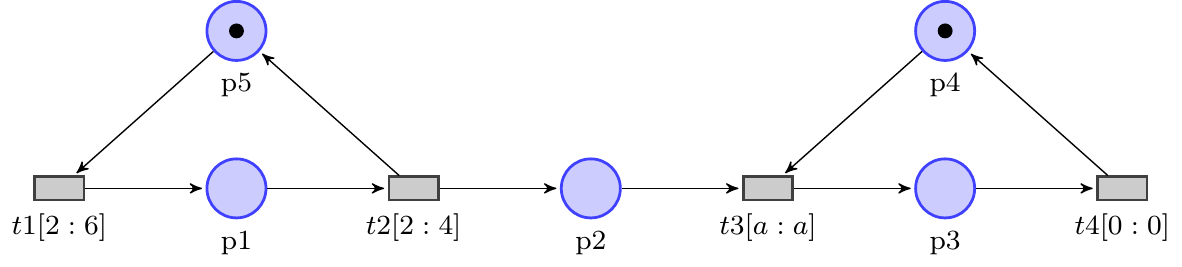}
\end{center}
    \caption{A simple production-consumption system taken from
      \cite{Wang1998}.\label{fig:producers}} 
\end{figure}

In the following, \code{k-safe($k$,$m$)} is a predicate stating that
the marking $m$ does not have more than $k$ tokens  
in any place.

\begin{example}\label{ex:producers}
Let \code{$\mathit{net}$}
and \code{$m_0$} 
    be the Maude terms representing, respectively,   the PITPN and  the initial
    marking shown in \Cref{fig:producers}. The term \code{$\mathit{net}$}
    includes a variable \code{a:Real} representing the parameter
    $a$. The following command 

\begin{maude}
smt-search tickOk : $m_0$ : initClocks($\mathit{net}$) : $\mathit{net}$  =>*  TICK : M : CLOCKS : NET
  such that ($\highlight{a:Real >= 0/1}$ and $\highlight{not k-safe(1, M)}$) = true .
\end{maude}
answers the question whether it is possible to reach a state with a marking
    $M$ with more than one token in some place. Maude positively answers this
    question and the resulting accumulated constraint tells us that
    such a state
    is reachable (with 2 tokens in $p_2$) if  \code{a:Real >= 4/1}.
\end{example}

Terms of sort \code{Marking} in $\rtheorySymN{1}$ may contain expressions
with parameters (i.e., variables)  of sort \code{Integer}. 
Let  $\Lambda_m$  denote
the set of such parameters and $\pi_m:\Lambda_M \to \grandn$ 
a valuation function for them. 
We use $m_s$ to denote a mapping from places to \code{Integer}
expressions including parameter variables. Similarly, $\mathit{clocks}_s$
 denotes a mapping from transitions
to \code{Real} expressions (including variables). 
We write 
$\pi_m(\mathit{m}_s)$ 
to denote the ground term where the parameters in
markings are replaced by the corresponding values
$\pi_m(\lambda_i)$. Similarly for 
$\pi(\mathit{clocks}_s)$ .
We use $\encSym{\PN}$ to
  denotes the above rewriting logic 
  representation of nets in  $\rtheorySymN{1}$. 
  
  Let $t_s$ be a term  of 
  sort \code{State} in $\rtheorySymN{1}$
  and assume that 
  $\phi \parallel t_s
  \rightsquigarrow_{\rtheorySymN{1}} \phi' \parallel t'_s$.
  By construction, if for all $t\in \os \phi\parallel t_s \cs$ 
  all markings (sort \code{Integer}), clocks and parameters (sort \code{Real}) are non-negative 
  numbers, then this is also the case for all $t' \in \os \phi' \parallel t_s'\cs$.
  Note that there is a one-to-one correspondence for ground terms in $\rtheorySymN{1}$ (sorts \code{Marking}, 
  \code{ClockValues}, etc) satisfying that condition
  with (ground) terms in $\rtheorySemN{1}$. We use 
   $t\approx \in\os \phi \parallel t_s\cs$ to denote that 
   there exists a $\rtheorySymN{1}$ term $t'\in  \os \phi \parallel t_s\cs$
   and $t$ is its corresponding term in $\rtheorySemN{1}$.

  The following theorem states that the symbolic semantics matches all the behaviors
resulting from a  concrete execution of $\rtheorySemN{1}$ with
arbitrary parameter valuations $\pi$ and $\pi_m$. Furthermore,  
 for all symbolic executions with parameters, there exists
a corresponding concrete execution where the parameters are instantiated with 
values consistent with the resulting accumulated constraint.

\begin{theorem}[Soundness and Completeness]\label{th:sym-correct}
    Let $\PN$ be a  PITPN
    and $m_s$ be a marking possibly including parameters. 

\noindent\textbf{(1)}
            Let $\phi$ be the constraint 
                            $\bigwedge_{\lambda_i\in \Lambda}(0 \leq
                            \lambda_i^- \leq \lambda_i^+) 
                            \wedge \bigwedge_{\lambda_i\in \Lambda_m}(0 \leq \lambda_i)
                            $. If

                            $\phi\parallel
                            \mathtt{tickOk}\;\code{:}\;m_s\;\code{:}\;\mathit{clocks_s}\;\code{:}\;\encSym{\PN}
                            \rightsquigarrow^*_{\rtheorySymN{1}} 
                            \phi'\parallel TS'\;\code{:}\;m'_s\;\code{:}\;\mathit{clocks'_s}\;\code{:}\;\encSym{\PN}
                            $
                            then, there exists $\pi$ and  $\pi_m$ s.t. 
                            $\mathtt{tickOk}\:\code{:}\;\pi_m(m_s)\;\code{:}\;\mathit{clocks}\;\code{:}\;\encBase{\pi(\PN)}
                            \longrightarrow^*_{\rtheorySemN{1}} 
                            TS'\:\code{:}\;\pi_m(m'_s)\;\code{:}\;\mathit{clocks'}\;\code{:}\;\encBase{\pi(\PN)}
                            $ where 
                            $\phi'\wedge\bigwedge_{\lambda_i\in\Lambda}
                            {\lambda_i} = \pi(\lambda_i) 
                            \wedge \bigwedge_{\lambda_i\in\Lambda_m}
                            {\lambda_i} = \pi_m(\lambda_i)$ 
                            is satisfiable,
                            $\mathit{clocks}\approx\in \os \phi \parallel \mathit{clocks}_s \cs$
                            and $\mathit{clocks'}\approx\in \os \phi' \parallel \mathit{clocks'}_s \cs$.

                            \noindent\textbf{(2)}
            Let $\pi$ be a parameter valuation and $\pi_m$ a parameter marking valuation. 
            Let $\phi$ be the constraint 
                            $\bigwedge_{\lambda_i\in \Lambda}(\lambda_i=\pi(\lambda_i))
                            \wedge \bigwedge_{\lambda_i\in
                              \Lambda_m}(\lambda_i=\pi_m(\lambda_i)) 
                            $.
            If 

            $\mathtt{tickOk}\;\code{:}\;\pi_m(m_s)\;\code{:}\;\mathit{clocks}\;\code{:}\;\encBase{\pi(\PN)}
              \longrightarrow^*_{\rtheorySemN{1}} 
            TS'\;\code{:}\;m'\;\code{:}\;\mathit{clocks'}\;\code{:}\;\encBase{\pi(\PN)}$,
                 then 

                   $\phi \parallel \mathtt{tickOk}\;\code{:}\;m_s\;\code{:}\;\mathit{clocks_s}\;\code{:}\;\encSym{\PN}
                  \longrightarrow^*_{\rtheorySymN{1}} 
                   \phi' \parallel TS'\;\code{:}\;m_s'\;\code{:}\;\mathit{clocks_s}'\;\code{:}\;\encSym{\PN}$
                            where 
                            $m' \approx\in \os \phi'\parallel m'_s\cs$, 
                            $\mathit{clocks}\approx\in \os \phi \parallel \mathit{clocks}_s \cs$
                            and $\mathit{clocks'}\approx\in \os \phi' \parallel \mathit{clocks'}_s \cs$.
\end{theorem}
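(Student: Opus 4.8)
The plan is to establish both directions by induction on the length of the rewrite derivation, reducing the multi-step claim to a one-step correspondence between symbolic rewrites $\rightsquigarrow_{\rtheorySymN{1}}$ and concrete rewrites $\longrightarrow_{\rtheorySemN{1}}$. The essential infrastructure is already available: the abstract soundness/completeness of symbolic rewriting relative to concrete rewriting (stated in Section~\ref{sec:rew-smt}), which guarantees that a symbolic rewrite $\phi_t \parallel t \rightsquigarrow^\ast \phi_u \parallel u$ has a concrete witness in $\os \phi_t \parallel t \cs$ reaching some element of $\os \phi_u \parallel u \cs$, and conversely. The work therefore is to show that this abstract correspondence specializes correctly to the concrete interpreter $\rtheorySemN{1}$, i.e.\ that the ground terms picked out by the $\approx$-correspondence really are the terms on which $\rtheorySemN{1}$ operates, and that the parameter valuations $\pi$, $\pi_m$ can be read off from the accumulated constraint.

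For part \textbf{(2)} (completeness: concrete executions are captured symbolically), I would proceed by induction on the number of $\longrightarrow_{\rtheorySemN{1}}$ steps. In the base case the claim is immediate. For the inductive step, each concrete step applies either \code{tick} or \code{applyTransition}. The key observation is that once $\phi$ fixes every parameter to its value under $\pi$ and $\pi_m$ (as in the hypothesis of (2)), the symbolic state is essentially ground up to the fresh time variable \code{T} introduced by the \code{tick} rule. For a concrete \code{tick} step advancing time by $\delta$, I would exhibit the matching symbolic step by taking the fresh variable \code{T} and adding the constraint \code{T}$\,=\,\delta$; the side condition \code{mte(M, CLOCKS, NET, T)} is satisfied precisely because $\delta$ was a legal concrete time advance (it does not pass any active transition's upper bound), so $\phi'$ remains satisfiable. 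For a concrete \code{applyTransition} step firing $t$, the symbolic rule fires the same transition \code{L}; the added constraints \code{active} and \code{T in INTERVAL} hold because the concrete step was enabled, and the SMT ternary operators in \code{increaseClocks} and \code{updateClocks} evaluate, under the fixed valuation, to exactly the concrete clock updates. In both cases the invariant $m' \approx \in \os \phi' \parallel m'_s \cs$ and the clock correspondences are preserved, using the observation (stated just before the theorem) that non-negativity of markings, clocks, and parameters is preserved along symbolic rewrites, which is what makes the $\approx$ map well-defined.

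For part \textbf{(1)} (soundness: every symbolic execution has a concrete instance), again by induction on symbolic steps, the strategy is to run the abstract soundness direction and then \emph{choose} $\pi$, $\pi_m$ from a satisfying assignment of the final accumulated constraint $\phi'$. Since $\phi'$ is $\mathcal{T}_{\mathcal{E}_0}$-satisfiable (a standing invariant of $\rightsquigarrow$), pick any model; its restriction to the parameter variables $\lambda_i^\pm$ and the marking parameters gives $\pi$ and $\pi_m$, and the conjunction $\phi' \wedge \bigwedge \lambda_i = \pi(\lambda_i) \wedge \bigwedge \lambda_i = \pi_m(\lambda_i)$ is satisfiable by that very model. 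I would then check that instantiating the whole symbolic derivation by this model yields a valid concrete $\rtheorySemN{1}$ derivation: each symbolic \code{tick} becomes a concrete \code{tick} with the time advance read off from the model's value for the corresponding fresh \code{T}, and each symbolic \code{applyTransition} becomes a concrete firing of the same transition. The clock-correspondence conditions $\mathit{clocks}\approx\in\os\phi\parallel\mathit{clocks}_s\cs$ follow because the instantiation is exactly the ground term selected by $\approx$.

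\textbf{The main obstacle.} The hardest part is not the step-by-step simulation but verifying that the \emph{equationally computed} data in the two theories agree under instantiation—specifically, that the SMT ternary expressions \code{(active(...) ? R1 + T : R1)} used in $\rtheorySymN{1}$ reduce, once a parameter model is substituted, to the same values as the \code{if-then-else} equations of $\rtheorySemN{1}$, and that the predicate form of \code{mte} in $\rtheorySymN{1}$ (a \code{Boolean} constraint on \code{T}) is logically equivalent to the functional \code{mte} bound of $\rtheorySemN{1}$. This requires a careful argument that the \code{active} and \code{inhibited} predicates, which branch on marking comparisons, are preserved under the $\approx$-instantiation—i.e.\ that enabledness is a property of the ground marking and is computed identically in both theories. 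I expect this to be mostly a matter of matching equational definitions case by case, but it is where an implicit assumption (that markings stay non-negative so that the \code{Integer}/\code{Nat} mismatch is harmless) is genuinely used, and it must be invoked explicitly to license the one-to-one ground correspondence underlying $\approx$.
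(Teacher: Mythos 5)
Your proposal is correct, and at its core it rests on the same pillar as the paper's own proof: the general soundness and completeness theorem for rewriting modulo SMT from \cite{rocha-rewsmtjlamp-2017}, which relates symbolic rewrites $\rightsquigarrow^\ast$ to concrete instances. The paper's proof is, in fact, nothing more than a two-sentence invocation of that result. Where you genuinely go further is in recognizing that the cited theorem only relates $\rightsquigarrow_{\rtheorySymN{1}}$ to ground rewriting \emph{within} $\rtheorySymN{1}$, whereas the statement to be proved involves the distinct theory $\rtheorySemN{1}$ on the concrete side; bridging these requires exactly the case-by-case check you flag as the ``main obstacle'' (that the SMT ternary operators and the predicate form of \code{mte} instantiate to the \code{if-then-else} equations and the functional \code{mte} bound, and that non-negativity makes the \code{Integer}/\code{Nat} passage harmless). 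The paper discharges this only implicitly, via the informal one-to-one ground-term correspondence and the $\approx\in$ notation set up in the paragraph preceding the theorem statement; your explicit induction and your plan to read $\pi$, $\pi_m$ off a satisfying model of the accumulated constraint $\phi'$ make the argument self-contained where the paper relies on the reader to accept the correspondence. Both routes are sound; yours buys rigor at the seam between the two theories, which is precisely where the paper's proof is thinnest.
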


The symbolic counterpart $\rtheorySymN{2}$ of the theory $\rtheorySemN{2}$
can be defined similarly.

\subsection{A New Folding Method for Symbolic
  Reachability}\label{subsec-folding}

Reachability analysis should terminate 
for both
positive and negative queries for nets with   
finite parametric state-class graphs.
However, the symbolic state  space generated by 
\lstinline{smt-search} is infinite even for such nets, so that
 \lstinline{smt-search} will not terminate when the desired states are
 unreachable. The problem is that \lstinline{smt-search} stops exploring from  a
symbolic state only if it has already visited  the \emph{same}
state. Due to the fresh variables created 
in $\rtheorySym$ whenever the \texttt{tick} rule is applied, symbolic
states representing the same set 
of concrete states are not the same, even though they are \emph{logically}
equivalent, as exemplified below.

\begin{example}\label{ex:infinite}
    The following command, trying to show that the PITPN in
    \Cref{fig:producers} is 1-safe if  $0 \leq a<4$, does not 
    terminate. 
\begin{maude}
smt-search tickOk$\;$:$\;m_0\;$:$\;$0-clock($net$)$\;$:$\;\mathit{net}$  =>*  TICK : M : CLOCKS : NET 
such that ($\highlight{a:Real >= 0/1 and a:Real < 4}$ and $\highlight{not M <= k-safe(1,M)}$) = true$\,$.
\end{maude}

\noindent Furthermore, the command 

\begin{maude}[escapechar=!]
smt-search tickOk$\;$:$\;m_0\;$:$\;$0-clock($net$)$\;$:$\;\mathit{net}$  =>*  TICK : M : CLOCKS : NET 
such that ($\highlight{a:Real >= 0/1 and a:Real < 4}$ and !!\highlight{ M <=  $m_0$ and $ m_0$ <= M}) = true$\,.$ 
\end{maude}

\noindent searching for reachable states where 
    $M = m_0$ will produce infinitely many (equivalent) solutions,
    including, e.g., the following constraints:  

\noindent\adjustbox{width=\textwidth}{
    \begin{maude}
Solution 1:  #p5-9:Integer === 1 and #t3-9:Real  + a:Real -  #t2-9:Real <= 0/1 and ...
Solution 2: #p5-16:Integer === 1 and #t3-16:Real + a:Real - #t2-16:Real <= 0/1 and ...
    \end{maude}
}

\noindent     where a variable created by \lstinline{smt-search} starts with \verb@#@
    and ends with a number taken from a sequence to guarantee freshness. 
    Let $\phi_1\parallel t_1$ and $\phi _2\parallel t_2$ be, respectively,  the 
    constrained terms found in \code{Solution 1} and \code{Solution 2}. 
    In this particular output, $\phi_2\parallel t_2$ is obtained by further rewriting 
    $\phi_1\parallel t_1$. 
    The variables representing the state of markings and clocks 
    (e.g.,  \code{\#p5-9} in $t_1$ and \code{\#p5-16} in $t_2$) are clearly different,
    although they 
    represent the same set of concrete values ($\os \phi_1\parallel
    t_1\cs = \os \phi_2\parallel t_2 \cs$).  
    Since constrains are accumulated when a rule is applied, we note that 
    $\phi_2$ equals $\phi_1 \wedge \phi_2'$ for some $\phi_2'$, and 
    ${\it vars}(\phi_1\parallel t_1)\subseteq {\it vars}(\phi_2\parallel t_2)$. 
\end{example}

The usual approach for collapsing equivalent symbolic states in rewriting 
modulo SMT is subsumption
\cite{DBLP:journals/jlap/Meseguer20}. Essentially, we stop searching 
from a symbolic state if, during the search, we have already encountered
another symbolic state that subsumes (``contains'') it.
More precisely, let $U = \phi_u \parallel t_u$ and
$V = \phi_v \parallel t_v$ be constrained terms. 
Then 
$U  \sqsubseteq  V$  if there is
a substitution $\sigma$ such that $t_u = t_v\sigma$  and
the implication $\phi_u \Rightarrow \phi_v\sigma $ holds. In that case,
$\llbracket U \rrbracket \subseteq \llbracket V \rrbracket$.
A search will not further explore a constrained term $U$ if another
constrained term $V$ with $U \sqsubseteq V$ has already been
encountered. It is known that such reachability analysis with folding
is sound (does not 
generate spurious counterexamples~\cite{bae2013abstract}) 
but not necessarily complete (since
$\llbracket U \rrbracket \subseteq \llbracket V \rrbracket$
does not imply $U \sqsubseteq V$). 

\begin{example}
    Let $\phi_1$ and $\phi_2$ be
    the resulting constraints in the two solutions found by the second \lstinline{smt-search} command in \Cref{ex:infinite}. 
    Let $\sigma$ be the substitution that maps \code{\#p$i$-9} to \code{\#p$i$-16}
    and \code{\#t$j$-9}  to \code{\#t$j$-16} for each place $p_i$ and
    transition $t_j$.  
The SMT solver determines that  the formula $\neg (\phi_2
    \Rightarrow \phi_1\sigma)$ 
    is satisfiable (and therefore  $\phi_2 \Rightarrow \phi_1\sigma$ is not
    valid).
    Hence, a procedure based on checking this implication will fail
    to determine that the state in the second solution can be subsumed by the state
    found in the first solution.

\end{example}

The satisfiability witnesses of $\neg (\phi_2 \Rightarrow \phi_1\sigma)$ can
give us some ideas on how to make the subsumption procedure more precise.
Assume that $\phi_1$ carries the information $R = T_0$ for some
clock represented by $R$ and $T_0$ is a tick variable subject to $\phi = (0 \leq
T_0 \leq 2)$. Assume also that in $\phi_2$, the value of the same clock is 
$R' = T_1 + T_2$ subject to $\phi' = (\phi\wedge T_1 \geq 0 \wedge T_2\geq 0 \wedge T_1+T_2 \leq 2)$.
Let $\sigma= \{R \mapsto R'\}$. Note that  $(R'=T_1+T_2
\wedge \phi \wedge \phi')$ does not imply $(R = T_0\wedge \phi)\sigma$ (take, e.g., 
the valuation $T_1=T_2=0.5$ and $T_0=2$). The key
observation is that, even if $R$ and $R'$ are both constrained to be
in the interval $[0,2]$ (and hence represent the same state for this
clock), the assignment of $R'$ in the antecedent does not need 
to coincide with the one for $R$ in the consequent of the implication. 

In the following,  we propose a subsumption relation that solves the
aforementioned problems.  
Let $\phi \parallel t$ be a constrained term where $t$ is a term of sort \code{State}. 
Consider the abstraction of built-ins $(t^\circ, \sigma^\circ)$ for
$t$, 
where $t^\circ$ is as $t$ but it replaces the expression $e_i$ in
markings ($p_i\mapsto e_i$)  
and clocks ($l_i \to e_i$)  with new fresh variables.
The substitution $\sigma^\circ$ is defined accordingly. 
Let $\Psi_{\sigma^\circ} = \bigwedge_{x \in
  \mathit{dom}(\sigma^\circ)} x = x \sigma^\circ $. 
We use $\projnow{(\phi\parallel t)}{}$ to denote the constrained 
term $\phi \wedge \Psi_{\sigma^\circ} \parallel t^\circ $. 
Intuitively,  $\projnow{(\phi\parallel t)}{}$ replaces the clock values and markings
 with fresh variables and the 
 boolean expression $\Psi_{\sigma^\circ}$
 constrains those variables to take the values
 of clocks and marking in $t$. From \cite{rocha-rewsmtjlamp-2017} we can show that 
 $\os \phi \parallel t \cs = \os \projnow{(\phi\parallel t)}{}\cs$.

Note that the 
only variables occurring in $\projnow{(\phi\parallel t)}{}$
are those for parameters (if any) 
and the fresh variables in $\mathit{dom}(\sigma^\circ)$
(representing the symbolic state of clocks and markings). 
For a constrained term $\phi\parallel t$, we use
$\exists(\phi \parallel t)$ to denote the formula
$(\exists X) \phi$ where 
$X = \mathit{vars}(\phi) \setminus \mathit{vars}(t)$.

 \begin{definition}[Relation $\preceq$]\label{def:rel}
 Let $U = \phi_u \parallel t_u$ and $V = \phi_v \parallel t_v$ be 
 constrained terms where $t_u$ and $t_v$ are terms of sort \code{State}. 
 Moreover, let $\projnow{U}{} = \phi_u'\parallel t_u'$ and 
$\projnow{V}{}= \phi_v'\parallel t_v'$,
where $\ovars{t_u'} \cap \ovars{t_v'} = \emptyset$.
We define 
 the relation $\preceq$ on constrained terms 
 so that  $U\preceq V$ 
 whenever there exists a substitution $\sigma$ such that  $t_u' = t_v'\sigma$
 and the formula 
 $\exists(\projnow{U}{}) \Rightarrow \exists(\projnow{V}{})\sigma$ is valid. 
\end{definition}

The formula $\exists(\projnow{U}{})$ hides the information about 
 all the tick variables as well as the information about the clocks and markings
 in previous time instants. What we obtain is the information about
 the parameters and the values of the clocks and markings ``now''. 
Moreover, if $t_u$ and $t_v$ above are both \code{tickOk} states (or both
 \code{tickNotOk} states),
  and they represent two symbolic states of the same PITPN, 
 then $t_u'$ and $t_v'$ always match ($\sigma$ being  
 the identity on the variables representing parameters
 and mapping the corresponding variables created
 in $\projnow{V}{}$  and $\projnow{U}{}$).

 \begin{theorem}[Soundness and Completeness]\label{th:folding}
     Let $U$  and $V$  be
     constrained terms in $\widehat{\rtheorySym}$ representing two symbolic states of the same
     PITPN.
Then,
     $\os U \cs \subseteq \os V \cs$ iff $U \preceq V$. 
\end{theorem}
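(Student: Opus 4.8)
The plan is to reduce everything to the ``now''-projections and then turn the set inclusion into a first-order implication. First I would invoke the cited identity $\os \phi\parallel t\cs = \os \projnow{(\phi\parallel t)}{}\cs$ to replace $U$ and $V$ by $\projnow{U}{} = \phi_u'\parallel t_u'$ and $\projnow{V}{} = \phi_v'\parallel t_v'$, so that the goal becomes $\os\projnow{U}{}\cs\subseteq\os\projnow{V}{}\cs \iff U\preceq V$. Since $U$ and $V$ are symbolic states of the \emph{same} PITPN and both are \texttt{tickOk} (or both \texttt{tickNotOk}), the abstracted terms $t_u'$ and $t_v'$ share exactly the same structure -- identical places, transitions, and net, with markings and clocks abstracted to distinct fresh variables -- so there is a renaming $\sigma$, the identity on the parameter variables and a bijection on the fresh state variables, with $t_u' = t_v'\sigma$. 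This $\sigma$ is precisely the witness of Definition~\ref{def:rel}. The key structural fact I would record here is that $t_v'$ is \emph{linear} in its state variables: each fresh marking/clock variable occurs exactly once, in a distinct position.

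The core lemma I would establish is a clean characterization of the denotation of a projection. Writing $\mathit{vars}(t_v') = P\cup S_V$ for the parameters $P$ and the fresh state variables $S_V$, and $X_V = \mathit{vars}(\phi_v')\setminus\mathit{vars}(t_v')$ for the hidden tick and previous-instant variables, I claim that a ground state $s$ of the appropriate shape lies in $\os\projnow{V}{}\cs$ exactly when the valuation $\rho_s$ obtained by reading the concrete marking, clock, and parameter values off $s$ satisfies $\exists(\projnow{V}{}) = (\exists X_V)\,\phi_v'$. This follows directly from the definition of $\os\cdot\cs$: a ground instance $t_v'\rho$ belongs to the denotation iff $\phi_v'$ admits a satisfying ground extension of $\rho|_{\mathit{vars}(t_v')}$ over the hidden variables $X_V$, which is precisely what quantifying $X_V$ existentially expresses; and by linearity of $t_v'$ the instance $t_v'\rho$ determines $\rho|_{\mathit{vars}(t_v')}$ uniquely, so $\os\projnow{V}{}\cs$ is in bijection with the models of $\exists(\projnow{V}{})$ over $P\cup S_V$. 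The same holds for $U$.

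With this lemma the two directions become substitution bookkeeping. For $U\preceq V \Rightarrow \os\projnow{U}{}\cs\subseteq\os\projnow{V}{}\cs$: given $s\in\os\projnow{U}{}\cs$, the read-off valuation $\rho_s$ satisfies $\exists(\projnow{U}{})$, hence $\exists(\projnow{V}{})\sigma$ by validity of the implication, hence $\exists(\projnow{V}{})$ under the composite $\sigma\rho_s$; since $t_v'(\sigma\rho_s) = (t_v'\sigma)\rho_s = t_u'\rho_s = s$, the lemma places $s$ in $\os\projnow{V}{}\cs$. For the converse, take any valuation $\rho$ of $P\cup S_U$ satisfying $\exists(\projnow{U}{})$; then $s = t_u'\rho\in\os\projnow{U}{}\cs\subseteq\os\projnow{V}{}\cs$, so by the lemma there is $\rho''$ on $P\cup S_V$ with $t_v'\rho'' = s$ and $\exists(\projnow{V}{})$ true under $\rho''$. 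Because $t_v'(\sigma\rho) = s = t_v'\rho''$ and $t_v'$ is linear, $\sigma\rho$ and $\rho''$ agree on $\mathit{vars}(t_v')$, whence $\exists(\projnow{V}{})$ holds under $\sigma\rho$, i.e.\ $\exists(\projnow{V}{})\sigma$ holds under $\rho$; as $\rho$ was arbitrary, the implication is valid and $U\preceq V$.

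The step I expect to be the main obstacle is the core lemma, and specifically justifying that existentially quantifying away $X_V$ faithfully captures denotational containment: one must argue that the hidden tick and previous-instant variables contribute nothing observable beyond constraining the current values, and that the linearity of $t_v'$ makes the read-off valuation $\rho_s$ well defined and unique -- so that no two distinct current valuations collapse to one ground state and the bijection with models of $\exists(\projnow{V}{})$ is exact. Care is also needed to keep all term equalities modulo the structural axioms $E$ (associativity and commutativity of the marking and clock maps), and to confirm that the shared parameter variables are treated identically in $U$ and $V$ so that $\sigma$ is genuinely the identity on $P$ -- a point guaranteed precisely because $U$ and $V$ are symbolic states of the same net, whose parameters occur (identically) in the fixed net component of both $t_u'$ and $t_v'$.
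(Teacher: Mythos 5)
Your proposal is correct and follows essentially the same route as the paper's proof: reduce both constrained terms to their now-projections via $\os \phi\parallel t\cs = \os \projnow{(\phi\parallel t)}{}\cs$, obtain a renaming $\sigma$ matching $t_u'$ against the linear term $t_v'$, and identify the denotation of a projection with the models of its existential closure over the hidden variables so that containment becomes validity of $\exists(\projnow{U}{}) \Rightarrow \exists(\projnow{V}{})\sigma$. The only differences are presentational: the paper derives $\sigma$ from $E$-unifiability of $t_u'$ and $t_v'$ using the matching lemma of the rewriting-modulo-SMT framework (rather than from the structural identity of states of the same net, as you do) and argues the forward direction by contradiction, but the substance --- including the reliance on $t_v'$ having no duplicate variables and on built-in subterms being variables, which you correctly flag as the delicate points --- is the same.
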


 We have implemented a new symbolic reachability analysis 
 based on the folding relation in \Cref{def:rel}.
 Building on the theory transformation defined in Maude-SE,
we transform 
 the theory $\rtheorySymN{1}$ into a rewrite theory
$\rtheorySymNF{1}$
 that rewrites terms of the form 
 $S :\phi\parallel t$ where $S$ is a set of constrained terms  (the 
 already visited states).
 Theory $\rtheorySymNF{1}$ defines the sort \code{SetState}
 for \code{;}-separated sets of constrained terms 
 and an operator 
 \code{subsumed}$(\phi\parallel t~,~ S)$ that 
 reduces to \code{true} iff there exists $\phi'\parallel t' \in S$
 s.t $\phi\parallel t \preceq \phi' \parallel t'$. 
 A rule 
 $l : q \longrightarrow r \mbox{ \textbf{if} } \psi$
 in $\rtheorySymN{1}$
is transformed
into
the following rule in $\rtheorySymNF{1}$:
\begin{align*}
l:~
S : \mathtt{PHI} \parallel q^\circ
\longrightarrow
(S ; \phi_r\parallel r) : \phi_r \parallel r
\mbox{ \textbf{if} }
&
\mathtt{smtCheck}(\phi_r) \wedge \mathtt{not~subsumed}(\phi_r\parallel r, S)
\end{align*}

\noindent
where $\mathtt{PHI}$ is a \code{Boolean} variable,
$\mathtt{S}$ is a variable of sort \code{SetState}, 
$(q^\circ, \sigma^\circ)$ is an abstraction of built-ins for $q$ and 
$\phi_r = (\mathtt{PHI} \mathbin{and} \psi \mathbin{and} \Psi_{\sigma^\circ} )$.
Note that the transition happens only if the new state 
$\phi_r\parallel r$ is not subsumed by an already visited state in $S$. 
The theory $\rtheorySymNF{2}$ is similarly obtained
from $\rtheorySymN{2}$.

In $\rtheorySymNF{1}$, for an initial constraint $\phi$ on the parameters,
the command 

\noindent\lstinline[mathescape]{search [$n$,$m$] : empty:$\phi\parallel t$ =>* $S:\phi'\parallel t'$ such that smtCheck($\phi'\wedge \Phi)$}
answers the question whether it is possible to reach a symbolic 
state that matches $t'$ and satisfies the condition $\Phi$. 
In the following, we use $\code{init}(net, m_0, \phi)$ to denote the term 
$\code{empty}:\phi\parallel \mathtt{tickOk}\;\code{:}\;m_0\;\code{:}\code{initClocks}(net)\;\code{:}\;net$.

 \begin{example}\label{ex:finite}
Consider  the  PITPN in Fig. \ref{fig:producers}.
    Let  $m_0$ be the marking in the figure 
    and $\phi=0 \leq a < 4$. The command 

     \begin{maude}
search init($net$, $m_0$, $\phi$) =>* S : $\phi' \parallel $ ( TICK : M : CLOCKS : NET ) 
   such that smtCheck($\phi'$ and not k-safe(1,M)) . \end{maude}

     \noindent terminates returning \code{No solution}, showing that 
    the net is 1-safe if $0 \leq a < 4$.  
\end{example}

The following  result shows that if the set of reachable state classes in the symbolic
semantics of $\PN$ (see \cite{EAGPLP13}) is finite, then so  is the set of 
 reachable symbolic states using the new folding technique.

 \begin{corollary}\label{corollary:term}
For any PITPN $\PN$ and state class $(M,D)$,
if the transition system $(\mathcal{C}, (M,D), \Fleche{})$ is finite,
then 
so  is 
$\left(T_{\Sigma , \texttt{State}}, \code{init}(\PN,M,D), \symredif\right)$.
 \end{corollary}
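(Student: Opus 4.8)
The plan is to bound the number of symbolic states explored by $\symredif$ by the number of reachable state classes of $\PN$, using the exactness of the folding relation $\preceq$ established in \Cref{th:folding}. In $\rtheorySymNF{1}$ a rewrite step is enabled only when its candidate successor $\phi_r \parallel r$ is \emph{not} subsumed by the accumulated visited set $S$; hence a subsumed successor is never produced, and every term reachable by $\symredif$ has a ``current'' constrained-term component that is not subsumed by any previously visited state. Since $S$ grows monotonically along any trace, it suffices to show that only finitely many such non-subsumed current components can arise: the reachable set then consists of these finitely many current components paired with the finitely many values the visited set $S$ can take.

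The first key step is to show that these current components have pairwise distinct denotations. By \Cref{th:folding}, for two symbolic states of the same PITPN we have $U \preceq V$ iff $\os U \cs \subseteq \os V \cs$. Therefore, if $U$ has already been visited and a candidate $U'$ satisfies $\os U' \cs = \os U \cs$ (in particular $\os U' \cs \subseteq \os U \cs$), then $U' \preceq U$ holds and the step producing $U'$ is disabled. Consequently the current components of $\symredif$-reachable states have pairwise distinct denotations, and their number is bounded by the number of \emph{distinct} denotations $\os U \cs$ arising among the symbolic states reachable in $\rtheorySymN{1}$.

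It remains to show that only finitely many such denotations occur, and here I would exploit the connection to the state-class graph. Note that in a current component $TS \code{:} m_s \code{:} \mathit{clocks}_s \code{:} \encSym{\PN}$ the state term mentions only the \emph{current} marking and clock variables, so $\os U \cs$ is exactly the set of concrete configurations (current marking, current clock valuation, parameter valuation) that $U$ represents. By the soundness and completeness of the symbolic semantics (\Cref{th:sym-correct}), this set coincides with the set of concrete configurations reachable in $\rtheorySemN{1}$ along matching traces, which, by the correspondence of \cite{EAGPLP13} between concrete runs and state classes, is precisely the configuration set associated with one reachable state class of $(\mathcal{C}, (M,D), \Fleche{})$. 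Since $\mathcal{C}$ is finite by hypothesis, there are only finitely many such configuration sets, hence only finitely many denotations arise. Combined with the previous paragraph this bounds the number of reachable current components, and therefore $\left(T_{\Sigma , \texttt{State}}, \code{init}(\PN,M,D), \symredif\right)$ is finite.

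The main obstacle is the last step: making the correspondence between the Maude symbolic states and the state classes of \cite{EAGPLP13} precise and denotation-preserving. The two formalisms track time differently --- $\rtheorySymN{1}$ carries explicit per-transition clocks and accumulates fresh \code{tick} variables, whereas a state class carries a firing domain from which the elapsed-time variables have been projected out --- so I would have to verify that the projection $\exists(\projnow{U}{})$ underlying $\preceq$ (\Cref{def:rel}) quotients away exactly the auxiliary \code{tick} and historical variables and leaves a constraint over the current clocks, markings, and parameters that matches, up to the renaming $\sigma$ of \Cref{def:rel}, the information of the associated state class. Once this identification is in place, the finiteness of $\mathcal{C}$ transfers directly and the remainder is the bookkeeping sketched above.
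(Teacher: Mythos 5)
Your proposal follows essentially the same route as the paper's proof: both reduce finiteness of the folded transition system to finiteness of the state-class graph via the exactness of $\preceq$ (Theorem~\ref{th:folding}) together with Theorem~\ref{th:sym-correct}, and both leave the identification of symbolic denotations with state classes at the same informal level (the paper simply asserts that ``after each transition, more concrete different states are found,'' so your closing worry is legitimate but does not put you behind the paper). The only structural difference is direction: the paper argues contrapositively, using finite branching and K\"{o}nig's lemma to extract from an infinite reachable set an infinite path $U_0 \symredif U_1 \symredif \cdots$ with $U_j \not\preceq U_i$ for $i<j$, hence $\os U_j\cs \not\subseteq \os U_i\cs$, hence infinitely many state classes; you run the same argument forwards as a depth bound.

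One step of your write-up is stated too strongly. The claim that ``the current components of $\symredif$-reachable states have pairwise distinct denotations'' is only justified---and only needed---\emph{along a single trace}: the subsumption check compares a candidate successor against the visited set $S$ of the state it is generated from, so two states on incomparable branches are never compared against each other and may well have equal denotations (with different fresh variables), making the global count of reachable current components potentially larger than the number of denotations. What your argument actually yields is that every path has length bounded by the number of distinct denotations; to conclude that the whole reachable set is finite you must additionally invoke finite branching of $\symredif$ (finitely many rules and finitely many matches per state), exactly the hypothesis the paper uses before applying K\"{o}nig's lemma. With that one-line repair your argument closes and is equivalent to the paper's.
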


It is worth noting that 
the new folding relation in Def.~\ref{def:rel} and Theorem~\ref{th:folding}
is applicable to any rewrite theory $\mathcal{R}$ that satisfies the requirements for 
rewriting with SMT~\cite{rocha-rewsmtjlamp-2017}, briefly explained in Sec.~\ref{sec:rew-smt}.

\section{Parameter Synthesis and Symbolic Model
  Checking}\label{sec:analysis}

This section  shows how Maude-SE
can be used for a wide
range of formal analyses beyond reachability analysis. We show how to use Maude-SE for solving
parameter synthesis problems,  model checking the classes of
non-nested timed temporal logic properties supported by the state-of-the-art
PITPN tool \romeo{},   reasoning with
parametric initial states where the number of tokens in the different
places is not known, and analyzing  nets with  user-defined execution
strategies. We thereby provide analysis methods
that go beyond those supported by \romeo{}, while
supporting almost all forms of analysis provided by \romeo{}.

\subsection{Parameter Synthesis}

A \emph{state predicate} is a boolean expression whose atomic propositions
include tests on the values of
markings (e.g., \code{k-safe(1,$\,m$)}) and clocks
(e.g., $c_1\, \code{<}\, c_2$).    \emph{\EF{}-synthesis} is the problem of computing
parameter values $\pi$ such that there exists a run of $\pi(\PN)$ that reaches a state
satisfying a given state predicate $\phi$. The \emph{safety synthesis problem}
\AG{}$\neg\phi$  is the problem of computing the  parameter values
for which states satisfying $\phi$ are unreachable.

\lstinline{search} in the theory  $\rtheorySymNF{1}$ (see
Section~\ref{subsec-folding}) provides
semi-decision procedures for
solving these parameter synthesis problems (which are
undecidable in general).
As illustrated below, the resulting constraint computed by
\lstinline{search}  can be used to 
synthesize the parameter values that allow such execution paths.
The safety synthesis problem \AG{}$\neg\phi$
  can be solved by finding all  solutions for \EF{}$\phi$
and then negating the resulting constraint.

\begin{example}\label{ex:analysis1}
  \Cref{ex:producers} shows an \EF-synthesis problem: find values
  for the parameter $a$ such that a state with at least two tokens in
  \emph{some} place can be reached. If $\phi = 0 \leq a$, the command
  \begin{maude}
search [1] init($net$, $m_0$, $\phi$) =>* S : PHI' $\parallel $ ( TICK : M : CLOCKS : NET ) 
   such that smtCheck(PHI' and not k-safe(1,M)) .
\end{maude}
returns one solution and the resulting constraint $\phi'$, instantiating 
the pattern \code{PHI'}, 
  can be used 
  to extract the parameter values as follows.
Let $X$ be the set
  of SMT variables in $\phi'$ \emph{not}  representing parameters.
A
  call to the quantifier elimination procedure
(\code{qe}) of the SMT solver Z3 on the formula $\exists X. \phi'$ reduces to
    $\code{a:Real >= 4/1}$, giving us the desired values for the
    parameter $a$.   \end{example}

  To solve the safety synthesis problem 
\AG{}$\neg\phi$, we have used Maude's 
   meta-programming facilities~\cite{maude-book} to 
  implement a command 
\lstinline[mathescape]{safety-syn($\mathit{net}$,$\,m_0$,$\,\phi_0$,$\,\phi$)}
where $m_0$ is a marking, $\phi_0$ a constraint on the parameters
and $\phi$ a constraint involving the variables \code{M} and \code{CLOCKS}
as in the \lstinline{search} command in Example~\ref{ex:analysis1}. This command 
iteratively calls  \lstinline{search}   to find a state reachable from
$m_0$, with initial constraint $\phi_0$, 
   where $\phi$ does not hold. If such state is found,
   with accumulated constraint $\phi'$,
   the \lstinline{search} command is invoked again with initial
   constraint $\phi_0 \wedge \neg \phi'$. This process stops
   when no more reachable states where $\phi$ does not hold are found,
   thus 
solving the
   \AG{}$\neg\phi$ synthesis
   problem.

  \begin{example}  \label{ex:analysis2}
    Consider the PITPN in Fig.~\ref{fig:ex-scheduling}, taken from
    \cite{DBLP:journals/jucs/TraonouezLR09}, with a parameter $a$ and three parametric transitions
    with intervals $[a:a]$, $[2a:2a],$ and $[3a:3a]$.
    \romeo{} can synthesize the values of the parameter $a$
    making the net 1-safe, subject to  initial constraint
    $30 \leq a \leq 70$.
    The same query can be answered
    in Maude:
    \begin{maude}
safety-syn($\mathit{net}$, $m_0$, a:Real >= 30/1 and a:Real <= 70/1, k-safe(1,M)) . \end{maude}
The first counterexample found assumes 
    that $a \leq 48$. If  $a>48$, \lstinline{search}
    does not find any  counterexample. This is the same answer
    that  \romeo{} found. \end{example}

\romeo{} only supports properties
over markings. The state predicates
in the commands above can include also conditions on
the clock values.

Our  symbolic theories  allow for parameters
(variables of sort \code{Integer})  in the initial marking. This opens up the
possibility of using Maude-SE to solve synthesis problems involving
parametric initial markings.  For instance, we can determine the initial markings that
make the net k-safe and/or alive:

\begin{example}\label{ex:pmarking}   Consider
     a \emph{parametric} initial marking $m_s$ for the net in
     \Cref{fig:producers},  with parameters $x_1$, $x_2$, and $x_3$
     denoting the number of tokens in
    places $p_1$, $p_2$,  and $p_3$,  respectively, and the initial
    constraint $\phi_0$ stating 
    that $a\geq 0$ and $ 0 \leq  x_i \leq 1$. The execution of the command
    \lstinline{safety-syn}($net, m_s, \phi_0, \texttt{k-safe(1,M)}$) 
    determines  that the net is 1-safe when $x_1=x_3=0$
    and $0\leq x_2 \leq 1$. 

\end{example}

\paragraph{Analysis with strategies.}
Maude's strategy facilities \cite{maude-manual} allow us to  analyze 
PITPNs whose executions  follow  some user-defined 
strategy. As exemplified below, such strategies may affect the 
outcome of parameter synthesis analysis. 
\begin{example}
We execute the net in  Fig.~\ref{fig:producers}
 with the following strategy \lstinline{t3-first}: 
 whenever transition $t_3$  and some other
 transition are enabled
 at the same time, then $t_3$  fires first. 
This execution strategy can be specified as follows: 
 \begin{maude}
t3-first := ($\;$applyTransition[$\,$L <- "t3"$\,$] or-else all )!\end{maude}
Running 
\lstinline[mathescape]{srew init($net$, $m_0$, $a\geq0$) using t3-first} in
 $\rtheorySymNF{1}$ finds all  symbolic  states  reachable with
 this strategy, and all of them are 1-safe. Therefore,  all parameter values $a
 \geq 0$ guarantee the desired property with this
 execution strategy. \end{example}
 \subsection{Analyzing Temporal Properties}\label{sec:model-checking}

This section shows how Maude-SE can be used to analyze the
 temporal properties supported by  \romeo{}~\cite{romeo},
 albeit in a few 
 cases without parametric bounds in the temporal formulas.  \romeo{} can analyze the following temporal properties:
\[
    \textbf{Q}\, \phi\,\CTLU_J\,\psi \;\mid\; \textbf{Q} \CTLF_J\,\phi \;\mid\;
\textbf{Q} \CTLG_J \,\phi \;\mid\; \phi \rightsquigarrow_{\leq b} \psi
\]where $\textbf{Q}\in \{\exists, \forall\}$ is the existential/universal path
quantifier, $\phi$ and $\psi$ are \emph{state predicates} on \emph{markings},
and $J$ is a time
interval  $[a,b]$, where $a$ and/or $b$ can be parameters and $b$ can
be $\infty$. For example,
$\forall  \CTLF_{[a,b]}\,\phi$ says that
in \emph{each} path from the initial state,  a marking satisfying
$\phi$ is  reachable in some time
in  $[a,b]$.
The bounded  response $\phi
\rightsquigarrow_{\leq b} \psi $ denotes the formula
$\forall\CTLG(\phi \,\Longrightarrow\, 
\forall\CTLF_{[0,b]} \,\psi)$ (each $\phi$-marking \emph{must} be followed
by a $\psi$-marking  within time $b$).  

Since queries include time bounds, we use the theory 
$\rtheorySymNF{2}$,  and 
$\code{init}(net, m_0, \phi)$ will denote the term
$\code{empty}:\phi\parallel \mathtt{tickOk}\;\code{:}\;m_0\;\code{:}\;\code{initClocks}(net)\;\code{:}\;net\;\mathtt{@\,0/1}$. 

State predicates, including 
inequalities on markings and clocks,
and also a test
whether the global clock is in a given interval
are defined as follows:
\begin{maude}
ops _>=_ _>_ _<_ _<=_ _==_ : Place Integer -> $\highlight{Prop}$ .
ops _>=_ _>_ _<_ _<=_ _==_ : Clock Real -> $\highlight{Prop}$ .
op in-time : Interval -> Prop .
eq S : C || (TICK : M ; (P |-> N1) : CLOCKS : NET) @ G-CLOCK  $\highlight{|=}$  P >= N1'
  = $\highlight{smtCheck}$(C and N1 >= N1' ) . --- similarly for >, <=, < and ==
eq S : C || (TICK : M : CLOCKS : NET) @ G-CLOCK  $\highlight{|=}$  in-time INTERVAL
  = $\highlight{smtCheck}$(C and (G-CLOCK in INTERVAL )) .
\end{maude}

Atomic propositions (sort \code{Prop}) are evaluated on symbolic states represented
as constrained terms 
 $S:\phi\parallel t$. Since they may contain variables, 
a call to the SMT is needed to determine whether the  constraint
$\phi$ entails the  proposition. 

Some of the temporal formulas supported by \romeo{} can be easily verified
using the reachability commands presented in the previous section.
The property
$\exists\CTLF_{[a,b]}\,\psi$ can be verified using the command:
\begin{maude}
search [1] init($net,m_0,\phi$) =>* S : PHI' $\parallel $  TICK : M : CLOCKS : NET @ G-CLOCK
   such that $\mathit{STATE'}$ |= $\psi$ and G-CLOCK in [$a\:$:$\:b$] .
\end{maude}
where $\phi$ states that all  parameters are non-negative  numbers
and $\mathit{STATE'}$ is the expression to  the right of \texttt{=>*}.
$a$ and $b$ can be variables representing parameters to be synthesized;
 and
 $\psi$ can be an expression involving \code{CLOCKS}.
 For example, 
\begin{maude}
search [1] init($net,m_0,\phi$) =>*
 S' : PHI' $\parallel$ TICK : (M ; "p1"$\:$|->$\;$P1) : (CLOCKS ; "t2"$\:$->$\;$C2) : NET @ G-CLOCK
 such that $\mathit{STATE'}$ |= P1$\;$>$\;$1 and C2$\;$<$\;$2/1 and G-CLOCK in [$a$ : $b$] .
\end{maude}
checks whether it is possible to reach a marking,   in some time in
$[a,b]$, with more than one token 
in place $p_1$, when the value of the clock of transition $t_2$ is $<2$.

The dual property  $\forall \CTLG_{[a,b]} \:\phi$ can be
checked by analyzing 
  $\exists \CTLF_{[a,b]} \:\neg\,\phi$.

\begin{example}
    Consider the PITPN in \Cref{ex:analysis2} with (interval) parameter
    $\phi = 30\leq a\leq 70$. The property
    $\exists\CTLF_{[b,b]}(\neg {  \it 1\mhyphen safe} )$
    can be verified with the following command,
which determines that the parameter $b$ satisfies $60 \leq b \leq 96$.
\begin{maude}
search [1] init($net,m_0,\phi$)  =>*  S : PHI' $\parallel $ TICK : M : CLOCKS : NET @ G-CLOCK
  such that $\mathit{STATE'}$ |= b:Real >= 0/1 and (G-CLOCK in [b:Real$\:$:$\:$b:Real]) 
                      and not (k-safe(1,M)) .
\end{maude}
  \end{example}

The bounded response $\phi \rightsquigarrow_{\leq b} \psi $ formula
can  be verified using a simple theory transformation on
$\rtheorySymN{0}$ followed by
reachability analysis. The theory transformation
adds a new constructor for the sort \code{State}
to build terms of the form $\red{C_\phi}\texttt{ : }M\texttt{ : }{\mathit Clocks}\texttt{ : } {\mathit Net}$,
where $C_\phi$ is either \code{noClock} or \code{clock($\tau$)}; the
latter represents the  time ($\tau$) since 
a $\phi$-state was visited, without having been followed by a $\psi$-state. 
 The rewrite rules are adjusted to 
update this new component as follows. 
 The new \code{tick} rule  updates 
 \code{clock(T1)} to \code{clock(T1$\;$+$\;$T)} and leaves 
 \code{noClock} unchanged. 
The rule \code{applyTransition} is split into two rules: 
\begin{maude}
crl [applyTransition] : clock(T) : M ... => NEW-TP : M' ...
 if NEW-TP := if $\mathit{STATE'}$ |= $\psi$ then noClock else clock(T) fi /\ ...
crl [applyTransition] : noClock : M ... => NEW-TP : M' ...
 if NEW-TP := if $\mathit{STATE'}$ |= $\phi$ and not $\mathit{STATE'}$ |= $\psi$ 
    then clock(0/1) else noClock fi /\ ...
\end{maude}
In the first rule, if a $\psi$-state is encountered, the
new ``$\phi$-clock''  is reset 
to \code{noClock}. In the second rule, this ``$\phi$-clock'' starts running
if the new state satisfies $\phi$ but not $\psi$. 
The query $\phi \rightsquigarrow_{\leq b} \psi $ can be answered
by   searching for a state where a $\phi$-state has not been followed
by a $\psi$-state before the deadline $b$:
\begin{maude}
search [1] ...  =>*  S : PHI' $\parallel $ clock(T) : ... such that T > $b$ .
\end{maude}

Reachability analysis cannot be used to analyze the other properties
supported by \romeo{} ($\textbf{Q}\, \phi\,\CTLU_J\,\psi$, and 
$\forall \CTLF_J\,\phi$ and its dual 
$\exists \CTLG_J \,\phi$).  While developing a full SMT-based \emph{timed} temporal logic model
checker is future 
work,  we can  combine Maude's explicit-state model checker and
SMT solving to solve these (and many other) queries. On the positive side, and
beyond \romeo, we can use full LTL, and also allow conditions on
clocks in state propositions.

The timed temporal operators can be defined on top of the
(untimed) LTL temporal operators in Maude (\code{<>}, \code{[]} and \code{U}) :
\begin{maude}
op <_>_ :  Interval Prop -> Formula .
op _U__ : Prop Interval Prop -> Formula .
op [_]_ : Interval Prop -> Formula .
vars PR1 PR2 : Prop .
eq < INTERVAL > PR1 = <> (PR1 /\ in-time INTERVAL) .
eq PR1 U INTERVAL PR2 = PR1 U (PR2 /\ in-time INTERVAL) .
eq [ INTERVAL ] PR1 = ~ (< INTERVAL > (~ PR1)) .
\end{maude}

For this fragment of  non-nested timed temporal logic formulas, it is possible
to model check universal and existential quantified formulas as
follows:

\begin{maude}
op A-model-check : State Formula -> Bool .  --- Universal queries
op E-model-check : State Formula -> Bool .  --- Existential queries
eq A-model-check(STATE, F) = modelCheck(STATE, F) == true .
eq E-model-check(STATE, F) = modelCheck(STATE , ~ F) =/= true .
\end{maude}

\section{Benchmarks}\label{sec:benchmarks}

We have compared the performance of our Maude-with-SMT analysis with that of
\romeo{} (version 3.8.6) on three case studies. We compare
the time  it takes for different rewrite theories to solve the synthesis
problem \EF{($p > n$)} (i.e., place $p$ holds more than $n$ tokens), for different
places $p$ and $0 \leq n \leq 2$, and to check
whether the net is $1$-safe.
The  models used in our experiments are: the
\texttt{producer-consumer}~\cite{Wang1998} system in~\Cref{fig:producers}, the
\texttt{scheduling}~\cite{DBLP:journals/jucs/TraonouezLR09} system
in~\Cref{fig:ex-scheduling}, and the \texttt{tutorial} system
in~\Cref{fig:tutorial} taken from the \romeo{} website. The model
\texttt{tutorial} was modified to produce two tokens in transition 
\texttt{startOver}, thus leading to infinite behaviors. 
 The details of each model can be found
in~\Cref{tab:models}.

We ran all the experiments on a Dell Precision Tower 3430 with a processor Intel
Xeon E-2136 6-cores @ 3.3GHz, 64 GiB memory, and Ubuntu 20.04. Each
experiment was executed using Maude in combination with two different SMT
solvers: Yices and Z3. We use a timeout of 10 minutes. 

\Cref{fig:bench} shows the execution times of \romeo{} and Maude in
log-scale, for the three case studies. (The data for each experiment
can be found in \Cref{app:data}). 
Each point in the figures represents the
time taken by \romeo{} and Maude to analyze the properties \EF{($p >
n$)} and \AG($\neg$\! 1-safe). The execution of  $\rtheorySymN{1}$ outperforms $\rtheorySymN{0}$
in some cases and 
the use of Yices2 shows better times when compared to 
Z3. For negative queries (e.g., \EF{($p > 2$)} is false
for \code{scheduling} and \code{producer-consumer}), as expected, we have
timeouts for $\rtheorySymN{0}$ and $\rtheorySymN{1}$. In those cases, 
$\rtheorySymNF{1}$ completes the analysis before the timeout. 
Currently, Maude-SE supports existential quantified queries
only with Z3 and $\rtheorySymNF{1}$ can be only executed
with that SMT solver. In the near future, Maude-SE will integrate
the support for quantifiers in Yices2 and we expect a better performance for 
$\rtheorySymNF{1}$. We finally note that, in some reachability queries, 
Maude-SE outperforms \romeo{}.
More interestingly, our approach terminates in cases where \romeo{} does not. Our
results are proven valid when injecting them in the model and running \romeo{} with
these additional constraints. This phenomenon happens when the search order leads
\romeo{} in the exploration of an infinite branch with an unbounded marking.

\begin{table}
  \centering
  \begin{tabular}{lrrrr}
\toprule
\textbf{model} & \textbf{parameters} & \textbf{places} & \textbf{transitions} & \textbf{arcs} \\
\midrule
producer_consumer & 1 & 5 & 4 & 10 \\
scheduling & 3 & 6 & 9 & 15 \\
tutorial & 2 & 6 & 5 & 12 \\
\bottomrule
\end{tabular}
   \caption{Description of the models used in the benchmarks
  (\Cref{fig:bench}).\label{tab:models}}
\end{table}

\begin{figure}
    \centering
    \includegraphics[width=0.5\textwidth]{scheduling}
    \caption{Case study in \cite{DBLP:journals/jucs/TraonouezLR09}. \label{fig:ex-scheduling}}
\end{figure}

\begin{figure}
    \centering
    \includegraphics[width=.6\textwidth]{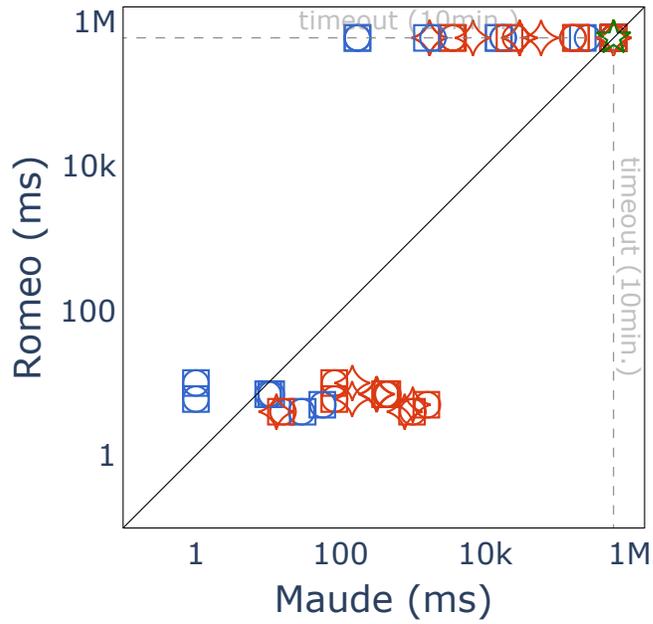}
    \caption{Case study tutorial. \label{fig:tutorial}}
\end{figure}
\begin{figure}[!htb]
  \centering
  \begin{subfigure}[b]{0.45\textwidth}
    \centering
    \includegraphics[width=\textwidth]{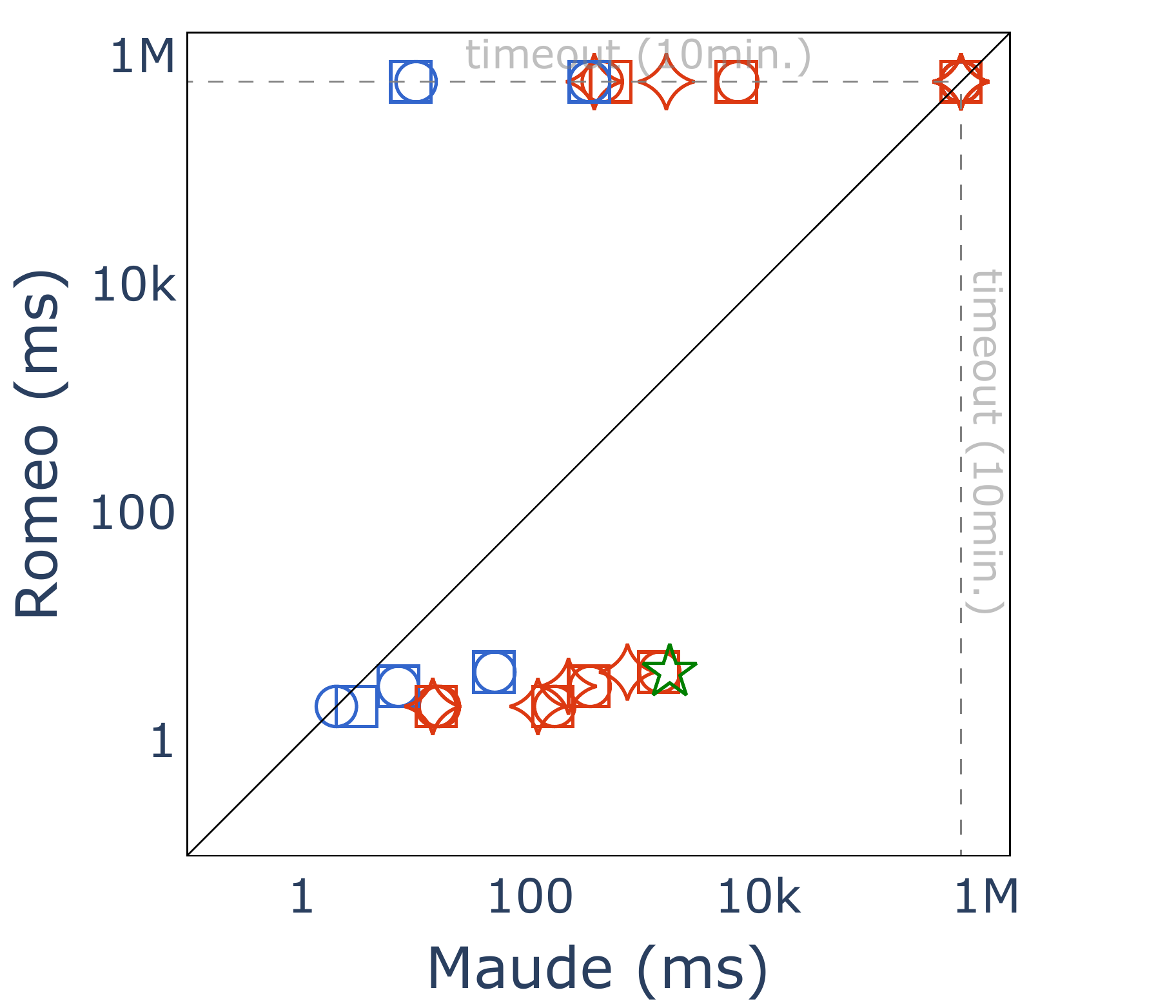}
    \caption{\texttt{producer-consumer}}
  \end{subfigure}
  \begin{subfigure}[b]{0.45\textwidth}
    \centering
    \includegraphics[width=\textwidth]{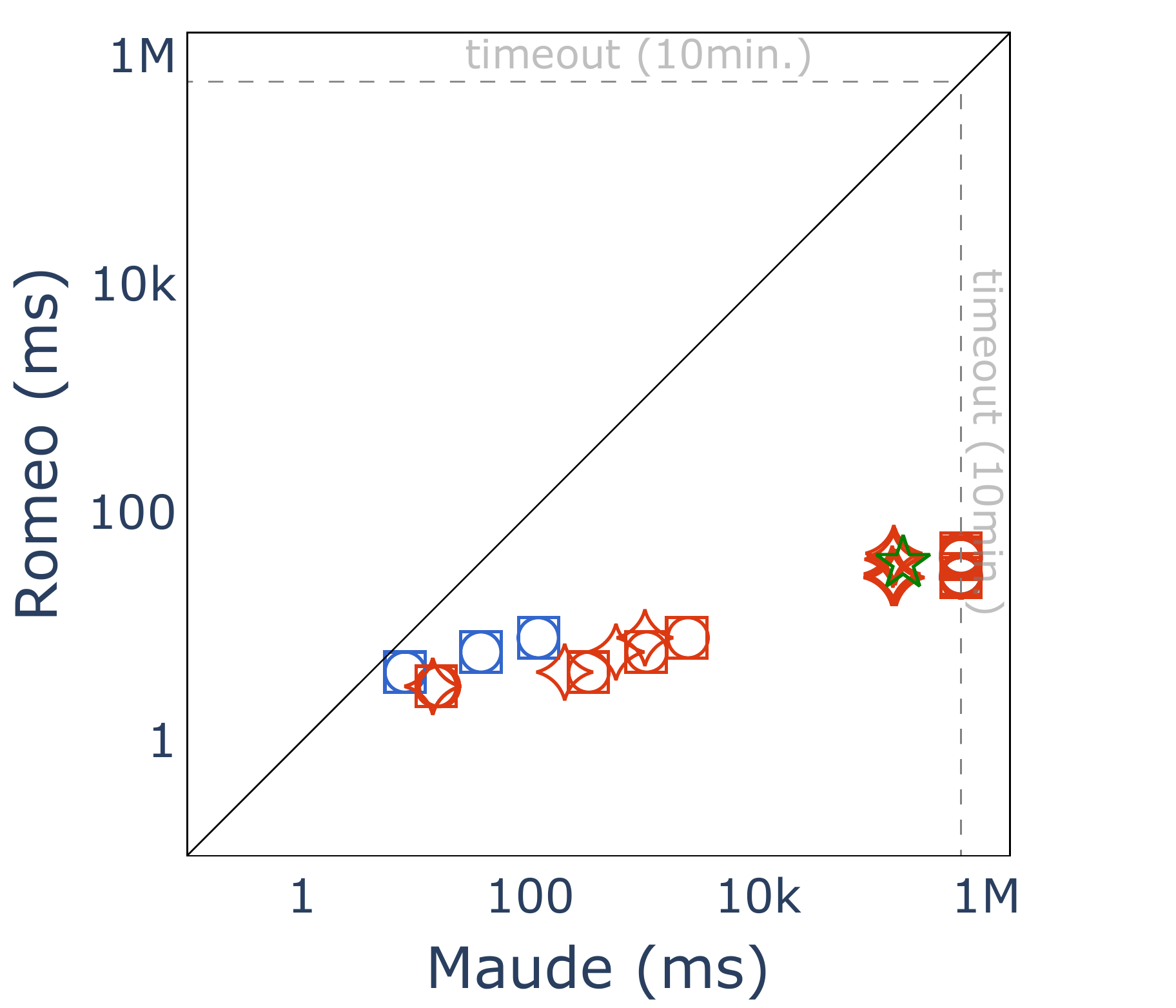}
    \caption{\texttt{scheduling}}
  \end{subfigure}
  \begin{subfigure}[b]{0.45\textwidth}
    \centering
    \includegraphics[width=\textwidth]{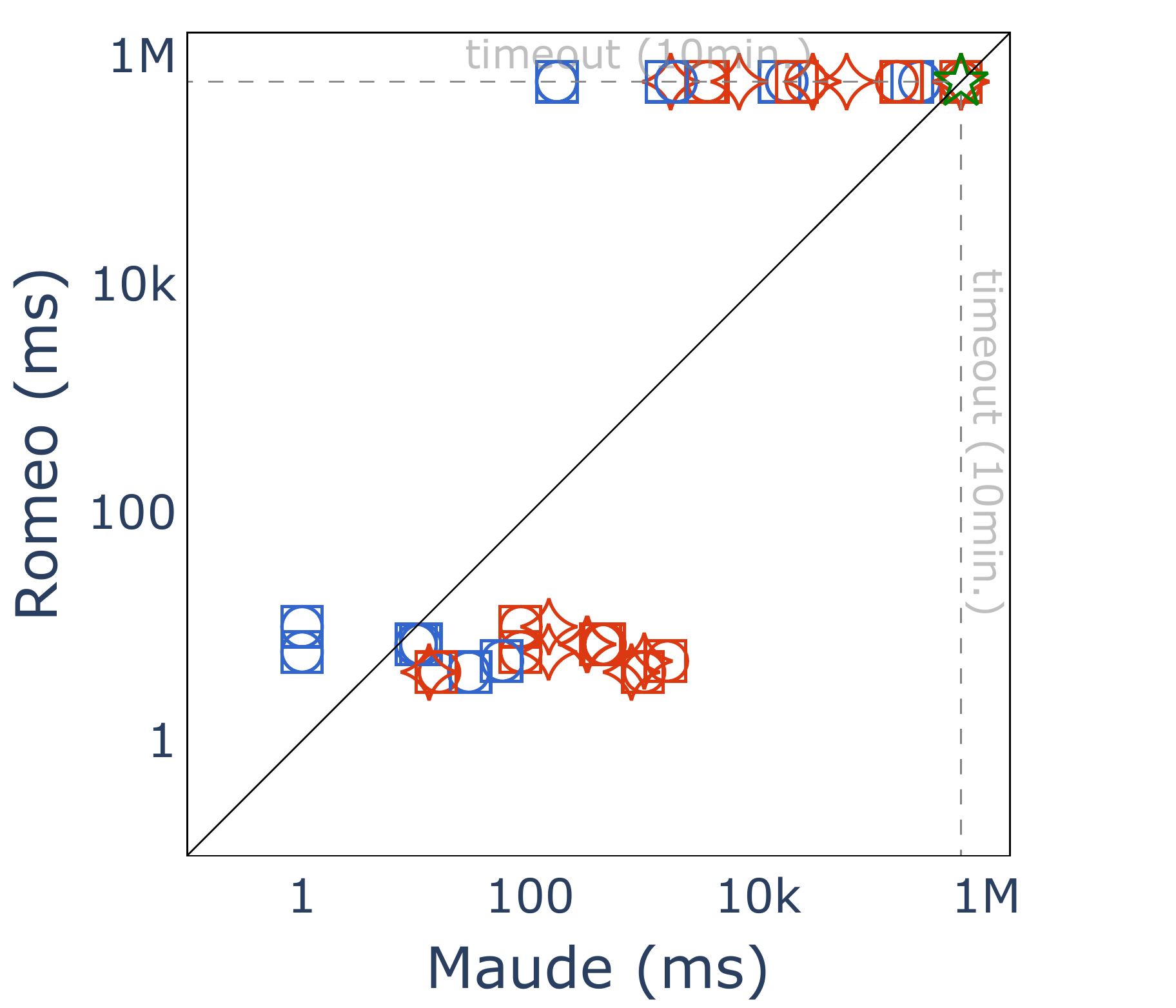}
    \caption{\texttt{tutorial}}
  \end{subfigure}
  \caption{Execution times for \romeo{} and Maude,  in log-scale.  Theory
  $\rtheorySymN{0}$ using Z3 (\symOneZ), theory $\rtheorySymN{0}$ using Yices
  (\symOneYices), theory $\rtheorySymN{1}$ using Z3 (\symTwoZ), theory
  $\rtheorySymN{1}$ using Yices (\symTwoYices), and $\rtheorySymNF{1}$  using Z3
  (\foldingTheory). Point \safeTheory{} is the property 1-safe.\label{fig:bench}}
\end{figure}

\section{Related Work}   \label{sec:related}

\paragraph{Tool support for parametric time Petri nets.}
We are not aware of any other tool for analyzing parametric time(d)
Petri nets than \romeo{}~\cite{romeo}. 

\paragraph{Petri nets in rewriting logic.} Formalizing
Petri nets algebraically~\cite{meseguer-montanari}  was one of the
inspirations behind rewriting  logic. Different kinds of Petri nets
are given a rewriting logic semantics in~\cite{petri-nets-in-maude}, 
and in~\cite{OlvMesTCS} for timed nets. In contrast to our paper, these papers 
focus on the semantics of such nets, and do not consider execution and
analysis;  nor do they consider inhibitor arcs or parameters.
Capra~\cite{capra22,capra22b}, Padberg and Schultz~\cite{padberg},  and Barbosa et
al.~\cite{barbosa} use  
Maude to formalize   dynamically 
reconfigurable Petri nets (with inhibitor arcs) and I/O
Petri nets.  In
contrast to our work, these papers target untimed and non-parametric nets, and
do not focus on formal analysis, but only show examples
of standard (explicit-state) \lstinline{search}  and LTL model
checking.  

\paragraph{Symbolic methods for real-time systems in Maude.}
We  develop a symbolic rewrite
semantics and analysis for parametric time automata (PTA)
in~\cite{ftscs22}.  The differences with the current paper include:
PTAs are very simple structures 
compared to PITPNs (with inhibitor arcs, no bounds on the number of
tokens in a state), so that the semantics of PITPNs is  more
sophisticated than the one for PTAs, which does not use
``structured'' states, equations, or user-defined functions; defining
a new rewrite theory for 
each PTA in~\cite{ftscs22} compared to having a single rewrite theory
for all nets in this work; obtaining desired symbolic reachability
properties using ``standard'' folding of symbolic states for PTAs compared to having to
develop a new folding mechanism  for PITPNs; analysis in \cite{ftscs22} do not include
model checking temporal logic formulas; and so on.

In addition, 
a variety of real-time systems 
have been formally analyzed using rewriting with SMT,
including 
PLC ST programs~\cite{lee2022bounded},
virtually synchronous cyber-physical systems~\cite{lee2022extension,hsaadl-sttt,lee2021hybrid},
and
soft agents~\cite{nigam2022automating}.
These papers differ from our work
in that 
they use 
guarded terms~\cite{bae2017guarded,bae2019symbolic}
for state-space reduction instead of folding, 
and 
do not consider parameter synthesis problems.

\section{Concluding Remarks}  \label{sec:concl}

We have provided a ``concrete'' rewriting logic semantics for PITPNs, and proved that this semantics is bisimilar to the semantics
of such nets in~\cite{paris-paper}. However, this model is
non-executable; furthermore, explicit-state Maude analysis using
Real-Time Maude-style ``time sampling'' leads to unsound analysis for
dense-time systems such as PITPNs. We therefore \emph{systematically
  transformed} this model  into a ``symbolic'' rewrite model which  is
amenable to  sound and complete symbolic  analysis using Maude
combined with SMT solving. 

We have shown how  almost all formal analysis and parameter synthesis
supported by the PITPN tool \romeo{} can be performed using
Maude-with-SMT. In addition, we have shown how Maude-with-SMT
can provide  additional capabilities for PITPNs, including synthesizing
 initial markings (and not just firing bounds) from \emph{parametric}
initial markings so that desired properties are satisfied,  full LTL
model checking, and analysis with user-defined execution
strategies.  We have developed  a new
``folding''  method for symbolic states, so that symbolic reachability
analysis using Maude-with-SMT terminates whenever  the corresponding
\romeo{} analysis terminates.

We have compared the performance of \romeo{} and our Maude-with-SMT methods on
a number of benchmarks, which show that Maude combined with the SMT solver
Yices in many cases outperforms \romeo, whereas Maude combined with Z3 is
significantly slower.  We also experienced that \romeo{} sometimes did not find
(existing) solutions and the output of some executions included the message
``\emph{maybe}'', showing that \romeo\ was computing an approximation. As
mentioned in \Cref{sec:benchmarks}, this can be caused by the search
exploration mechanism implemented in \romeo. 
Maude's search commands use a
breadth-first strategy, thus guaranteeing completeness (if a given state is
reachable, it will be eventually found). Moreover, operations on constraints are
delegated to state-of-the-art SMT solvers.
We also point out
that Maude's specifications are very close to their corresponding 
mathematical definitions. Hence, it is easier to check the correctness
of the implementation  and, together with Maude's meta-programming
features, it is easy to develop, test and evaluate different analysis
algorithms.

This paper has not only provided new features   for PITPNs. It has
also shown that even a 
model like our Real-Time Maude-inspired PITPN interpreter---with
functions, equations, and unbounded markings---can easily be turned
into a  symbolic rewrite theory for which Maude-with-SMT provides very 
useful sound and complete analyses even for dense-time systems.

In future work we should: implement the needed Maude-SE's bindings for quantifiers in Yices2,
thus improving the performance of analysis with $\rtheorySymNF{1}$;
extend Maude's LTL model checker to a full SMT-based (with folding) timed
LTL and CTL model checker, thus covering all the analysis provided  by
\romeo{}; develop a richer timed strategy language for controlling 
the executions of PITPNs; and explore theory transformations
for the sound and complete symbolic analysis of Real-Time Maude
specifications.

\appendix
\section{Data for the benchmarks}\label{app:data}
\begin{table}[!htb]
  \centering
  \rowcolors{4}{gray!25}{white}
  \resizebox{\columnwidth}{!}{\begin{tabular}{l|l|c|cc||cc||c}
  \toprule
  \multirow{3}{*}{\textbf{Model}} & \multirow{3}{*}{\textbf{Place reached}} & \multirow{3}{*}{\textbf{\romeo{} (ms)}} & \multicolumn{5}{c}{\textbf{Maude (ms)}}                                                                                                           \\
                                  &                                         &                                        & \multicolumn{2}{c||}{$\rtheorySymN{0}$} & \multicolumn{2}{c||}{$\rtheorySymN{1}$} & $\rtheorySymNF{1}$                                           \\
                                  &                                         &                                        & \textbf{Yices}                          & \textbf{Z3}                             & \textbf{Yices}                   & \textbf{Z3} & \textbf{Z3} \\
  \midrule
  tutorial                        & start                                   & 4.0                                    & \color{OliveGreen} \bfseries 0.0        & 16.0                                    & \color{OliveGreen} \bfseries 0.0 & 15.0        & 13.0        \\
  tutorial                        & childStart                              & 6.0                                    & \color{OliveGreen} \bfseries 1.0        & 83.0                                    & \color{OliveGreen} \bfseries 1.0 & 82.0        & 145.0       \\
  tutorial                        & fatherCont                              & 10.0                                   & \color{OliveGreen} \bfseries 1.0        & 83.0                                    & \color{OliveGreen} \bfseries 1.0 & 82.0        & 146.0       \\
  tutorial                        & childDone                               & \color{OliveGreen} \bfseries 7.0       & 10.0                                    & 424.0                                   & 10.0                             & 414.0       & 320.0       \\
  tutorial                        & fatherDone                              & \color{OliveGreen} \bfseries 7.0       & 11.0                                    & 453.0                                   & 11.0                             & 445.0       & 313.0       \\
  tutorial                        & joined                                  & \color{OliveGreen} \bfseries 4.0       & 29.0                                    & 1005.0                                  & 30.0                             & 977.0       & 773.0       \\
  producer_consumer               & itemReady                               & \color{OliveGreen} \bfseries 2.0       & \color{OliveGreen} \bfseries 2.0        & 164.0                                   & 3.0                              & 158.0       & 117.0       \\
  producer_consumer               & buffer                                  & \color{OliveGreen} \bfseries 3.0       & 7.0                                     & 336.0                                   & 7.0                              & 327.0       & 217.0       \\
  producer_consumer               & itemReceived                            & \color{BrickRed} \bfseries TO          & 10.0                                    & 429.0                                   & \color{OliveGreen} \bfseries 9.0 & 509.0       & 365.0       \\
  producer_consumer               & readyConsumer                           & 2.0                                    & \color{OliveGreen} \bfseries 0.0        & 16.0                                    & \color{OliveGreen} \bfseries 0.0 & 15.0        & 14.0        \\
  producer_consumer               & readyProducer                           & 2.0                                    & \color{OliveGreen} \bfseries 0.0        & 15.0                                    & \color{OliveGreen} \bfseries 0.0 & 15.0        & 14.0        \\
  scheduling                      & ready1                                  & 3.0                                    & \color{OliveGreen} \bfseries 0.0        & 16.0                                    & \color{OliveGreen} \bfseries 0.0 & 15.0        & 14.0        \\
  scheduling                      & ready2                                  & 3.0                                    & \color{OliveGreen} \bfseries 0.0        & 16.0                                    & \color{OliveGreen} \bfseries 0.0 & 15.0        & 14.0        \\
  scheduling                      & ready3                                  & 3.0                                    & \color{OliveGreen} \bfseries 0.0        & 15.0                                    & \color{OliveGreen} \bfseries 0.0 & 15.0        & 14.0        \\
  scheduling                      & ending1                                 & \color{OliveGreen} \bfseries 4.0       & 8.0                                     & 327.0                                   & 8.0                              & 324.0       & 201.0       \\
  scheduling                      & ending2                                 & \color{OliveGreen} \bfseries 6.0       & 37.0                                    & 1054.0                                  & 37.0                             & 1039.0      & 566.0       \\
  scheduling                      & ending3                                 & \color{OliveGreen} \bfseries 8.0       & 118.0                                   & 2422.0                                  & 119.0                            & 2364.0      & 1015.0      \\
  \bottomrule
\end{tabular}
 }
  \caption{times for \EF{($p > 0$)}}
\end{table}

\begin{table}[!htb]
  \centering
  \rowcolors{4}{gray!25}{white}
  \resizebox{\columnwidth}{!}{\begin{tabular}{l|l|c|cc||cc||c}
  \toprule
  \multirow{3}{*}{\textbf{Model}} & \multirow{3}{*}{\textbf{Place reached}} & \multirow{3}{*}{\textbf{\romeo{} (ms)}} &                                                                                                                    \multicolumn{5}{c}{\textbf{Maude (ms)}}                                                                                                                                                        \\
                                  &                                         &                                        & \multicolumn{2}{c||}{$\rtheorySymN{0}$} & \multicolumn{2}{c||}{$\rtheorySymN{1}$} & $\rtheorySymNF{1}$                                                                                        \\
                                  &                                         &                                        & \textbf{Yices}                          & \textbf{Z3}                             & \textbf{Yices}                     & \textbf{Z3}                   & \textbf{Z3}                          \\
  \midrule
  tutorial                        & start                                   & \color{OliveGreen} \bfseries 5.0       & 57.0                                    & 1594.0                                  & 56.0                               & 1541.0                        & 1001.0                               \\
  tutorial                        & childStart                              & \color{BrickRed} \bfseries TO          & \color{OliveGreen} \bfseries 172.0      & 3584.0                                  & \color{OliveGreen} \bfseries 172.0 & 3459.0                        & 1705.0                               \\
  tutorial                        & fatherCont                              & \color{BrickRed} \bfseries TO          & \color{OliveGreen} \bfseries 171.0      & 3604.0                                  & 172.0                              & 3608.0                        & 1710.0                               \\
  tutorial                        & childDone                               & \color{BrickRed} \bfseries TO          & \color{BrickRed} \bfseries TO           & \color{BrickRed} \bfseries TO           & \color{BrickRed} \bfseries TO      & \color{BrickRed} \bfseries TO & \color{BrickRed} \bfseries TO        \\
  tutorial                        & fatherDone                              & \color{BrickRed} \bfseries TO          & 262316.0                                & \color{BrickRed} \bfseries TO           & 224865.0                           & \color{BrickRed} \bfseries TO & \color{OliveGreen} \bfseries 59594.0 \\
  tutorial                        & joined                                  & \color{BrickRed} \bfseries TO          & \color{BrickRed} \bfseries TO           & \color{BrickRed} \bfseries TO           & \color{BrickRed} \bfseries TO      & \color{BrickRed} \bfseries TO & \color{BrickRed} \bfseries TO        \\
  producer_consumer               & itemReady                               & \color{BrickRed} \bfseries TO          & \color{BrickRed} \bfseries TO           & \color{BrickRed} \bfseries TO           & \color{BrickRed} \bfseries TO      & \color{BrickRed} \bfseries TO & \color{BrickRed} \bfseries TO        \\
  producer_consumer               & buffer                                  & \color{OliveGreen} \bfseries 4.0       & 49.0                                    & 1368.0                                  & 48.0                               & 1338.0                        & 713.0                                \\
  producer_consumer               & itemReceived                            & \color{BrickRed} \bfseries TO          & \color{BrickRed} \bfseries TO           & \color{BrickRed} \bfseries TO           & \color{BrickRed} \bfseries TO      & \color{BrickRed} \bfseries TO & \color{BrickRed} \bfseries TO        \\
  producer_consumer               & readyConsumer                           & \color{BrickRed} \bfseries TO          & \color{BrickRed} \bfseries TO           & \color{BrickRed} \bfseries TO           & \color{BrickRed} \bfseries TO      & \color{BrickRed} \bfseries TO & \color{BrickRed} \bfseries TO        \\
  producer_consumer               & readyProducer                           & \color{BrickRed} \bfseries TO          & \color{BrickRed} \bfseries TO           & \color{BrickRed} \bfseries TO           & \color{BrickRed} \bfseries TO      & \color{BrickRed} \bfseries TO & \color{BrickRed} \bfseries TO        \\
  scheduling                      & ready1                                  & \color{OliveGreen} \bfseries 27.0      & \color{BrickRed} \bfseries TO           & \color{BrickRed} \bfseries TO           & \color{BrickRed} \bfseries TO      & \color{BrickRed} \bfseries TO & 154670.0                             \\
  scheduling                      & ready2                                  & \color{OliveGreen} \bfseries 27.0      & \color{BrickRed} \bfseries TO           & \color{BrickRed} \bfseries TO           & \color{BrickRed} \bfseries TO      & \color{BrickRed} \bfseries TO & 161968.0                             \\
  scheduling                      & ready3                                  & \color{OliveGreen} \bfseries 27.0      & \color{BrickRed} \bfseries TO           & \color{BrickRed} \bfseries TO           & \color{BrickRed} \bfseries TO      & \color{BrickRed} \bfseries TO & 150771.0                             \\
  scheduling                      & ending1                                 & \color{OliveGreen} \bfseries 41.0      & \color{BrickRed} \bfseries TO           & \color{BrickRed} \bfseries TO           & \color{BrickRed} \bfseries TO      & \color{BrickRed} \bfseries TO & 158852.0                             \\
  scheduling                      & ending2                                 & \color{OliveGreen} \bfseries 27.0      & \color{BrickRed} \bfseries TO           & \color{BrickRed} \bfseries TO           & \color{BrickRed} \bfseries TO      & \color{BrickRed} \bfseries TO & 161629.0                             \\
  scheduling                      & ending3                                 & \color{OliveGreen} \bfseries 27.0      & \color{BrickRed} \bfseries TO           & \color{BrickRed} \bfseries TO           & \color{BrickRed} \bfseries TO      & \color{BrickRed} \bfseries TO & 157329.0                             \\
  \bottomrule
\end{tabular}
 }
  \caption{times for \EF{($p > 1$)}}
\end{table}

\begin{table}[!htb]
  \centering
  \rowcolors{4}{gray!25}{white}
  \resizebox{\columnwidth}{!}{\begin{tabular}{l|l|c|cc||cc||c}
  \toprule
  \multirow{3}{*}{\textbf{Model}} & \multirow{3}{*}{\textbf{Place reached}} & \multirow{3}{*}{\textbf{\romeo{} (ms)}} & \multicolumn{5}{c}{\textbf{Maude (ms)}}                                                                                                                                                   \\
                                  &                                         &                                        & \multicolumn{2}{c||}{$\rtheorySymN{0}$} & \multicolumn{2}{c||}{$\rtheorySymN{1}$} & $\rtheorySymNF{1}$                                                                                   \\
                                  &                                         &                                        & \textbf{Yices}                          & \textbf{Z3}                             & \textbf{Yices}                       & \textbf{Z3}                   & \textbf{Z3}                   \\
  \midrule
  tutorial                        & start                                   & \color{BrickRed} \bfseries TO          & 1904.0                                  & 26445.0                                 & \color{OliveGreen} \bfseries 1572.0  & 21787.0                       & 6799.0                        \\
  tutorial                        & childStart                              & \color{BrickRed} \bfseries TO          & 17760.0                                 & 164884.0                                & \color{OliveGreen} \bfseries 15447.0 & 180210.0                      & 30043.0                       \\
  tutorial                        & fatherCont                              & \color{BrickRed} \bfseries TO          & 17813.0                                 & 164248.0                                & \color{OliveGreen} \bfseries 15432.0 & 181712.0                      & 30186.0                       \\
  tutorial                        & childDone                               & \color{BrickRed} \bfseries TO          & \color{BrickRed} \bfseries TO           & \color{BrickRed} \bfseries TO           & \color{BrickRed} \bfseries TO        & \color{BrickRed} \bfseries TO & \color{BrickRed} \bfseries TO \\
  tutorial                        & fatherDone                              & \color{BrickRed} \bfseries TO          & \color{BrickRed} \bfseries TO           & \color{BrickRed} \bfseries TO           & \color{BrickRed} \bfseries TO        & \color{BrickRed} \bfseries TO & \color{BrickRed} \bfseries TO \\
  tutorial                        & joined                                  & \color{BrickRed} \bfseries TO          & \color{BrickRed} \bfseries TO           & \color{BrickRed} \bfseries TO           & \color{BrickRed} \bfseries TO        & \color{BrickRed} \bfseries TO & \color{BrickRed} \bfseries TO \\
  producer_consumer               & itemReady                               & \color{BrickRed} \bfseries TO          & \color{BrickRed} \bfseries TO           & \color{BrickRed} \bfseries TO           & \color{BrickRed} \bfseries TO        & \color{BrickRed} \bfseries TO & \color{BrickRed} \bfseries TO \\
  producer_consumer               & buffer                                  & \color{BrickRed} \bfseries TO          & 333.0                                   & 6618.0                                  & \color{OliveGreen} \bfseries 331.0   & 6419.0                        & 1564.0                        \\
  producer_consumer               & itemReceived                            & \color{BrickRed} \bfseries TO          & \color{BrickRed} \bfseries TO           & \color{BrickRed} \bfseries TO           & \color{BrickRed} \bfseries TO        & \color{BrickRed} \bfseries TO & \color{BrickRed} \bfseries TO \\
  producer_consumer               & readyConsumer                           & \color{BrickRed} \bfseries TO          & \color{BrickRed} \bfseries TO           & \color{BrickRed} \bfseries TO           & \color{BrickRed} \bfseries TO        & \color{BrickRed} \bfseries TO & \color{BrickRed} \bfseries TO \\
  producer_consumer               & readyProducer                           & \color{BrickRed} \bfseries TO          & \color{BrickRed} \bfseries TO           & \color{BrickRed} \bfseries TO           & \color{BrickRed} \bfseries TO        & \color{BrickRed} \bfseries TO & \color{BrickRed} \bfseries TO \\
  scheduling                      & ready1                                  & \color{OliveGreen} \bfseries 44.0      & \color{BrickRed} \bfseries TO           & \color{BrickRed} \bfseries TO           & \color{BrickRed} \bfseries TO        & \color{BrickRed} \bfseries TO & 154580.0                      \\
  scheduling                      & ready2                                  & \color{OliveGreen} \bfseries 27.0      & \color{BrickRed} \bfseries TO           & \color{BrickRed} \bfseries TO           & \color{BrickRed} \bfseries TO        & \color{BrickRed} \bfseries TO & 157433.0                      \\
  scheduling                      & ready3                                  & \color{OliveGreen} \bfseries 27.0      & \color{BrickRed} \bfseries TO           & \color{BrickRed} \bfseries TO           & \color{BrickRed} \bfseries TO        & \color{BrickRed} \bfseries TO & 157285.0                      \\
  scheduling                      & ending1                                 & \color{OliveGreen} \bfseries 29.0      & \color{BrickRed} \bfseries TO           & \color{BrickRed} \bfseries TO           & \color{BrickRed} \bfseries TO        & \color{BrickRed} \bfseries TO & 149607.0                      \\
  scheduling                      & ending2                                 & \color{OliveGreen} \bfseries 27.0      & \color{BrickRed} \bfseries TO           & \color{BrickRed} \bfseries TO           & \color{BrickRed} \bfseries TO        & \color{BrickRed} \bfseries TO & 148758.0                      \\
  scheduling                      & ending3                                 & \color{OliveGreen} \bfseries 39.0      & \color{BrickRed} \bfseries TO           & \color{BrickRed} \bfseries TO           & \color{BrickRed} \bfseries TO        & \color{BrickRed} \bfseries TO & 151563.0                      \\
  \bottomrule
\end{tabular}
 }
  \caption{times for \EF{($p > 2$)}}
\end{table}

\begin{table}[!htb]
  \centering
  \rowcolors{2}{gray!25}{white}
  \begin{tabular}{lrr}
  \toprule
  \textbf{Model}    & \textbf{\romeo{} (ms)}             &
                                                           \textbf{Maude (ms)}            \\
  \midrule
  tutorial          & \color{BrickRed} \bfseries TO     & \color{BrickRed} \bfseries TO \\
  producer_consumer & \color{OliveGreen} \bfseries 4.0  & 1676.0                        \\
  scheduling        & \color{OliveGreen} \bfseries 36.0 & 186624.0                      \\
  \bottomrule
\end{tabular}
   \caption{1-safe time verification}
\end{table}

\section{Proofs of the Results}

\subsection{Proof of Theorem \ref{thm:ground}}

\begin{proof}
  (i) By definition $a_0 = (\marking_0,\parIntervalStatic) \approx (\encBase{\marking_0}
  \;\code{:}\;\mathtt{initClocks}(\encBase{\PN})\;\code{:}\;\encBase{\PN})$,
  since all clocks are \texttt{0} in $\mathtt{initClocks}(...)$, so
  that these clocks satisfy all the constraints in
  Definition~\ref{def:bisim-relation} since  $I=J$ in the initial state. 
  (ii) Follows from the following two lemmas. 
\end{proof}

\begin{lemma}
If $\left(\marking,\interval\right) \to
\left(\marking',\interval'\right)$
and $\left(\marking,\interval\right) \approx
(\encBase{\marking}\;\code{:}\;\mathit{clocks}\;\code{:}\;\encBase{\PN})$
then there is a $\mathit{clocks'}$ such that
$(\encBase{\marking}
\;\code{:}\;\mathit{clocks}\;\code{:}\;\encBase{\PN}) 
\mapsto (\encBase{\marking'}\;\code{:}\;\mathit{clocks'}
\;\code{:}\;\encBase{\PN})$
and $\left(\marking',\interval'\right) \approx
(\encBase{\marking'}\;\code{:}\;\mathit{clocks'}\;\code{:}\;\encBase{\PN})$.
\end{lemma}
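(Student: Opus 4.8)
The plan is to exploit the two-phase structure of a concrete step. By Definition~\ref{def:pnet-semantics}, $(\marking,\interval)\to(\marking',\interval')$ splits as a time elapse $(\marking,\interval)\fleche{\delta}(\marking,\interval'')$ (the marking is unchanged by time) followed by a discrete firing $(\marking,\interval'')\fleche{\transition_f}(\marking',\interval')$, and $\mapsto$ is likewise a \code{tick} step followed by an \code{applyTransition} step. So I would construct the matching Maude derivation by firing \code{tick} with the time-advance variable \code{T} instantiated to $\delta$, obtaining an intermediate state $\encBase{\marking}\;\code{:}\;\mathit{clocks}_{\mathit{mid}}\;\code{:}\;\encBase{\PN}$, and then \code{applyTransition} on $\transition_f$. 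The backbone of every check is the invariant, implicit in $\approx$, that for an enabled $\transition$ with finite static upper bound one has $\rightEP{\interval(\transition)}=\rightEP{\parIntervalStatic(\transition)}-c(\transition)$ and $\leftEP{\interval(\transition)}=\max(0,\leftEP{\parIntervalStatic(\transition)}-c(\transition))$, where $c(\transition)$ denotes the clock of $\transition$; I would first record (a short $\max$-computation) that this invariant is consistent with the time-elapse rule.

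For the \code{tick} step I must discharge its guard $\code{T}\le\code{mte}(\ldots)$. I would show that \code{mte} evaluates to $\min\{\rightEP{\interval(\transition)}\mid \transition \text{ active in }\marking\}$ (inhibited and $\infty$-bounded transitions contribute nothing, since the summand $\rightEP{\parIntervalStatic}-c$ equals $\rightEP{\interval}$), and that the concrete deadline side-condition $\marking\ge\relPre{\transition}\Rightarrow\rightEP{\interval''(\transition)}\ge 0$ forces $\delta\le\rightEP{\interval(\transition)}$ for every active $\transition$; hence $\delta\le\code{mte}$. Then I would verify $(\marking,\interval'')\approx(\encBase{\marking}\;\code{:}\;\mathit{clocks}_{\mathit{mid}}\;\code{:}\;\encBase{\PN})$ by cases: disabled clocks stay $0$; enabled-and-inhibited clocks and intervals are both left unchanged (\code{increaseClocks} advances only \emph{active} clocks, matching $\interval''=\interval$); and active clocks become $c+\delta$, which I match against $\rightEP{\interval''}=\rightEP{\interval}-\delta$ in the finite case, and against the two sub-cases of $\approx$ when $\rightEP{\parIntervalStatic}=\infty$.

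For the \code{applyTransition} step I must discharge $\code{active}(M,\ldots)$ and $\code{T in INTERVAL}$. Enabledness and non-inhibition of $\transition_f$ come directly from the concrete firing condition; firability $\leftEP{\interval''(\transition_f)}=0$ translates, via the invariant, into $c(\transition_f)\ge\leftEP{\parIntervalStatic(\transition_f)}$ (the lower-endpoint test), while $\rightEP{\interval''(\transition_f)}\ge 0$ gives $c(\transition_f)\le\rightEP{\parIntervalStatic(\transition_f)}$ (vacuous when that bound is $\infty$). Finally I would establish $(\marking',\interval')\approx(\encBase{\marking'}\;\code{:}\;\mathit{clocks}'\;\code{:}\;\encBase{\PN})$: marking agreement is the homomorphism property $\encBase{\marking-\relPre{\transition_f}+\relPost{\transition_f}}=(\encBase{\marking}-\encBase{\relPre{\transition_f}})+\encBase{\relPost{\transition_f}}$, and clock agreement rests on the key observation that the test $\relPre{\transition}\le\marking-\relPre{\transition_f}$ used by \code{updateClocks} holds \emph{iff} $\transition$ is not newly enabled by $\transition_f$. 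Thus newly enabled transitions (and $\transition_f$ itself) are reset to $0$, matching $\interval'(\transition)=\parIntervalStatic(\transition)$; persistent transitions keep $\mathit{clocks}_{\mathit{mid}}(\transition)$, matching $\interval'(\transition)=\interval''(\transition)$; and disabled transitions get $0$, where I use $\marking'\ge\marking-\relPre{\transition_f}$ to exclude a spurious non-reset.

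The main obstacle I anticipate is the branch $\rightEP{\parIntervalStatic(\transition)}=\infty$, $\leftEP{\interval(\transition)}=0$ of $\approx$, where the relation permits the clock to be \emph{any} value $\ge\leftEP{\parIntervalStatic(\transition)}$: because $\approx$ is there a genuine relation rather than a function, I must check that the single value produced by the deterministic \code{tick}/\code{updateClocks} arithmetic still lands in this admissible range, both after the delay and after the firing. The remaining work is organisational, namely keeping the active/inhibited/disabled and newly-enabled/persistent case splits aligned across the two semantics so that each transition is treated in exactly one case.
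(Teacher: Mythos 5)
Your proposal follows essentially the same route as the paper's proof: decompose the concrete step into a time elapse followed by a discrete firing, instantiate the \code{tick} variable \code{T} with $\delta$ and discharge the \code{mte} guard by observing that \code{mte} is the least remaining time before some active transition's deadline, then discharge the \code{applyTransition} guard from firability of $\transition_f$. Your treatment is in fact more thorough than the paper's, which leaves the verification of $\approx$ at the intermediate and final states (in particular the \code{updateClocks}/\newlyEnabled{} correspondence and the $\infty$-upper-bound branch) implicit; no gap.
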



\begin{proof}
Since $\left(\marking,\interval\right) \to \left(\marking',\interval'\right)$, we 
have that there exists an intermediate pair $(\marking,\interval'') \in (\Transition\cup\grandrplus)$ such that
$\left(\marking,\interval\right) \fleche{\delta} \left(\marking,\interval''\right)$ and 
$\left(\marking,\interval''\right) \fleche{\transition_f} \left(\marking',\interval'\right)$.

\noindent
For the first step ($\fleche{\delta}$), since $\left(\marking,\interval\right) \fleche{\delta} \left(\marking,\interval''\right)$,
there exists a $\delta$ such that $\forall t \in \Transition$, either $\interval''(\transition) = 
\interval(\transition)$ or $\leftEP{\interval''(\transition)} = \max(0,\leftEP{\interval(\transition)} - \delta)$
and $\rightEP{\interval''(\transition)} = \rightEP{\interval(\transition)} - \delta$. In both cases
we have that $\forall \transition \in \Transition, \rightEP{\interval''(\transition)}\geq 0$.
Now, letting \code{T} $= \delta$, it must be the case that 
\code{T <= mte($\encBase{\marking}$, $clocks$, $\encBase{\PN}$)}. This is 
because \code{mte($\encBase{\marking}$, $clocks$, $\encBase{\PN}$)} is 
defined to be equal to the minimum difference between  
$\rightEP{\parIntervalStatic(\transition)}$ and the clock value of $t$ out of all
$\transition \in \Transition$. That is, it is the maximum time that can elapse before an 
enabled transition reaches the right endpoint of its interval. In other words, an upper limit
for $\delta$. Hence, the \code{tick}-rule can be applied to 
$(\encBase{\marking}\code{ :
}\mathit{clocks}\code{ : }\encBase{\PN})$ 
with all enabled clocks having their time advanced by $\delta$.

\noindent
For the second step ($\fleche{\transition_f}$), since $\left(\marking,\interval''\right) 
\fleche{\transition_f} \left(\marking',\interval'\right)$, the transition $\transition_f$
is active and $\leftEP{\interval''(\transition_f)} = 0$. Since
$\leftEP{\interval(\transition_f)} = 0$, 
the clock of transition $\transition_f$ must be in the interval $[\leftEP{\parIntervalStatic(\transition_f)},
\rightEP{\parIntervalStatic(\transition_f)}]$ by definition of $\mathit{clocks}$ for $(\marking,\interval'')$.
This is precisely the condition for applying the \code{applyTransition}-rule to the 
resulting state of the previous \code{tick}-rule application.
\end{proof}

\begin{lemma}
If
$(\encBase{\marking}\;\code{:}\;\mathit{clocks}\;\code{:}\;\encBase{\PN})
\mapsto b$  and \newline  $\left(\marking, \interval\right) \approx 
(\encBase{\marking}\;\code{:}\;\mathit{clocks}\;\code{:}\;\encBase{\PN})$, 
then there exists a state $\left(\marking', \interval'\right) \in \mathcal{A}$ 
such that $\left(\marking, \interval\right) \to \left(\marking', \interval'\right)$ and 
$\left(\marking', \interval'\right) \approx b$. \end{lemma}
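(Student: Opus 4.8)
The plan is to mirror the forward Lemma but run the argument in the opposite direction: given the combined $\mapsto$ step on the Maude side, I first decompose it and then \emph{reconstruct} a matching concrete time-then-discrete transition. By Definition~\ref{def:two-steps}, the hypothesis $(\encBase{\marking}\,\code{:}\,\mathit{clocks}\,\code{:}\,\encBase{\PN}) \mapsto b$ factors as a \code{tick} step with some delay $\delta \in \grandrplus$ (the value taken by the otherwise unconstrained variable $\code{T}$, so that $\delta \leq \code{mte}(\encBase{\marking},\mathit{clocks},\encBase{\PN})$), producing an intermediate term $(\encBase{\marking}\,\code{:}\,\mathit{clocks}''\,\code{:}\,\encBase{\PN})$, followed by an \code{applyTransition} step firing some transition $t_f$ and yielding $b = (\encBase{\marking'}\,\code{:}\,\mathit{clocks}'\,\code{:}\,\encBase{\PN})$ with $\marking' = \marking - \relPre{t_f} + \relPost{t_f}$.

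First I would produce the concrete time step $(\marking,\interval) \fleche{\delta} (\marking,\interval'')$, where $\interval''$ is \emph{forced} by the time-transition relation of Definition~\ref{def:pnet-semantics}: $\interval''(t)=\interval(t)$ for enabled-but-inhibited transitions, and $\leftEP{\interval''(t)} = \max(0,\leftEP{\interval(t)}-\delta)$, $\rightEP{\interval''(t)} = \rightEP{\interval(t)}-\delta$ otherwise. The only side condition to discharge is $\marking \geq \relPre{t} \implies \rightEP{\interval''(t)} \geq 0$ for every $t$; this is exactly what the guard $\delta \leq \code{mte}(\ldots)$ buys, since $\code{mte}$ returns the minimum, over active transitions with finite upper bound, of $\rightEP{\parIntervalStatic(t)}$ minus the clock of $t$, which under $\approx$ equals $\rightEP{\interval(t)}$. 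I would then verify $(\marking,\interval'') \approx (\encBase{\marking}\,\code{:}\,\mathit{clocks}''\,\code{:}\,\encBase{\PN})$ by a case analysis following Definition~\ref{def:bisim-relation}: advancing an active clock by $\delta$ preserves the identity clock $= \rightEP{\parIntervalStatic(t)} - \rightEP{\interval''(t)}$ in the finite case (and the analogous $\leftEP$ identity in the infinite case), matching what \code{increaseClocks} computes, while disabled and inhibited transitions are left untouched by both sides.

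Next I would produce the concrete discrete step $(\marking,\interval'') \fleche{t_f} (\marking',\interval')$. The \code{applyTransition} guard gives that $t_f$ is \code{active} in $\marking$ and that its (ticked) clock lies in $[\leftEP{\parIntervalStatic(t_f)} : \rightEP{\parIntervalStatic(t_f)}]$; translating through $\approx$, the lower bound yields $\leftEP{\interval''(t_f)}=0$ (firability) and the upper bound yields $\rightEP{\interval''(t_f)} \geq 0$ (deadline not missed), so the discrete relation is enabled. The resulting $\marking' = \marking - \relPre{t_f} + \relPost{t_f}$ coincides with $(\code{M}-\code{PRE})+\code{POST}$, and $\interval'$ is fixed by Definition~\ref{def:pnet-semantics} (newly enabled transitions reset to $\parIntervalStatic(t)$, the rest keep $\interval''(t)$). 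I would close by checking $(\marking',\interval') \approx b$ case by case: a transition newly enabled by $t_f$ (including $t_f$ itself) has its clock reset to \code{0} by \code{updateClocks} and the rule, matching the static interval; a transition still enabled keeps clock and interval; and one that becomes disabled is sent to clock \code{0}, matching the first clause of $\approx$.

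The main obstacle, as in the forward direction, is the clause of Definition~\ref{def:bisim-relation} where $\rightEP{\parIntervalStatic(t)} = \code{inf}$ and $\leftEP{\interval(t)}=0$, in which $\approx$ permits the clock to be \emph{any} value $\geq \leftEP{\parIntervalStatic(t)}$ rather than a uniquely determined one, so $\approx$ is genuinely non-functional. Here I must show the slack is consistently preserved across both sub-steps: such a $t$ contributes no constraint to \code{mte}, the concrete step leaves $\leftEP{\interval''(t)}=0$, and the Maude clock, increased by $\delta \geq 0$, still exceeds $\leftEP{\parIntervalStatic(t)}$, so firability and the relation survive without pinning down a concrete clock value. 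A secondary care-point is that, in the finite case, $\approx$ only constrains $\rightEP{\interval}$; recovering $\leftEP{\interval''(t_f)}=0$ from the firing-interval guard relies on the structural invariant that in $\mathcal{S}_{\PN}$ both endpoints of each $\interval(t)$ have elapsed by the same amount, so that the clock value determines $\leftEP{\interval(t)} = \max(0,\leftEP{\parIntervalStatic(t)} - \text{clock})$ as well. Keeping these two endpoint identities synchronized, rather than the routine arithmetic, is where the real work lies.
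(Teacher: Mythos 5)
Your proposal follows essentially the same route as the paper's own proof: decompose the $\mapsto$ step into its \code{tick} and \code{applyTransition} components, recover the concrete delay $\delta$ from the instantiation of \code{T} bounded by \code{mte}, and then read off the discrete firing of $t_f$ from the rule's \code{active} and interval guards, checking $\approx$ at each stage. Your treatment is in fact more careful than the paper's terse argument---in particular, your observation that recovering $\leftEP{\interval''(t_f)}=0$ requires the invariant that both interval endpoints have elapsed by the same amount (something the relation $\approx$ alone does not record in the finite-upper-bound case) addresses a point the paper's proof silently glosses over.
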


\begin{proof}
Since $(\encBase{\marking}\;\code{:}\;\mathit{clocks}\;\code{:}\;\encBase{\PN}) \mapsto b$, 
we have that there exists an intermediate state $(\code{M : CLOCKS : NET}) \in T_{\Sigma , \texttt{State}}$
such that $(\encBase{\marking}\;\code{:}\;\mathit{clocks}\;\code{:}\;\encBase{\PN}) 
\stackrel{\texttt{tick}}{\longrightarrow} (\code{M : CLOCKS : NET})$ and 
$(\code{M : CLOCKS : NET}) \stackrel{\texttt{applyTransition}}{\longrightarrow} b$.

\noindent
For the first step $\left(\stackrel{\texttt{tick}}{\longrightarrow}\right)$, since 
$(\encBase{\marking}\;\code{:}\;\mathit{clocks}\;\code{:}\;\encBase{\PN}) 
\stackrel{\texttt{tick}}{\longrightarrow} (\code{M : CLOCKS : NET})$, there is a 
\code{T <= mte($\encBase{\marking}$, $clocks$, $\encBase{\PN}$)}. 
Now, as in the previous lemma, since the \code{mte} is an upper limit for $\delta$, 
we have that there exists a time transition
$\left(\marking,\interval\right) \fleche{\delta} \left(\marking,\interval''\right)$
with $\delta$ equal to the \code{T} used in the above \code{tick}-rule application so that
$(\code{M : CLOCKS : NET}) = (\encBase{\marking}\;\code{:}\;\mathit{clocks}\;\code{:}\;\encBase{\PN})$. 

\noindent
For the second step $\left(\stackrel{\texttt{applyTransition}}{\longrightarrow}\right)$, since
$(\encBase{\marking}\;\code{:}\;\mathit{clocks}\;\code{:}\;\encBase{\PN})
\stackrel{\texttt{applyTransition}}{\longrightarrow} b$, there must be a transition $\transition_f$ which 
is active and whose clock is in the interval $[\leftEP{\parIntervalStatic(\transition_f)},
\rightEP{\parIntervalStatic(\transition_f)}]$. By definition of $\mathit{clocks}$ on $(\marking,\interval'')$,
This is precisely the condition for the discrete step $\transition_f$ to $(\marking,\interval'')$.
Hence, $\left(\marking,\interval''\right) \fleche{\transition_f} \left(\marking',\interval'\right)$ and
$(\encBase{\marking}\;\code{:}\;\mathit{clocks}\;\code{:}\;\encBase{\PN})
\stackrel{\texttt{applyTransition}}{\longrightarrow} 
(\encBase{\marking'}\;\code{:}\;\mathit{clocks}\;\code{:}\;\encBase{\PN})$.
\end{proof}

\subsection{Proof of Theorems \ref{th:r0r1} and \ref{th:r0r2}}

\noindent\textbf{Theorem \ref{th:r0r1}}. 
\begin{proof}
For the ($\Leftarrow$) side, it suffices to follow in $\rtheorySem$ the
    same execution strategy as in $\rtheorySemN{1}$. 
    For ($\Rightarrow$), it suffices to perform the following
    (reachability-preserving) change in the $\rtheorySem$ trace: the
    application of two consecutive \code{tick} rules with $T=t_1$ and $T=t_2$
    are replaced by a single application of \code{tick} with $T=t_1 +
    t_2$. 
    This is enough to show that the same trace can be obtained in $\rtheorySemN{1}$. 
\end{proof}

\noindent\textbf{Theorem \ref{th:r0r2}}. 
\begin{proof}
    From \Cref{th:r0r1}, we know that the \code{tickOk}/\code{tickNotOk}
    strategy can be followed in $\rtheorySemN{0}$ to produce
    an equivalent trace. Using that trace, the result follows trivially
    by noticing that applications of 
    \code{tick} in $\rtheorySemN{0}$ with \code{T}$=\delta$
    ($t \dtransitionD{\delta}_{\rtheorySem} t'$)
    match applications of \code{tick} in 
     $\rtheorySemN{2}$ with the same 
    instance of \code{T}, thus advancing the global clock in 
    exactly $\delta$ time-units.
\end{proof}

\subsection{Proof of Theorems \ref{th:sym-correct} and \ref{th:folding}}

\noindent\textbf{Theorem \ref{th:sym-correct}}

\begin{proof}
This result is a direct consequence of 
soundness and completeness of rewriting modulo SMT \cite{rocha-rewsmtjlamp-2017}.
More precisely, from  \cite{rocha-rewsmtjlamp-2017} we know that:
if $\phi_t \parallel t \rightsquigarrow^\ast \phi_u \parallel u$
then $t' \longrightarrow^\ast u'$
 for some $t' \in \llbracket \phi_t \parallel t \rrbracket$
 and $u' \in \llbracket \phi_u \parallel u \rrbracket$; and 
if  $t' \longrightarrow^\ast u'$ with
$t' \in \llbracket \phi_t \parallel t \rrbracket$,
then there exists $\phi_u$ and $t_u$ s.t. 
 $\phi_t \parallel t \rightsquigarrow^\ast \phi_u \parallel u$
with $u' \in \llbracket \phi_u \parallel u \rrbracket$.
\end{proof}

\noindent\textbf{Theorem \ref{th:folding}}

 \begin{proof}
 Let
 $\projnow{U}{} = \phi_u'\parallel t_u'$ and 
$\projnow{V}{}= \phi_v'\parallel t_v'$,
where $\ovars{t_u'} \cap \ovars{t_v'} = \emptyset$.
Let
$X_u = \ovars{\phi_u'} \setminus \ovars{t_u'}$
and
$X_v = \ovars{\phi_v'} \setminus \ovars{t_v'}$.
By construction,
$t_u', t_v' \in T_{\Sigma \setminus \Sigma_0}(X_0)$,
$\exists(\projnow{U}{}) = (\exists X_u) \phi_u'$,
and
$\exists(\projnow{V}{}) = (\exists X_v) \phi_v'$.
It suffices to show 
$\os \projnow{U}{} \cs \subseteq \os \projnow{V}{} \cs$
iff 
$U \preceq V$.

\noindent
 ($\Rightarrow$)
Assume $\os \projnow{U}{} \cs \subseteq \os \projnow{V}{} \cs$.
Then,
$t_u'$ and $t_v'$ are $E$-unifiable (witnessed by $w \in \os \projnow{U}{} \cs \cap \os \projnow{V}{} \cs$). Since 
$t_v'$
has no duplicate variables
and
$E$ only contains structural axioms for $\Sigma \setminus \Sigma_0$,
by the matching lemma \cite[Lemma~5]{rocha-rewsmtjlamp-2017},
there exists a substitution $\sigma$
with  $t_u' = t_v' \sigma$ (equality modulo ACU).
Since 
any built-in subterm of $t_u'$ is a variable in $X_0$,
$\sigma$ is a renaming substitution
$\sigma: X_0 \to X_0$
and 
thus $\os \projnow{V}{} \cs = \os (\projnow{V}{})\sigma \cs$.

Suppose $\exists(\projnow{U}{}) \Rightarrow \exists(\projnow{V}{})\sigma$ is not valid,
\ie{} $((\exists X_u) \phi_u') \wedge  (\forall X_v) \neg \phi_v'\sigma$ is satisfiable.
Let $Y$
be the set of free variables 
in $((\exists X_u) \phi_u') \wedge  (\forall X_v) \neg \phi_v'\sigma$.
Notice that $Y = \ovars{t_u'} = \ovars{t_v'\sigma}$.
Let $\rho : Y \to T_{\Sigma_0}$
be a ground substitution that represents 
a satisfying valuation of $(\exists X_u) \phi_u' \wedge  (\forall X_v) \neg \phi_v'\sigma$.
Then, $t_u'\rho \in \os \projnow{U}{} \cs$
but $t_u'\rho = t_v' \sigma\rho \notin \os (\projnow{V}{})\sigma \cs = \os \projnow{V}{} \cs$,
which is a contradiction.

\noindent ($\Leftarrow$) 
Assume $U \preceq V$.
There exists a substitution $\sigma: X_0 \to X_0$ such that $t_u' = t_v'\sigma$
and $\exists(\projnow{U}{}) \Rightarrow \exists(\projnow{V}{})\sigma$ is valid. 
Let $Y$ be the set of free variables in $\exists(\projnow{U}{}) \Rightarrow \exists(\projnow{V}{})\sigma$.
As mentioned above, 
$\os \projnow{V}{} \cs = \os (\projnow{V}{})\sigma \cs$
and
$Y = \ovars{t_u'} = \ovars{t_v'\sigma}$.
Let $w \in \os \projnow{U}{} \cs$.
Then,
for some ground substitution $\rho_u$, $w = t_u'\rho_u$
and
$\phi_u'\rho_u$ holds.
From the assignments in $\rho_u \vert_Y$, 
we can build a
valuation  $\mathcal{V}$ making true $\exists (\projnow{U}{})$
and, by assumption,  making also true $\exists (\projnow{V}{})\sigma$. 
Hence,
 there exists a ground substitution $\rho_v$ (that agrees on the values assigned
 in $\mathcal{V}$) such that $\phi_v'\sigma \rho_v$ holds 
 and $\rho_u \vert_Y = \rho_v \vert_Y$.
Notice that
$w = t_u'\rho_u = t_u' (\rho_u \vert_Y) = t_v'\sigma (\rho_v \vert_Y) = t_v'\sigma \rho_v$.
Therefore, $w \in \os (\projnow{V}{})\sigma \cs$.
\qed 
\end{proof}

\noindent\textbf{Corollary \ref{corollary:term}}
 \begin{proof}

     Assume that $(\mathcal{C}, (M,D), \Fleche{})$ is a finite transition
     system and, to obtain a contradiction, that there are infinitely many
     $\symredif$-reachable states from $\code{init}(\PN,M,D)$. Since 
     $\left(T_{\Sigma , \texttt{State}}, \code{init}(\PN,M,D), \symredif\right)$
     is finitely branching, there must be an infinite
     sequence of the form $U_0 ~\symredif~ U_1 ~\symredif~ \cdots $ where, by
     definition of $\symredif$, $U_j \not\preceq U_i$ for $i< j$. From 
     \Cref{th:folding} we know that $\os U_j\cs \not\subseteq \os U_i\cs$. By
     \Cref{th:sym-correct}, this means that after each transition, more concrete
     different states are found. Hence, the reachable state classes cannot be
     finite, thus a contradiction. 

 \end{proof}


\begin{thebibliography}{10}
\providecommand{\url}[1]{\texttt{#1}}
\providecommand{\urlprefix}{URL }
\providecommand{\doi}[1]{https://doi.org/#1}

\bibitem{musab-l}
AlTurki, M., Dhurjati, D., Yu, D., Chander, A., Inamura, H.: Formal
  specification and analysis of timing properties in software systems. In:
  Fundamental Approaches to Software Engineering ({FASE} 2009). LNCS,
  vol.~5503, pp. 262--277. Springer (2009)

\bibitem{EAGPLP13}
Andr\'e, E., Pellegrino, G., Petrucci, L.: Precise robustness analysis of time
  {Petri} nets with inhibitor arcs. In: Formal Modeling and Analysis of Timed
  Systems ({FORMATS}'13). LNCS, vol.~8053, pp. 1--15. Springer (2013)

\bibitem{romeo-biology}
Andreychenko, A., Magnin, M., Inoue, K.: Analyzing resilience properties in
  oscillatory biological systems using parametric model checking. Biosystems
  \textbf{149},  50--58 (2016)

\bibitem{ftscs22}
Arias, J., Bae, K., Olarte, C., {\"{O}}lveczky, P.C., Petrucci, L., R{\o}mming,
  F.: Rewriting logic semantics and symbolic analysis for parametric timed
  automata. In: Proceedings of the 8th {ACM} {SIGPLAN} International Workshop
  on Formal Techniques for Safety-Critical Systems ({FTSCS} 2022). pp. 3--15.
  {ACM} (2022)

\bibitem{pitpn2maude}
Arias, J., Bae, K., Olarte, C., {\"{O}}lveczky, P.C., Petrucci, L., R{\o}mming,
  F.: pitpn2maude (2023),
  \url{https://depot.lipn.univ-paris13.fr/arias/pitpn2maude}

\bibitem{bae2013abstract}
Bae, K., Escobar, S., Meseguer, J.: Abstract logical model checking of
  infinite-state systems using narrowing. In: Rewriting Techniques and
  Applications ({RTA} 2013). LIPIcs, vol.~21, pp. 81--96. Schloss Dagstuhl -
  Leibniz-Zentrum f{\"{u}}r Informatik (2013)

\bibitem{airplane-journ}
Bae, K., Krisiloff, J., Meseguer, J., {\"{O}}lveczky, P.C.: Designing and
  verifying distributed cyber-physical systems using {M}ultirate {PALS:} an
  airplane turning control system case study. Sci. Comput. Program.
  \textbf{103},  13--50 (2015). \doi{10.1016/j.scico.2014.09.011}

\bibitem{ptolemy-journ}
Bae, K., {\"{O}}lveczky, P.C., Feng, T.H., Lee, E.A., Tripakis, S.: Verifying
  hierarchical {P}tolemy {II} discrete-event models using {Real-Time Maude}.
  Sci. Comput. Program.  \textbf{77}(12),  1235--1271 (2012)

\bibitem{bae2017guarded}
Bae, K., Rocha, C.: Guarded terms for rewriting modulo {SMT}. In: International
  Conference on Formal Aspects of Component Software (FACS 2017). pp. 78--97.
  Springer (2017)

\bibitem{bae2019symbolic}
Bae, K., Rocha, C.: Symbolic state space reduction with guarded terms for
  rewriting modulo {SMT}. Sci. Comput. Program.  \textbf{178},  20--42 (2019)

\bibitem{barbosa}
Barbosa, P.E.S., Barros, J.P., Ramalho, F., Gomes, L., Figueiredo, J.,
  Moutinho, F., Costa, A., Aranha, A.: Sys{V}eritas: {A} framework for
  verifying {IOPT} nets and execution semantics within embedded systems design.
  In: Technological Innovation for Sustainability - Second {IFIP} {WG}
  5.5/SOCOLNET Doctoral Conference on Computing, Electrical and Industrial
  Systems (DoCEIS 2011). {IFIP} Advances in Information and Communication
  Technology, vol.~349, pp. 256--265. Springer (2011)

\bibitem{capra22b}
Capra, L.: Canonization of reconfigurable {PT} nets in {M}aude. In:
  Reachability Problems ({RP} 2022). LNCS, vol. 13608, pp. 160--177. Springer
  (2022)

\bibitem{capra22}
Capra, L.: Rewriting logic and {P}etri nets: {A} natural model for
  reconfigurable distributed systems. In: Distributed Computing and Intelligent
  Technology ({ICDCIT} 2022). LNCS, vol. 13145, pp. 140--156. Springer (2022)

\bibitem{model-checking}
Clarke, E.M., Grumberg, O., Peled, D.A.: Model Checking. {MIT} Press (2001)

\bibitem{maude-manual}
Clavel, M., Dur{\'a}n, F., Eker, S., Escobar, S., Lincoln, P.,
  Mart\'{\i}-Oliet, N., Meseguer, J., Rubio, R., Talcott, C.: Maude Manual
  (Version 3.2.1). SRI International (2022), available at
  \url{http://maude.cs.illinois.edu}

\bibitem{maude-book}
Clavel, M., Dur{\'{a}}n, F., Eker, S., Lincoln, P., Mart{\'{\i}}{-}Oliet, N.,
  Meseguer, J., Talcott, C.L.: All About Maude -- {A} High-Performance Logical
  Framework, LNCS, vol.~4350. Springer (2007)

\bibitem{romeo-software}
Coullon, H., Jard, C., Lime, D.: Integrated model-checking for the design of
  safe and efficient distributed software commissioning. In: Integrated Formal
  Methods (IFM 2019). LNCS, vol. 11918, pp. 120--137. Springer, Cham (2019)

\bibitem{DBLP:conf/formats/GrabiecTJLR10}
Grabiec, B., Traonouez, L., Jard, C., Lime, D., Roux, O.H.: Diagnosis using
  unfoldings of parametric time {P}etri nets. In: Formal Modeling and Analysis
  of Timed Systems ({FORMATS} 2010). LNCS, vol.~6246, pp. 137--151. Springer
  (2010)

\bibitem{DBLP:books/daglib/0023756}
Jensen, K., Kristensen, L.M.: Coloured {P}etri Nets -- Modelling and Validation
  of Concurrent Systems. Springer (2009)

\bibitem{lee2022extension}
Lee, J., Bae, K., {\"O}lveczky, P.C.: An extension of {HybridSynchAADL} and its
  application to collaborating autonomous {UAVs}. In: Leveraging Applications
  of Formal Methods, Verification and Validation. Adaptation and Learning
  (ISoLA 2022). pp. 47--64. Springer (2022)

\bibitem{hsaadl-sttt}
Lee, J., Bae, K., {\"O}lveczky, P.C., Kim, S., Kang, M.: Modeling and formal
  analysis of virtually synchronous cyber-physical systems in {AADL}.
  International Journal on Software Tools for Technology Transfer pp. 1--38
  (2022)

\bibitem{lee2021hybrid}
Lee, J., Kim, S., Bae, K., {\"{O}}lveczky, P.C.: {HybridSynchAADL}: Modeling
  and formal analysis of virtually synchronous {CPSs} in {AADL}. In: Computer
  Aided Verification ({CAV} 2021). LNCS, vol. 12759, pp. 491--504. Springer
  (2021)

\bibitem{lee2022bounded}
Lee, J., Kim, S., Bae, K.: Bounded model checking of {PLC} {ST} programs using
  rewriting modulo {SMT}. In: Proceedings of the 8th ACM SIGPLAN International
  Workshop on Formal Techniques for Safety-Critical Systems (FTSCS 2022). pp.
  56--67. ACM (2022)

\bibitem{sefm09}
Lien, E., {\"{O}}lveczky, P.C.: Formal modeling and analysis of an {IETF}
  multicast protocol. In: Seventh {IEEE} International Conference on Software
  Engineering and Formal Methods ({SEFM} 2009). pp. 273--282. {IEEE} Computer
  Society (2009)

\bibitem{DBLP:journals/fuin/LimeRS21}
Lime, D., Roux, O.H., Seidner, C.: Cost problems for parametric time {P}etri
  nets. Fundam. Informaticae  \textbf{183}(1-2),  97--123 (2021).
  \doi{10.3233/FI-2021-2083}, \url{https://doi.org/10.3233/FI-2021-2083}

\bibitem{romeo}
Lime, D., Roux, O.H., Seidner, C., Traonouez, L.: Romeo: {A} parametric
  model-checker for {P}etri nets with stopwatches. In: Tools and Algorithms for
  the Construction and Analysis of Systems ({TACAS} 2009). LNCS, vol.~5505, pp.
  54--57. Springer (2009)

\bibitem{Merlin74}
Merlin, P.M.: A study of the recoverability of computing systems. Ph.D. thesis,
  University of California, Irvine, CA, USA (1974)

\bibitem{Mes92}
Meseguer, J.: Conditional rewriting logic as a unified model of concurrency.
  Theor. Comput. Sci.  \textbf{96}(1),  73--155 (1992)

\bibitem{20-years}
Meseguer, J.: Twenty years of rewriting logic. J. Log. Algebraic Methods
  Program.  \textbf{81}(7-8),  721--781 (2012)

\bibitem{DBLP:journals/jlap/Meseguer20}
Meseguer, J.: Generalized rewrite theories, coherence completion, and symbolic
  methods. J. Log. Algebraic Methods Program.  \textbf{110} (2020)

\bibitem{meseguer-montanari}
Meseguer, J., Montanari, U.: Petri nets are monoids. Information and
  Computation  \textbf{88}(2),  105--155 (1990)

\bibitem{nigam2022automating}
Nigam, V., Talcott, C.L.: Automating safety proofs about cyber-physical systems
  using rewriting modulo {SMT}. In: Rewriting Logic and Its Applications (WRLA
  2022). LNCS, vol. 13252, pp. 212--229. Springer (2022)

\bibitem{carolyn}
{\"{O}}lveczky, P.C.: Semantics, simulation, and formal analysis of modeling
  languages for embedded systems in {Real-Time Maude}. In: Formal Modeling:
  Actors, Open Systems, Biological Systems -- Essays Dedicated to Carolyn
  Talcott on the Occasion of Her 70th Birthday, LNCS, vol.~7000, pp. 368--402.
  Springer (2011)

\bibitem{wrla14}
{\"{O}}lveczky, P.C.: {Real-Time Maude} and its applications. In: Rewriting
  Logic and Its Applications ({WRLA} 2014). LNCS, vol.~8663, pp. 42--79.
  Springer (2014)

\bibitem{fmoods10}
{\"{O}}lveczky, P.C., Boronat, A., Meseguer, J.: Formal semantics and analysis
  of behavioral {AADL} models in {Real-Time Maude}. In: Formal Techniques for
  Distributed Systems, Joint 12th {IFIP} {WG} 6.1 International Conference,
  {FMOODS} 2010 and 30th {IFIP} {WG} 6.1 {FORTE} 2010. LNCS, vol.~6117, pp.
  47--62. Springer (2010)

\bibitem{fase06}
{\"{O}}lveczky, P.C., Caccamo, M.: Formal simulation and analysis of the {CASH}
  scheduling algorithm in {R}eal-{T}ime {M}aude. In: Fundamental Approaches to
  Software Engineering ({FASE} 2006). LNCS, vol.~3922, pp. 357--372. Springer
  (2006)

\bibitem{OlvMesTCS}
{\"{O}}lveczky, P.C., Meseguer, J.: Specification of real-time and hybrid
  systems in rewriting logic. Theor. Comput. Sci.  \textbf{285}(2),  359--405
  (2002)

\bibitem{wrla06}
{\"{O}}lveczky, P.C., Meseguer, J.: Abstraction and completeness for {Real-Time
  Maude}. In: 6th International Workshop on Rewriting Logic and its
  Applications ({WRLA} 2006). Electronic Notes in Theoretical Computer Science,
  vol.~174, pp. 5--27. Elsevier (2006)

\bibitem{rtm-journ}
{\"{O}}lveczky, P.C., Meseguer, J.: Semantics and pragmatics of {R}eal-{T}ime
  {M}aude. High. Order Symb. Comput.  \textbf{20}(1-2),  161--196 (2007)

\bibitem{tacas08}
{\"{O}}lveczky, P.C., Meseguer, J.: The {R}eal-{T}ime {M}aude tool. In: Tools
  and Algorithms for the Construction and Analysis of Systems ({TACAS} 2008).
  LNCS, vol.~4963, pp. 332--336. Springer (2008)

\bibitem{aer-journ}
{\"{O}}lveczky, P.C., Meseguer, J., Talcott, C.L.: Specification and analysis
  of the {AER/NCA} active network protocol suite in {R}eal-{T}ime {M}aude.
  Formal Methods Syst. Des.  \textbf{29}(3),  253--293 (2006)

\bibitem{wsn-tcs}
{\"{O}}lveczky, P.C., Thorvaldsen, S.: Formal modeling, performance estimation,
  and model checking of wireless sensor network algorithms in {R}eal-{T}ime
  {M}aude. Theor. Comput. Sci.  \textbf{410}(2-3),  254--280 (2009)

\bibitem{padberg}
Padberg, J., Schulz, A.: Model checking reconfigurable {P}etri nets with
  {M}aude. In: 9th International Conference on Graph Transformation ({ICGT}
  2016). LNCS, vol.~9761, pp. 54--70. Springer (2016)

\bibitem{romeo-aerial}
Parquier, B., Rioux, L., Henia, R., Soulat, R., Roux, O.H., Lime, D.,
  Andr{\'e}, {\'E}.: Applying parametric model-checking techniques for reusing
  real-time critical systems. In: Formal Techniques for Safety-Critical Systems
  (FTSCS 2016). Communications in Computer and Information Science, vol.~694,
  pp. 129--144. Springer (2017)

\bibitem{rocha-rewsmtjlamp-2017}
Rocha, C., Meseguer, J., Mu{\~{n}}oz, C.A.: Rewriting modulo {SMT} and open
  system analysis. J. Log. Algebraic Methods Program.  \textbf{86}(1),
  269--297 (2017)

\bibitem{petri-nets-in-maude}
Stehr, M., Meseguer, J., {\"{O}}lveczky, P.C.: Rewriting logic as a unifying
  framework for {P}etri nets. In: Unifying Petri Nets, Advances in Petri Nets.
  Lecture Notes in Computer Science, vol.~2128, pp. 250--303. Springer (2001)

\bibitem{paris-paper}
Traonouez, L., Lime, D., Roux, O.H.: Parametric model-checking of time {P}etri
  nets with stopwatches using the state-class graph. In: Cassez, F., Jard, C.
  (eds.) Formal Modeling and Analysis of Timed Systems ({FORMATS} 2008). LNCS,
  vol.~5215, pp. 280--294. Springer (2008)

\bibitem{DBLP:journals/jucs/TraonouezLR09}
Traonouez, L., Lime, D., Roux, O.H.: Parametric model-checking of stopwatch
  {P}etri nets. J. Univers. Comput. Sci.  \textbf{15}(17),  3273--3304 (2009)

\bibitem{DBLP:reference/crc/VernadatB07}
Vernadat, F., Berthomieu, B.: State space abstractions for time {P}etri nets.
  In: Son, S.H., Lee, I., Leung, J.Y. (eds.) Handbook of Real-Time and Embedded
  Systems. Chapman and Hall/CRC (2007)

\bibitem{Wang1998}
Wang, J.: Time {P}etri nets. In: Timed Petri Nets: Theory and Application, pp.
  63--123. Springer (1998)

\bibitem{maude-se}
Yu, G., Bae, K.: Maude-{SE}: a tight integration of {M}aude and {SMT} solvers.
  In: Preliminary proceedings of WRLA@ETAPS. pp. 220--232 (2020)

\end{thebibliography}
 \end{document}